\documentclass[11pt]{article}

\usepackage[a4paper,margin=1in]{geometry}
\usepackage[T1]{fontenc}
\usepackage[utf8]{inputenc}
\usepackage{lmodern}
\usepackage{amsmath,amssymb,amsthm,mathtools}
\usepackage{enumitem}
\usepackage{graphicx}
\usepackage{caption}
\usepackage[dvipsnames]{xcolor}
\usepackage{hyperref}
\usepackage{aliascnt}
\usepackage[nameinlink,capitalise]{cleveref}

\hypersetup{
  colorlinks=true,
  linkcolor=blue,
  citecolor=blue,
  urlcolor=blue
}

\newtheorem{theorem}{Theorem}[section]

\newaliascnt{proposition}{theorem}
\newtheorem{proposition}[proposition]{Proposition}
\aliascntresetthe{proposition}

\newaliascnt{lemma}{theorem}
\newtheorem{lemma}[lemma]{Lemma}
\aliascntresetthe{lemma}

\newaliascnt{corollary}{theorem}
\newtheorem{corollary}[corollary]{Corollary}
\aliascntresetthe{corollary}

\newaliascnt{conjecture}{theorem}
\newtheorem{conjecture}[conjecture]{Conjecture}
\aliascntresetthe{conjecture}

\theoremstyle{definition}

\newaliascnt{definition}{theorem}
\newtheorem{definition}[definition]{Definition}
\aliascntresetthe{definition}

\newaliascnt{example}{theorem}
\newtheorem{example}[example]{Example}
\aliascntresetthe{example}

\newaliascnt{problem}{theorem}
\newtheorem{problem}[problem]{Problem}
\aliascntresetthe{problem}

\newaliascnt{question}{theorem}

\aliascntresetthe{question}

\theoremstyle{remark}

\newaliascnt{remark}{theorem}
\newtheorem{remark}[remark]{Remark}
\aliascntresetthe{remark}

\newcommand{\F}{\mathbb F}
\newcommand{\Fq}{\mathbb F_q}
\newcommand{\FQ}{\mathbb F_Q}
\newcommand{\PG}{\mathrm{PG}}
\newcommand{\calC}{\mathcal C}
\newcommand{\calK}{\mathcal K}
\newcommand{\calL}{\mathcal L}
\newcommand{\calN}{\mathcal N}
\newcommand{\calS}{\mathcal S}
\newcommand{\calU}{\mathcal U}
\newcommand{\nbf}{\mathbf n}
\newcommand{\abf}{\mathbf a}
\newcommand{\ebf}{\mathbf e}
\newcommand{\ubf}{\mathbf u}
\newcommand{\vbf}{\mathbf v}

\newcommand{\xbf}{\mathbf x}
\newcommand{\cbf}{\mathbf c}
\newcommand{\gbf}{\mathbf g}
\newcommand{\zerobf}{\mathbf 0}
\newcommand{\rk}{\operatorname{rk}}
\newcommand{\wt}{\operatorname{wt}}
\newcommand{\wtH}{\operatorname{wt}_{\mathrm H}}
\newcommand{\wtSR}{\operatorname{wt}_{\mathrm{SR}}}
\newcommand{\distSR}{d_{\mathrm{SR}}}
\newcommand{\distH}{d_{\mathrm H}}
\newcommand{\Norm}{\operatorname{N}}
\def\multiset#1{\ensuremath{\left\{\kern-.3em\left\{{#1}\right\}\kern-.3em\right\}}}

\title{On the Length of Maximum Sum-Rank Distance Codes}
\author{Valentina Astore \thanks{Inria, France. \texttt{valentina.astore@inria.fr}} \thanks{LIX, École polytechnique, Institut Polytechnique de Paris, France.} \thanks{Universit\'e Paris 8, Laboratoire de G\'eom\'etrie, Analyse et Applications, LAGA, Universit\'e Sorbonne Paris Nord, CNRS, UMR 7539, France.}
\and Ferdinando Zullo\thanks{University of Campania \emph{Luigi Vanvitelli}, Caserta, Italy. \texttt{ferdinando.zullo@unicampania.it}}}

\date{}

\begin{document}

\maketitle

\begin{abstract}
The classical theory of arcs in finite projective spaces provides a powerful geometric framework for studying maximum distance separable (MDS) codes and their possible lengths. In this paper, we investigate how this perspective extends to maximum sum-rank distance (MSRD) codes, whose geometric description involves systems of linear sets rather than individual projective points.

Our main goal is to understand how the block structure of an MSRD code constrains its length.
In dimension three, we determine the possible intersection patterns of lines with the associated linear sets and obtain explicit formulas for their characters and the weight distribution of the corresponding codes.
These results lead to new length bounds and show that the block structure comes at a cost: the presence of a block of rank at least two reduces the maximum possible total length by at least $q-1$, and sufficiently long MSRD codes must be classical, that is, MDS codes over $\mathbb F_{q^m}$.
We also prove that three-dimensional one-weight MSRD codes do not exist.

In arbitrary dimension, we interpret shortening geometrically as projection and establish a general projection bound that extends classical counting arguments for arcs.
This yields restrictions depending explicitly on the block profile and, for full-length blocks, identifies parameter ranges in which linearized Reed--Solomon codes attain the maximum possible number of blocks.

Finally, we construct an exceptional three-dimensional MSRD code over $\mathbb F_9$ with three blocks of length two and propose a sum-rank analogue of the MDS conjecture.
\end{abstract}

\noindent \textbf{MSC2020 classification:} 94B05; 94B65; 51E20; 51E22.\\
\textbf{Keywords:} sum-rank metric codes; MSRD codes; MDS codes; arcs; linear sets; MDS conjecture.
\\

\tableofcontents

\section{Introduction}

The study of arcs in finite projective spaces is one of the classical problems of finite geometry. An arc is a set of points satisfying a strong independence condition, and understanding how large such sets can be has motivated research for several decades. A fundamental contribution was made by Beniamino Segre in the 1950s. In his celebrated work on ovals in finite projective planes, Segre proved that every oval of $\PG(2,q)$, for $q$ odd, is a non-degenerate conic~\cite{Segre1955Ovals}.
This result established a remarkable connection between a purely combinatorial condition, namely that no three points are collinear, and the algebraic geometry of projective curves. In even characteristic, the situation is substantially different, as conics can be extended to hyperovals. The study of arcs, their maximal sizes, and their classification has since become a central topic in finite geometry, see \cite{abdukhalikov2026some,BallBook,BallLavrauwArcs,Hirschfeld} and the references therein.

A few years later, the introduction of Reed--Solomon codes~\cite{ReedSolomon1960} provided a coding counterpart to these geometric questions.
Indeed, the columns of a generator matrix of a non-degenerate linear code over $\Fq$ define points of a projective space, and the code is \emph{maximum distance separable} (MDS) precisely when these points form an arc.
Under this correspondence, extended Reed--Solomon codes are represented by normal rational curves, while the maximum possible length of an MDS code becomes the maximum possible size of an arc.
This correspondence has proved extremely fruitful, allowing methods and results from finite geometry to be applied to coding theory, and conversely.
Perhaps the most prominent example is the \emph{MDS conjecture}, originating from Segre's investigations, which says that a non-trivial MDS code over $\Fq$ has length at most $q+1$, apart from the well-known exceptional cases associated with hyperovals.
Despite substantial progress, including the proof over prime fields by Ball~\cite{Ball2012} (see also \cite{fan2026linear} for recent results), the conjecture remains open in its full generality.

A different notion of distance emerged with the development of rank-metric codes.
Their foundations go back to Delsarte's work on bilinear forms over finite fields~\cite{Delsarte1978}, while Gabidulin~\cite{Gabidulin1985} subsequently initiated the theory of codes with maximum rank distance, introducing the family now known as Gabidulin codes and an efficient decoding algorithm.
In this setting, the distance between vectors over $\F_{q^m}$ is measured by the dimension of the $\Fq$-span of the entries of their difference, rather than by the number of non-zero coordinates. Codes attaining the corresponding Singleton-like bound are called \emph{maximum rank distance} (MRD) codes, and play a role analogous to that of MDS codes in the Hamming metric.
The importance of rank-metric codes has also been reinforced by applications to network coding, where errors naturally affect the ranks of transmitted matrices~\cite{SilvaKschischangKoetter2008}, and to code-based cryptography.

Although the theory of rank-metric codes initially developed mainly through algebraic methods, its connections with finite geometry have gradually become clearer. In particular, the relationship between MRD codes and scattered subspaces has provided new constructions and new perspectives on the classification of optimal codes, see \textit{e.g.},~\cite{NeriZullo2026,PolverinoZullo2020,Sheekey2016}.
A systematic geometric approach was subsequently developed through the notion of \emph{$q$-systems}~\cite{ALFARANO2022105658,Randrianarisoa2020}.
These are suitable $\Fq$-subspaces of vector spaces over $\F_{q^m}$, closely related to linear sets in finite projective spaces. Under this correspondence, the rank distribution and the minimum distance of a code can be expressed in terms of intersection dimensions with hyperplanes. Thus, the geometric language of projective systems, which has been so successful for Hamming-metric codes, admits a natural counterpart in the rank metric.

A further generalization is provided by the \emph{sum-rank metric}, which simultaneously extends the Hamming and rank metrics.
Here, the coordinates of a vector over $\F_{q^m}$ are partitioned into blocks, and the weight is obtained by summing the $\Fq$-ranks of the individual blocks. Taking all blocks of length one recovers the Hamming metric, whereas a single block gives the rank metric.
The sum-rank metric is particularly relevant in situations where information is transmitted or stored in several groups.
For instance, it arises naturally in multishot network coding, where several transmissions may be affected by rank errors and erasures, see \emph{e.g.}~\cite{MartinezPenasKschischang2019}.
Sum-rank codes also play an important role in distributed storage, particularly in the construction of maximally recoverable locally repairable codes and partial MDS codes~\cite{MartinezPenas2022PMDS}.
These applications have stimulated the development of a general theory of codes endowed with the sum-rank metric, see \cite{byrne2021fundamental,martinez2022codes} for an account of their fundamental properties.

As in the Hamming and rank metrics, a
Singleton-like bound holds in the sum-rank metric.
Codes attaining it are called
\emph{maximum sum-rank distance} (MSRD) codes.
A major step in their construction was made by
Mart\'inez-Pe\~nas, who introduced
\emph{linearized Reed--Solomon codes}
~\cite{MARTINEZPENAS2018587}.
These codes simultaneously generalise
Reed--Solomon and Gabidulin codes and provide
a large family of MSRD codes.
Further constructions, including families based
on extended Moore matrices, have subsequently
been obtained, see \textit{e.g.}, 
\cite{lao2024new,MartinezPenas2022PMDS,martinez2023doubly,MartinezPenas2024,neri2023thegeometry,santonastaso2025onmsrdcodes,santonastaso2023subspace}.

The geometric viewpoint has also been extended
to the sum-rank setting.
In~\cite{neri2023thegeometry}, it was shown
that non-degenerate $\F_{q^m}$-linear sum-rank
metric codes can be described by tuples
of $\Fq$-subspaces, or by
systems of $\Fq$-linear sets in a projective
space. More precisely, a code of dimension $k$
with block profile
$\nbf=(n_1,\ldots,n_t)$
corresponds to a tuple
$\calU=(U_1,\ldots,U_t)$,
where $U_i$ is an $n_i$-dimensional
$\Fq$-subspace of $\F_{q^m}^k$.
The MSRD property is then equivalent to the
condition
\[
\sum_{i=1}^t\dim_{\Fq}(U_i\cap H)\le k-1
\]
for every $\F_{q^m}$-hyperplane $H$.
This provides a natural sum-rank analog
of the geometric characterization of MDS codes
by arcs. In this framework, a geometric description of linearized Reed--Solomon codes based on a $q$-analogue of the normal rational curve has recently been developed in~\cite{astore2026qanalogue}.

The existence of this correspondence raises
a fundamental question:
\emph{to what extent can the classical theory
of arcs and MDS codes be extended to MSRD codes?}
In particular, one would like to understand
which restrictions the MSRD condition imposes
on the possible parameters and, more specifically,
on the length of a code.
Unlike in the Hamming metric, however, the total
length is not the only relevant parameter:
the distribution of the coordinates among
the different blocks plays an essential role.

The main objective of this paper is therefore
to investigate the \emph{length problem} for
MSRD codes from a geometric perspective.
Throughout the paper, we put $Q=q^m$ and
consider an MSRD system of dimension $k$
with profile $\nbf=(n_1,\ldots,n_t)$
and total length $N=n_1+\cdots+n_t$.
A central quantity in our analysis is
\[
M(\nbf)=\sum_{i=1}^t\frac{q^{n_i}-1}{q-1},
\]
which counts the points of the associated
projective linear sets.
While the classical theory of arcs provides
bounds on $N$, the quantity $M(\nbf)$ allows
us to obtain finer restrictions reflecting
the block structure of an MSRD code.

Our first results concern three-dimensional
MSRD codes. In this setting, we determine
the possible intersection sizes of lines
with the associated projective MSRD system
and obtain explicit formulas for its
characters and the weight distribution
of the corresponding code
(\cref{thm:characters,cor:weight-distribution}).
By studying the lines through a point,
we derive restrictions on the block profile
and prove that the presence of a block
of rank at least two reduces the maximum
possible total length by at least $q-1$
(\cref{thm:length-dim3}).
These results yield structural constraints for sufficiently long MSRD codes
(\cref{cor:extremal-dim3,thm:many-blocks}).
As a consequence, MSRD codes that are sufficiently long, or that have sufficiently many blocks, must be classical, that is, MDS codes over $\FQ$ (\cref{cor:extremal-dim3,thm:many-blocks}).
We also establish the non-existence
of three-dimensional one-weight MSRD codes
(\cref{thm:no-one-weight}).

In arbitrary dimension, we investigate
the behaviour of MSRD systems under
shortening and show that this operation
admits a geometric interpretation as
projection. This leads to a general
projection bound (\cref{thm:projection}),
which extends the classical counting
of tangent hyperplanes to an arc.
We use this bound to obtain restrictions
depending explicitly on the block profile
(\cref{cor:profile,cor:length-general}).
For equal block lengths, our approach
recovers a bound of Byrne,
Gluesing-Luerssen and Ravagnani
~\cite{byrne2021fundamental}.
For full-length blocks, we determine
parameter ranges in which the bound
$t\le q-1$ holds and is attained
by linearized Reed--Solomon codes
(\cref{cor:full-blocks}).
We also provide an explicit exceptional
construction over $\F_9$
(\cref{thm:sporadic}) and formulate
a sum-rank analogue of the MDS conjecture
(\cref{conj:msrd}).

\smallskip

The paper is organized as follows.
\Cref{sec:preliminaries} recalls the
necessary background on arcs, MDS codes,
sum-rank metric codes, and their geometric
description.
In \cref{sec:3}, we show the first bound on the parameters of an MSRD code.
In \cref{sec:dim3}, we investigate
three-dimensional MSRD systems,
their intersection distributions,
and the resulting length bounds.
\Cref{sec:one-weight} is devoted
to one-weight MSRD codes in dimension three.
In \cref{sec:general}, we develop
the projection method, derive bounds
in arbitrary dimension, study the case
of equal and full-length blocks,
and propose an MSRD analogue of the
MDS conjecture.
Finally, \cref{sec:conclusions}
summarises our results and discusses
directions for further research.

\section{Preliminaries}\label{sec:preliminaries}

Throughout the paper, $q$ is a prime power, $\Fq$ is the finite field of order $q$, $k$ and $m\geq2$ are integers, $Q:=q^m$, and $\FQ$ is the extension of $\Fq$ of degree $m$. For integers $0\le l\le k$, the \emph{Gaussian binomial coefficient}
\[
\binom{k}{l}_q:=\prod_{i=0}^{l-1}\frac{q^k-q^i}{q^l-q^i}
\]
is the number of $l$-dimensional subspaces of $\Fq^k$. For a non-negative integer $n$ we put
\[
\theta_n:=\binom{n+1}{1}_q=\frac{q^{n+1}-1}{q-1}=1+q+\cdots+q^n,\qquad \theta_{-1}:=0,
\]
so that $\theta_n$ is the number of one-dimensional subspaces of $\Fq^{n+1}$. Note that $\theta_r$ always refers to the subfield $\Fq$, also when we work in projective spaces over $\FQ$. We will often use the identity
\begin{equation}\label{eq:theta-difference}
\theta_{n-1}-\theta_{n-a-1}=q^{n-a}+q^{n-a+1}+\cdots+q^{n-1}=q^{n-a}\,\theta_{a-1},\qquad 0\le a\le n,
\end{equation}
and the inequality
\begin{equation}\label{eq:theta-lower}
\theta_{n-1}\geq n+(q-1)\max\{n-1,0\},
\end{equation}
which holds with equality if and only if $n\le2$. Indeed, \eqref{eq:theta-lower} is an equality for $n\in\{0,1\}$, and for $n\geq2$ we have $\theta_{n-1}=1+\sum_{j=1}^{n-1}q^j\geq1+(n-1)q$, with equality if and only if $n=2$.

Vectors are denoted by bold letters. For a field $\F$ and a vector $\xbf\in\F^k$ we put $\xbf^\perp:=\{\vbf\in\F^k:\xbf\cdot\vbf=0\}$, where $\xbf\cdot\vbf$ is the standard dot product. For integer vectors $\abf=(a_1,\ldots,a_t)$ and $\nbf=(n_1,\ldots,n_t)$, we write $\abf\le\nbf$ if $a_i\le n_i$ for every $i$, we put $|\abf|:=a_1+\cdots+a_t$, and we denote by $\ebf_i$ the $i$-th standard unit vector.

\subsection{Projective spaces and arcs}

The \emph{projective space} $\PG(k-1,q)$ is the set of one-dimensional subspaces of $\Fq^k$. For a vector $\vbf\in\Fq^k\setminus\{\zerobf\}$ we denote by $\langle\vbf\rangle_{\Fq}$ the corresponding point in $\PG(k-1,q)$. A \emph{projective subspace} of dimension $l-1$ of $\PG(k-1,q)$ is the set of points contained in $\PG(W)$ for an $l$-dimensional subspace $W$ of $\Fq^k$. Projective subspaces of dimension $0$, $1$ and $k-2$ are called \emph{points}, \emph{lines} and \emph{hyperplanes}, respectively. The projective space $\PG(k-1,q)$ contains exactly $\binom{k}{l}_q$ projective subspaces of dimension $l-1$. In particular, $\PG(2,q)$ has $q^2+q+1$ points and $q^2+q+1$ lines, every line contains $q+1$ points, and every point lies on $q+1$ lines. The points $\langle\vbf_1\rangle_{\Fq},\ldots,\langle\vbf_s\rangle_{\Fq}$ are \emph{projectively independent} if $\vbf_1,\ldots,\vbf_s$ are linearly independent.

\begin{definition}
Let $\calS$ be a set of points of $\PG(k-1,q)$. A line $\ell$ of $\PG(k-1,q)$ is \emph{external} to $\calS$ if $|\ell\cap\calS|=0$, \emph{tangent} to $\calS$ if $|\ell\cap\calS|=1$, and \emph{$j$-secant} to $\calS$ if $|\ell\cap\calS|=j\geq2$.
\end{definition}

\begin{definition}\label{def:arc}
Let $k\geq2$. An \emph{$N$-arc} of $\PG(k-1,q)$ is a set $\calK$ of $N$ points of $\PG(k-1,q)$ no $k$ of which lie on a hyperplane. Equivalently, any $k$ points of $\calK$ are projectively independent. An arc is \emph{complete} if it is not properly contained in a larger arc of the same space. A $(q+1)$-arc of $\PG(2,q)$ is called an \emph{oval}, and a $(q+2)$-arc of $\PG(2,q)$ a \emph{hyperoval}.
\end{definition}

\begin{example}\label{ex:nrc}
Let $2\le k\le q+1$. The \emph{normal rational curve}
\[
\calN_k:=\left\{\left\langle(1,\alpha,\alpha^2,\ldots,\alpha^{k-1})\right\rangle_{\Fq}:\alpha\in\Fq\right\}\cup\{\left\langle(0,\ldots,0,1)\right\rangle_{\Fq}\}
\]
is a $(q+1)$-arc of $\PG(k-1,q)$. Indeed, $k$ points of $\calN_k$ corresponding to distinct $\alpha_1,\ldots,\alpha_k\in\Fq$ are independent because the corresponding Vandermonde determinant $\prod_{i<j}(\alpha_j-\alpha_i)$ is non-zero. If one of the $k$ points is $\langle(0,\ldots,0,1)\rangle_{\Fq}$, the same argument applies to the Vandermonde matrix of size $k-1$ obtained by deleting the last row and the last column. In particular, for $k=3$, $\calN_3$ is the conic of equation $X_1^2=X_0X_2$.
\end{example}

The first bounds on the size of arcs can be obtained by elementary counting arguments involving hyperplanes. A key observation is that the hyperplanes containing a fixed set of $k-2$ points of an arc form a pencil of size $q+1$, with each remaining point of the arc lying in exactly one hyperplane of the pencil.
This gives the following classical counting formula (see \textit{e.g.}, \cite{Hirschfeld,BallLavrauwArcs}). We will encounter its sum-rank counterpart when studying projections of MSRD systems.

\begin{lemma}\label{lem:tangent-hyperplanes}
Let $\calK$ be an $N$-arc of $\PG(k-1,q)$ and let $A\subseteq\calK$ with $|A|=k-2$. Then there are exactly
\[
q+k-1-N
\]
hyperplanes $H$ of $\PG(k-1,q)$ such that $H\cap\calK=A$.
\end{lemma}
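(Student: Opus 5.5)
The plan is to count via the pencil of hyperplanes through $A$. Because $\calK$ is an arc with $N\ge k-1$ (implicitly, since $A$ exists and we need $k-2\le N$; I would note that if $N\le k-2$ the formula must be read with the convention below), the $k-2$ points of $A$ are projectively independent. Hence they span a $(k-3)$-dimensional projective subspace $\Pi=\langle A\rangle$. The hyperplanes of $\PG(k-1,q)$ containing $\Pi$ correspond to the hyperplanes of the quotient $\Fq^k/W\cong\Fq^2$, where $W$ is the $(k-2)$-dimensional subspace underlying $\Pi$. They therefore form a pencil of exactly $q+1$ hyperplanes, any two of which meet precisely in $\Pi$.

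Next I would show that $\Pi\cap\calK=A$ and that each point $P\in\calK\setminus A$ lies in exactly one hyperplane of the pencil. If some $P\in\calK\setminus A$ lay in $\Pi$, then $A\cup\{P\}$ would consist of $k-1$ points of $\calK$ spanning only a $(k-3)$-space, contradicting independence of any $k$ (hence any $k-1$) points of an arc. So for $P\notin\Pi$ the span $\langle\Pi,P\rangle$ is a hyperplane, and it is the unique member of the pencil containing $P$. Moreover, two distinct points $P,P'\in\calK\setminus A$ cannot lie in the same pencil hyperplane. Otherwise $A\cup\{P,P'\}$ would be $k$ points of $\calK$ on a hyperplane, contradicting \cref{def:arc}.

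The count then follows. The $N-(k-2)$ points of $\calK\setminus A$ occupy $N-k+2$ distinct hyperplanes of the pencil, one each. Every other hyperplane $H$ of the pencil satisfies $H\cap\calK=\Pi\cap\calK=A$. Conversely, any hyperplane $H$ with $H\cap\calK=A$ contains $A$, hence $\Pi$, so it lies in the pencil. The desired number is therefore $(q+1)-(N-k+2)=q+k-1-N$.

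No step is a real obstacle. The only point needing care is the injectivity argument, which shows that no pencil hyperplane contains two points of $\calK\setminus A$; this is exactly where the full arc condition on $k$ points is used. I would also remark that the count implicitly gives $N\le q+k-1$.
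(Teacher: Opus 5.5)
Your proposal is correct and follows the same argument as the paper's proof. Both count the $q+1$ hyperplanes of the pencil through $\langle A\rangle$, note that each point of $\calK\setminus A$ lies in exactly one of them, and use the arc condition to show that distinct such points give distinct hyperplanes, which yields $(q+1)-(N-k+2)$. You also spell out two facts the paper leaves implicit: $\langle A\rangle\cap\calK=A$, and every hyperplane with $H\cap\calK=A$ lies in the pencil. One minor slip: the ``convention below'' you promise for $N\le k-2$ never appears. It is not needed, since then $N=k-2$, $A=\calK$, and the formula gives $q+1$ directly.
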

\begin{proof}
The hyperplanes containing $\langle A\rangle$ form a pencil of size $q+1$.
Each point $P\in\calK\setminus A$ determines a unique hyperplane containing $A\cup\{P\}$.
Moreover, distinct points of $\calK\setminus A$ determine distinct hyperplanes, since no hyperplane contains $k$ points of $\calK$.
Consequently, exactly
\[
(q+1)-(N-k+2)=q+k-1-N
\]
hyperplanes meet $\calK$ precisely in $A$.
\end{proof}

Since the number of the hyperplanes considered in the latter lemma is non-negative, we immediately obtain the following elementary upper bound on the size of an arc.
\begin{corollary}\label{cor:arc-trivial-bound}
If $\calK$ is an $N$-arc of $\PG(k-1,q)$, then $N\le q+k-1$.
\end{corollary}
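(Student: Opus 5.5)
The bound follows directly from \cref{lem:tangent-hyperplanes}, so the plan is to choose a suitable subset $A$ and use non-negativity. First I handle the degenerate cases. If $N\le k-2$, the inequality $N\le q+k-1$ holds trivially because $q\ge 1$. For $k=2$ a hyperplane is a point, so an arc is any set of points of $\PG(1,q)$; hence $N\le q+1=q+k-1$, again directly.

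Now assume $N\ge k-2$ and $k\geq 2$. Fix any subset $A\subseteq\calK$ with $|A|=k-2$; this includes $A=\emptyset$ when $k=2$. By \cref{lem:tangent-hyperplanes}, the number of hyperplanes $H$ with $H\cap\calK=A$ is exactly $q+k-1-N$. A count of hyperplanes is a non-negative integer, so $q+k-1-N\ge 0$, which is the claimed bound $N\le q+k-1$.

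The one point needing care is that the lemma's pencil argument really uses that $A$ spans a $(k-3)$-dimensional projective subspace. This holds because any $k-2\le k$ points of an arc are projectively independent. (If $N<k$ the arc condition is vacuous, but the trivial case above already covers $N\le k-2$, and for $N\ge k-2$ independence of $A$ still follows from the definition whenever $N\ge k$.) In the borderline cases $N\in\{k-1\}$, the bound $N\le q+k-1$ is immediate anyway. So no step is a genuine obstacle; the corollary is just the lemma combined with non-negativity.
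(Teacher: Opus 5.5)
Your proof is correct and is the paper's argument: apply \cref{lem:tangent-hyperplanes} to any $(k-2)$-subset $A\subseteq\calK$ and use that the resulting count $q+k-1-N$ is non-negative. Your separate treatment of $N\le k-1$ is extra care the paper leaves implicit, and you handle it correctly: there the arc condition is vacuous and $A$ need not be independent, but the bound is trivial anyway.
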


A stronger restriction can be derived if the dimension of the ambient projective space is sufficiently large with respect to the field size.
The following classical result is due to Bush~\cite{Bush1952}, who studied the corresponding problem in the language of orthogonal arrays.
\begin{theorem}[\cite{Bush1952}]\label{thm:bush}
Let $k\geq q$. If $\calK$ is an $N$-arc of $\PG(k-1,q)$, then $N\le k+1$.
\end{theorem}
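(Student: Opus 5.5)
The plan is to argue by contradiction, combining a coordinate normalisation of $k+1$ points of the arc with the counting of \cref{lem:tangent-hyperplanes}. Suppose $\calK$ is an $N$-arc of $\PG(k-1,q)$ with $k\ge q$ and $N\ge k+2$. Since any subset of an arc is again an arc, it suffices to rule out a $(k+2)$-arc. Any $k$ of its points form a frame. Hence, after a change of coordinates, we may assume the arc contains the $k$ standard basis points $\langle\ebf_1\rangle,\ldots,\langle\ebf_k\rangle$ and the unit point $\langle(1,\ldots,1)\rangle$. Call the remaining point $P=\langle(x_1,\ldots,x_k)\rangle$.

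The first step is to translate the arc condition into constraints on the $x_i$. A hyperplane through $k-1$ standard points together with $P$ forces some $x_i=0$, so every $x_i$ is non-zero. A hyperplane through $k-2$ standard points, say all except $\ebf_i,\ebf_j$, together with the unit point and $P$, exists exactly when the $2\times2$ minor $x_i-x_j$ vanishes. Hence $x_i\neq x_j$ for all $i\neq j$.

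The second step is the pigeonhole count. The $x_i$ are $k$ pairwise distinct elements of $\Fq^*$, which forces $k\le q-1$. This contradicts $k\ge q$, so $N\le k+1$.

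There is essentially no obstacle here; the only care needed is checking the two determinant conditions. As a sanity check against \cref{lem:tangent-hyperplanes}, for $k=q$ that lemma only gives $N\le 2q-1$, so the normalisation argument above is genuinely needed.
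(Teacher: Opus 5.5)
Your proof is correct. The paper itself gives no proof: it quotes the result from Bush, whose argument is phrased for orthogonal arrays of index unity and therefore also covers possibly non-linear MDS codes.

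Your route is the standard projective one. You normalise $k+1$ points of a putative $(k+2)$-arc to the standard frame. The $k$-subsets omitting one standard point, respectively two standard points, then force the coordinates of the last point to be non-zero and pairwise distinct elements of $\Fq$, which gives $k\le q-1$. This uses linearity in an essential way, through coordinates and determinants. That is all the paper needs, since the theorem is only invoked for arcs of $\PG(k-1,Q)$, in \cref{cor:classical-transfer}(iii) and in the discussion preceding \cref{conj:msrd}.

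As a small bonus, checking the remaining $k$-subsets shows that your two conditions are also sufficient. So a $(k+2)$-arc of $\PG(k-1,q)$ exists if and only if $k\le q-1$.

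One wording slip: a frame of $\PG(k-1,q)$ consists of $k+1$ points, so you should say that any $k+1$ points of the arc form a frame, which is what you then actually use.
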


The case of the projective plane is particularly important and admits a more precise description.
Here, the elementary counting argument derived above already gives $N\le q+2$.
Arcs attaining this bound are called hyperovals and can exist only when $q$ is even.
In odd characteristic, a celebrated theorem of Segre~\cite{Segre1955Ovals} shows that every oval, that is, every $(q+1)$-arc, is a non-degenerate conic.
We collect these classical results in the following theorem (see \cite{BallBook,BallLavrauwArcs,Hirschfeld} for futher details).
\begin{theorem}\label{thm:plane-arcs}
Let $\calK$ be an $N$-arc of $\PG(2,q)$. Then the following hold.
\begin{enumerate}[label=\textup{(\roman*)}]
\item $N\le q+2$, and through every point of $\calK$ there are exactly $q+2-N$ tangent lines to $\calK$.
\item If $N=q+2$, then $q$ is even. Conversely, for $q$ even the conic $\calN_3$ together with the point $\langle(0,1,0)\rangle_{\Fq}$ is a hyperoval.
\item \textup{(Segre)} If $q$ is odd and $N=q+1$, then $\calK$ is a conic, that is, it is projectively equivalent to $\calN_3$.
\end{enumerate}
\end{theorem}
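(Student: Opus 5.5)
Part (i) follows from \cref{lem:tangent-hyperplanes} with $k=3$. Hyperplanes of $\PG(2,q)$ are lines, and a subset $A\subseteq\calK$ with $|A|=1$ is a single point $P$. The lemma then says that exactly $q+2-N$ lines meet $\calK$ only in $P$, i.e.\ are tangent at $P$. Since this number is non-negative, $N\le q+2$, which is also \cref{cor:arc-trivial-bound} for $k=3$. For the direct argument: the $q+1$ lines through $P$ each contain at most one further point of $\calK$, and the $N-1$ other points of $\calK$ lie on distinct such lines.

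For the first half of (ii), assume $N=q+2$. By (i) every line meets $\calK$ in $0$ or $2$ points. Fix a point $R\notin\calK$. The $q+1$ lines through $R$ partition $\calK$, and each meets it in $0$ or $2$ points, so $q+2$ is even. Hence $q$ is even.

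For the converse in (ii), take $q$ even. By \cref{ex:nrc}, $\calN_3$ is a $(q+1)$-arc, so it suffices to check that no line through the nucleus $\langle(0,1,0)\rangle_{\Fq}$ contains two points of $\calN_3$.
\begin{itemize}
\item Two affine points $(1,\alpha,\alpha^2)$ and $(1,\beta,\beta^2)$ with $\alpha\ne\beta$ are collinear with the nucleus iff
\[
\det\begin{pmatrix}1&\alpha&\alpha^2\\1&\beta&\beta^2\\0&1&0\end{pmatrix}=-(\beta^2-\alpha^2)=-(\beta-\alpha)^2
\]
vanishes, using characteristic $2$ for the last equality. This is non-zero.
\item For $(1,\alpha,\alpha^2)$ and $(0,0,1)$ the determinant is $\pm1\neq0$.
\end{itemize}
So $\calN_3\cup\{\langle(0,1,0)\rangle_{\Fq}\}$ is a $(q+2)$-arc.

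Part (iii) is Segre's theorem, which I would cite from~\cite{Segre1955Ovals} rather than reprove. If a sketch were wanted, the outline is:
\begin{enumerate}
\item Through each point of an oval there is exactly one tangent, by (i) with $N=q+1$.
\item Choose coordinates so that three points of $\calK$ form the triangle of reference. Apply Segre's lemma of tangents: the product of the elements of $\Fq^*$ is $-1$, which yields a multiplicative relation among the tangent coordinates at these three points.
\item Use that relation to show that every $5$-subset of $\calK$ lies on a common conic, hence the conic through any $5$ points contains all of $\calK$.
\end{enumerate}
This last step is the genuine obstacle. The other parts are direct counting.
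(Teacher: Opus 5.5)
Your proposal is correct and consistent with the paper, which gives no proof of this theorem: it only remarks that the counting of \cref{lem:tangent-hyperplanes} (your argument for (i)) yields $N\le q+2$, and cites the classical literature for everything else, including Segre's theorem for (iii). Your parity argument and determinant check for (ii) therefore just supply the standard details the paper leaves to its references.

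There is one slip in your optional sketch of (iii). Any five points of an arc already lie on a unique conic, so ``every $5$-subset of $\calK$ lies on a common conic'' is vacuous and does not give the conclusion. What must be shown is that the conic through five fixed points of $\calK$ contains every further point of $\calK$, equivalently that every $6$-subset of $\calK$ lies on a conic. That step is where the lemma of tangents and the hypothesis that $q$ is odd are actually used.
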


\subsection{Linear codes and MDS codes}

A (linear) \emph{code} of length $N$ and dimension $k$ over $\Fq$ is a $k$-dimensional subspace $\calC\le\Fq^N$. Endowed with the Hamming metric, its \emph{minimum distance} is $\distH(\calC):=\min\{\wtH(\cbf):\cbf\in\calC\setminus\{\zerobf\}\}$, where $\wtH(\cbf)$ is the number of non-zero entries of $\cbf$, and we say that $\calC$ is an $[N,k,\distH(\calC)]_q$ code.

\begin{theorem}[Singleton bound]\label{thm:h-singleton}
If $\calC\le\Fq^N$ is a code of dimension $k$, then $\distH(\calC)\le N-k+1$.
\end{theorem}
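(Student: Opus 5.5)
The plan is the standard dimension argument: intersect $\calC$ with a coordinate subspace of dimension $N-d+1$ and show the intersection is non-zero, which produces a codeword of weight at most $N-d+1$ with $d=k$. Concretely, put $d:=k$ and suppose for contradiction that $\distH(\calC)\geq N-k+2$. Equivalently, and more cleanly, I will show directly that $\calC$ contains a non-zero codeword of weight at most $N-k+1$.

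\emph{Step 1: restrict to a coordinate subspace.} Let
\[
V:=\{\xbf\in\Fq^N : x_1=\cdots=x_{k-1}=0\}.
\]
This subspace has dimension $N-k+1$, and every vector of $V$ has Hamming weight at most $N-k+1$.

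\emph{Step 2: count dimensions.} Since $\calC$ and $V$ both lie in $\Fq^N$, the Grassmann formula gives
\[
\dim(\calC\cap V)\geq \dim\calC+\dim V-N = k+(N-k+1)-N=1.
\]
Hence there is a non-zero $\cbf\in\calC\cap V$, and therefore
\[
\distH(\calC)\le\wtH(\cbf)\le N-k+1.
\]

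An equivalent route is to look at the projection $\pi:\calC\to\Fq^{k-1}$ onto the first $k-1$ coordinates. Its kernel is exactly $\calC\cap V$. Since $\dim\calC=k>k-1$, the map $\pi$ is not injective, which again yields the same non-zero codeword.

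I do not expect a genuine obstacle. The only care needed is the degenerate case $k=0$, which is excluded by the convention that $\distH$ is a minimum over non-zero codewords, and the observation $k\le N$, which makes the subspace $V$ well defined.
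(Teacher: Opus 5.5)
Your proof is correct and complete. Since $k\le N$, the coordinate subspace $V$ has dimension $N-k+1$, the Grassmann bound forces $\calC\cap V\neq\{\zerobf\}$, and every non-zero vector of $V$ has weight at most $N-k+1$.

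The paper does not prove this statement; it quotes it as the classical Singleton bound, so there is no argument of its own to compare yours with. The closest ingredient in the paper is the formula \eqref{eq:hamming-weight}, and your projection variant is exactly its geometric reading. Indeed, $\pi(\xbf G)=\zerobf$ means $\xbf\cdot\gbf_i=0$ for $i\le k-1$. A non-zero such $\xbf$ exists because $\gbf_1,\ldots,\gbf_{k-1}$ lie in some hyperplane $\xbf^\perp$. Phrased this way, the same idea also gives \cref{thm:sr-singleton} for non-degenerate codes via \eqref{eq:sr-weight}. Choose $\Fq$-subspaces $A_i\le U_i$ of total dimension $k-1$ and a hyperplane containing their $\FQ$-span.

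One cosmetic point: delete your opening ``put $d:=k$ and suppose for contradiction''. It reuses $d$ for something that is not the minimum distance, and the contradiction is never used.
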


Codes attaining the Singleton bound are called \emph{maximum distance separable}, or \emph{MDS}. The dual of an MDS code of dimension $k$ is an MDS code of dimension $N-k$.

A \emph{generator matrix} of $\calC$ is a matrix $G\in\Fq^{k\times N}$ whose rows form a basis of $\calC$, and $\calC$ is \emph{non-degenerate} if no column of $G$ is zero. If $\gbf_1,\ldots,\gbf_N$ are the columns of $G$, the multiset $\multiset{\langle\gbf_1\rangle,\ldots,\langle\gbf_N\rangle}$ of points of $\PG(k-1,q)$ is the \emph{projective system} associated with $\calC$, and it is determined by $\calC$ up to projectivities. Every codeword of $\calC$ is of the form $\xbf G$ with $\xbf\in\Fq^k$, and its $i$-th entry is $\xbf\cdot\gbf_i$, so that
\begin{equation}\label{eq:hamming-weight}
\wtH(\xbf G)=N-\bigl|\{i:\gbf_i\in\xbf^\perp\}\bigr|.
\end{equation}
As $\xbf$ runs over $\Fq^k\setminus\{\zerobf\}$, the subspace $\xbf^\perp$ runs over all hyperplanes of $\Fq^k$. Moreover, each hyperplane arises from exactly $q-1$ choices of $\xbf$. It then follows that $\distH(\calC)$ is equal to $N$ minus the maximum number of columns of $G$ contained in a hyperplane, which yields the following classical characterization.

\begin{theorem}[{\cite[Theorem 2.51 and Corollary 2.52]{Hirschfeld}}] \label{thm:mds-arcs}
Let $\calC$ be a non-degenerate $[N,k]_q$ code with $k\geq2$ and generator matrix $G$. Then $\calC$ is MDS if and only if every $k$ columns of $G$ are linearly independent. Equivalently, $\calC$ is MDS if and only if its projective system is a set of $N$ points forming an $N$-arc of $\PG(k-1,q)$.
\end{theorem}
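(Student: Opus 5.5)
The plan is to use the weight formula \eqref{eq:hamming-weight} to turn the Singleton condition into a condition on hyperplanes, and then reduce that to the independence condition on columns. Since $\calC$ is non-degenerate and $k\geq2$, every column $\gbf_i$ is non-zero. As each hyperplane of $\Fq^k$ is $\xbf^\perp$ for some non-zero $\xbf$, \eqref{eq:hamming-weight} gives
\[
\distH(\calC)=N-\max_{H}\bigl|\{i:\gbf_i\in H\}\bigr|,
\]
where $H$ runs over the hyperplanes of $\Fq^k$. Hence $\calC$ is MDS, that is $\distH(\calC)=N-k+1$, if and only if no hyperplane contains $k$ or more of the columns (counted with multiplicity). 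Here \cref{thm:h-singleton} ensures that this maximum is always at least $k-1$.

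Next I would show that this hyperplane condition is equivalent to every $k$ columns being linearly independent. Suppose some $k$ columns are dependent. Then they span a subspace of dimension at most $k-1$, which lies in some hyperplane, so that hyperplane contains $k$ columns. Conversely, suppose a hyperplane $H$ contains $k$ columns $\gbf_{i_1},\ldots,\gbf_{i_k}$ with distinct indices. They lie in a $(k-1)$-dimensional space, so they are dependent. This gives the first equivalence.

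For the geometric reformulation, take $k\geq2$ and assume every $k$ columns are independent. Then any two columns are independent, so no two columns are proportional. The multiset $\multiset{\langle\gbf_1\rangle,\ldots,\langle\gbf_N\rangle}$ is therefore a set of $N$ distinct points, and any $k$ of them are projectively independent. By \cref{def:arc}, this is exactly an $N$-arc of $\PG(k-1,q)$. Conversely, if the projective system is a set of $N$ points forming an arc, then any $k$ of those points are projectively independent. This means the corresponding $k$ columns are linearly independent, because scaling a column by a non-zero scalar does not change independence.

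The only delicate point is the multiplicity issue: repeated or proportional columns must be excluded. This is handled by $k\geq2$, which forces pairwise independence. The argument is otherwise routine.
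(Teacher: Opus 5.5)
Your proof is correct and takes essentially the route the paper indicates. The paragraph preceding the statement uses \eqref{eq:hamming-weight} and the fact that the $\xbf^\perp$ exhaust the hyperplanes to obtain $\distH(\calC)=N-\max_H|\{i:\gbf_i\in H\}|$, and from this the independence and arc characterizations follow exactly as you argue, including your use of $k\geq2$ to rule out repeated points (the paper defers the details to Hirschfeld).
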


\begin{example}\label{ex:rs}
Let $\Fq=\{\alpha_1,\ldots,\alpha_q\}$ and $2\le k\le q+1$. The code generated by
\[
G_{\mathrm{RS}}:=
\begin{pmatrix}
1&1&\cdots&1&0\\
\alpha_1&\alpha_2&\cdots&\alpha_q&0\\
\vdots&\vdots&&\vdots&\vdots\\
\alpha_1^{k-2}&\alpha_2^{k-2}&\cdots&\alpha_q^{k-2}&0\\
\alpha_1^{k-1}&\alpha_2^{k-1}&\cdots&\alpha_q^{k-1}&1
\end{pmatrix}
\]
is an \emph{extended Reed--Solomon code} of dimension $k$. Its projective system is the normal rational curve $\calN_k$, hence by \cref{ex:nrc,thm:mds-arcs} it is a $[q+1,k,q-k+2]_q$ MDS code.
\end{example}

\begin{corollary} \label{cor:segre-codes}
Let $q$ be odd and let $\calC$ be a $[q+1,3,q-1]_q$ MDS code. Then the projective system of $\calC$ is a conic, and $\calC$ is monomially equivalent to an extended Reed--Solomon code of dimension $3$.
\end{corollary}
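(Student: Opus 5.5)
The plan is to translate the code statement into the geometric one and then apply Segre's theorem. Let $\calC$ be a $[q+1,3,q-1]_q$ MDS code, and fix a generator matrix $G$ with columns $\gbf_1,\ldots,\gbf_{q+1}$. Since $\distH(\calC)=q-1\geq 1$ and the code is MDS with $k=3\geq2$, \cref{thm:mds-arcs} applies once we know $\calC$ is non-degenerate. Non-degeneracy is automatic for an MDS code with $k\geq2$. If some column were zero, that coordinate together with any $k-1$ other columns would give $k$ dependent columns. Equivalently, the dual MDS code would contain a weight-one word, contradicting its minimum distance $k+1\ge 2$. Hence, by \cref{thm:mds-arcs}, the columns give $q+1$ distinct points of $\PG(2,q)$ forming a $(q+1)$-arc $\calK$, that is, an oval.

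Since $q$ is odd, \cref{thm:plane-arcs}(iii) shows that $\calK$ is projectively equivalent to the conic $\calN_3$. This proves the first assertion. So there is $A\in\mathrm{GL}(3,q)$ mapping the points $\langle\gbf_i\rangle_{\Fq}$ bijectively onto the points of $\calN_3$. These are the projective points determined by the columns of $G_{\mathrm{RS}}$ for $k=3$ from \cref{ex:rs}.

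For the second assertion, we lift this projective equivalence to a monomial equivalence of codes. The points $\langle A\gbf_i\rangle$ are exactly the points $\langle \cbf_{\sigma(i)}\rangle$, where $\cbf_j$ denotes the columns of $G_{\mathrm{RS}}$ and $\sigma$ is a suitable permutation. Hence $A\gbf_i=\lambda_i\cbf_{\sigma(i)}$ with $\lambda_i\in\Fq^*$. It follows that $AG=G_{\mathrm{RS}}PD$ for a permutation matrix $P$ and an invertible diagonal matrix $D$. Left multiplication by $A$ does not change the row space, so $\calC$ is the row space of $G_{\mathrm{RS}}PD$, which is the image of the extended Reed--Solomon code under a monomial map.

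The only step needing care is this last bookkeeping. One must check that projective equivalence of the point sets, rather than just of the multisets, corresponds exactly to monomial equivalence of codes. Here this is immediate, because both point sets consist of $q+1$ distinct points and so no multiplicities interfere. The substantial content is entirely Segre's theorem, which we may invoke.
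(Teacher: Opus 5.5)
Your proposal is correct and matches the paper's proof: \cref{thm:mds-arcs} gives a $(q+1)$-arc, Segre's theorem (\cref{thm:plane-arcs}(iii)) makes it a conic, and a projectivity $A$ for which the columns of $AG$ are those of $G_{\mathrm{RS}}$ up to order and non-zero scalars yields the monomial equivalence $AG=G_{\mathrm{RS}}PD$. You also check non-degeneracy, which the paper leaves implicit; use the dual-code argument for this, because your first justification relies on the column characterization of \cref{thm:mds-arcs}, which is stated only for non-degenerate codes.
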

\begin{proof}
By \cref{thm:mds-arcs}, the projective system of $\calC$ is a $(q+1)$-arc of $\PG(2,q)$, hence a conic by \cref{thm:plane-arcs}(iii). Let $A\in\mathrm{GL}(3,q)$ induce a projectivity mapping it onto $\calN_3$. Then the columns of $AG$ are, up to order and non-zero scalars, the columns of $G_{\mathrm{RS}}$, and $AG$ generates $\calC$.
\end{proof}

Further characterizations of Reed--Solomon codes have been explored both geometrically and algebraically in \cite[Castelnuovo's Lemma]{arbarello2013geometry} and \cite[Corollary 27]{mirandola2015critical}.

Despite the many results obtained on arcs and MDS codes, determining their maximum possible length remains a fundamental open problem in finite geometry and coding theory.
The following conjecture, originating from Segre's work, predicts that the normal rational curve essentially provides the largest possible arc, apart from a few exceptional cases.
\begin{conjecture}[MDS conjecture]\label{conj:mds}
Let $2\le k\le q$. If there exists an $[N,k,N-k+1]_q$ MDS code, then $N\le q+1$, unless $q$ is even and $k\in\{3,q-1\}$, in which case $N\le q+2$.
\end{conjecture}

\begin{remark}\label{rem:mds-known}
The exceptional cases of \cref{conj:mds} come from hyperovals ($k=3$) and their duals ($k=q-1$). The conjecture holds for $k\le3$ by \cref{thm:plane-arcs}, hence by duality also for $N-k\le3$. Moreover, it holds for $q$ prime by~\cite{Ball2012}. We refer the reader to \cite{BallLavrauwArcs} for a survey of the topic and to \cite{fan2026linear} for a recent contribution.
\end{remark}

\subsection{Sum-rank metric codes and their geometry}

Let $t\geq1$, let $\nbf=(n_1,\ldots,n_t)$ be a vector of positive integers with $n_i\le m$ for every $i$, and put $N:=n_1+\cdots+n_t$. We identify
\[
\FQ^N\cong\FQ^{n_1}\times\cdots\times\FQ^{n_t}.
\]
For $\cbf=(\cbf_1,\ldots,\cbf_t)\in\FQ^N$ with $\cbf_i\in\FQ^{n_i}$, let $\rk_q(\cbf_i)$ be the dimension of the $\Fq$-span of the entries of $\cbf_i$. Equivalently, $\rk_q(\cbf_i)$ is the rank of the $m\times n_i$ matrix over $\Fq$ obtained by expanding the entries of $\cbf_i$ with respect to a fixed $\Fq$-basis of $\FQ$. The \emph{sum-rank weight} of $\cbf$ is
\[
\wtSR(\cbf):=\sum_{i=1}^t\rk_q(\cbf_i).
\]
For $\nbf=(1,\ldots,1)$ the sum-rank weight is the Hamming weight, and for $t=1$ it reduces to the rank weight.

\begin{definition}
A \emph{sum-rank metric code} with profile $\nbf$ and dimension $k$ is a $k$-dimensional $\FQ$-subspace $\calC\le\FQ^N$, endowed with the sum-rank metric. Its \emph{minimum distance} is $\distSR(\calC):=\min\{\wtSR(\cbf):\cbf\in\calC\setminus\{\zerobf\}\}$, and we say that $\calC$ is an $[\nbf,k,\distSR(\calC)]_{Q/q}$ code.
\end{definition}

\begin{theorem}[Singleton-like bound {\cite[Proposition~16]{MARTINEZPENAS2018587}}]\label{thm:sr-singleton}
If $\calC\le\FQ^N$ is a sum-rank metric code of dimension $k$, then $\distSR(\calC)\le N-k+1$.
\end{theorem}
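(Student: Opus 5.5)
The plan is to reduce the Singleton-like bound to the classical Hamming-metric argument by exploiting the inequality between rank and Hamming weight in each block, together with a puncturing (deletion) argument. The key observation is that for every block $\cbf_i\in\FQ^{n_i}$ we have $\rk_q(\cbf_i)\le\wtH(\cbf_i)$. Indeed, the $\Fq$-span of the entries of $\cbf_i$ is spanned by its non-zero entries, so its dimension is at most their number. Summing over the blocks gives $\wtSR(\cbf)\le\wtH(\cbf)$ for every $\cbf\in\FQ^N$.

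Next, I would find a non-zero codeword of small Hamming weight by a direct linear-algebra argument over $\FQ$, rather than quoting \cref{thm:h-singleton} over $\Fq$.
\begin{enumerate}[label=\textup{(\arabic*)}]
\item Consider the $\FQ$-linear projection $\pi:\calC\to\FQ^{k-1}$ that keeps the first $k-1$ coordinates of $\FQ^N$.
\item Since $\dim_{\FQ}\calC=k>k-1$, the map $\pi$ has non-trivial kernel.
\item So there is a non-zero $\cbf\in\calC$ whose first $k-1$ coordinates vanish. If $k-1>N$ this is impossible, but $k\le N$ holds automatically because $\calC\le\FQ^N$.
\item Such a $\cbf$ has $\wtH(\cbf)\le N-k+1$.
\end{enumerate}
Combining this with the first step gives $\distSR(\calC)\le\wtSR(\cbf)\le\wtH(\cbf)\le N-k+1$.

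This is the same argument that proves \cref{thm:h-singleton}, so the Hamming Singleton bound could be applied verbatim to $\calC$ viewed as a Hamming-metric code over $\FQ$. The point to check is that \cref{thm:h-singleton} was stated over $\Fq$; it holds over any finite field, in particular over $\FQ$. The only mild subtlety is the blockwise inequality $\rk_q\le\wtH$, which is immediate. I therefore expect no real obstacle: the proof is essentially two lines once the comparison $\wtSR\le\wtH$ is noted.
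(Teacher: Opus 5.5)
Your proof is correct: the blockwise inequality $\rk_q(\cbf_i)\le\wtH(\cbf_i)$ gives $\wtSR\le\wtH$. The kernel of the projection onto the first $k-1$ coordinates then supplies a non-zero codeword of Hamming weight at most $N-k+1$.

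The paper does not reprove this bound; it simply cites \cite[Proposition~16]{MARTINEZPENAS2018587}. Your reduction to the Hamming Singleton bound via $\wtSR\le\wtH$ is the standard argument. It is also the same comparison the paper itself invokes in the remark following \cref{lem:msrd-arc}.
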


Codes attaining the bound of \cref{thm:sr-singleton} are called \emph{maximum sum-rank distance}, or \emph{MSRD}. For $\nbf=(1,\ldots,1)$, MSRD codes are exactly the MDS codes over $\FQ$.

\begin{remark}\label{rem:ni-le-m}
In this context, the assumption $n_i\le m$ is standard (see~\cite{byrne2021fundamental}). It is natural because the entries of $\cbf_i$ lie in the $m$-dimensional $\Fq$-space $\FQ$, hence $\rk_q(\cbf_i)\le\min\{n_i,m\}$, and a block of length $n_i>m$ can never have full rank. For a single block the assumption is automatic, that is, if $t=1$ and $\calC$ is an MSRD code with $k<N$, then $N\le m$. Indeed, suppose that $N>m$. Since every codeword has rank at most $m$, we have $d:=N-k+1\le m$. Let $\pi\colon\FQ\to\Fq^{m-d+1}$ be a surjective $\Fq$-linear map, so that $\dim_{\Fq}\ker\pi=d-1$. The $\Fq$-linear map
\[
\calC\longrightarrow\bigl(\Fq^{m-d+1}\bigr)^N,\qquad (c_1,\ldots,c_N)\longmapsto\bigl(\pi(c_1),\ldots,\pi(c_N)\bigr)
\]
is injective. Indeed, every element of its kernel has all its entries in $\ker\pi$, hence rank at most $d-1$, and is therefore zero. Comparing $\Fq$-dimensions gives $km\le N(m-d+1)=N(m-N+k)$, that is, $(N-m)(N-k)\le0$, contradicting $N>m$ and $N>k$.
\end{remark}

We express a generator matrix of a sum-rank metric code $\calC$ in block form $G=(G_1\mid\cdots\mid G_t)$ with $G_i\in\FQ^{k\times n_i}$. The following is a well-known class of MSRD codes, which can be seen as the sum-rank metric analog of Reed--Solomon codes (\emph{cf.} \cref{ex:rs}).

\begin{example} \label{ex:lrs}
    Linearized Reed--Solomon codes were introduced in~\cite{martinez2018skew} as a special family of evaluation codes defined using skew polynomials. For the purposes of this paper, we introduce them directly via their generator matrices.
    
    Let $(\mathbf{a},\mathbf{b})\in\FQ^t\times\FQ^\nbf$ be such that the norms over $\Fq$ of the components of $\mathbf{a}$ are non-zero and pairwise distinct, and the elements of each component of $\mathbf{b}$ are linearly independent over $\Fq$. Let $\sigma$ be a generator of $\mathrm{Gal}(\FQ,\Fq)$ and denote by $\Norm_i(a)=a\cdot \sigma(a)\cdots\sigma^{i-1}(a)$, for every $i\geq 1$ and $a \in \FQ$. The code generated by $G=(G_1|\cdots|G_t)$, where
    \[
    G_i =\left(\begin{array}{ccc}
    b_{i,1} & \dots & b_{i,n_i}\\
    \sigma(b_{i,1})N_1(a_i) & \cdots & \sigma(b_{i,n_i})N_1(a_i) \\
    \sigma^2(b_{i,1})N_2(a_i) & \cdots & \sigma^2(b_{i,n_i})N_2(a_i) \\
    \vdots & & \vdots \\
    \sigma^{k-1}(b_{i,1})N_{k-1}(a_i) & \cdots & \sigma^{k-1}(b_{i,n_i})N_{k-1}(a_i) \\
    \end{array}\right),
    \] is said to be a $[\nbf,k]_{Q/q}$ \emph{linearized Reed--Solomon code}. By~\cite[Theorem 4]{martinez2018skew}, every linearized Reed--Solomon code is MSRD.

    Similarly to what we have seen above for Reed--Solomon codes (see~\cref{ex:rs}), in~\cite{neri2023thegeometry}, this notion was further generalized to doubly-extended linearized Reed--Solomon codes by extending the evaluation of skew polynomials to the points ``$0$'' and ``$\infty$''. These codes are also MSRD, as shown in~\cite[Theorem 4.6]{neri2023thegeometry}.
\end{example}

We say that $\calC$ is \emph{non-degenerate} if, for every $i$, the columns of $G_i$ are linearly independent over $\Fq$ (note that this notion does not depend on the choice of $G$). In the following, we always assume the codes considered to be non-degenerate. For a study of non-degeneracy for sum-rank metric codes that are not necessarily $\FQ$-linear, see~\cite[Section~3.2]{santonastaso2025invariants}. In this case, the $\Fq$-span $U_i$ of the columns of $G_i$ is an $n_i$-dimensional $\Fq$-subspace of $\FQ^k$, and $U_1,\ldots,U_t$ span $\FQ^k$ over $\FQ$, since $G$ has rank $k$. This motivates the following definition, originally introduced in~\cite{neri2023thegeometry}.

\begin{definition}\label{def:system}
An \emph{$[\nbf,k]_{Q/q}$ system} is a tuple $\calU=(U_1,\ldots,U_t)$, where each $U_i$ is an $n_i$-dimensional $\Fq$-subspace of $\FQ^k$ and $\langle U_1,\ldots,U_t\rangle_{\FQ}=\FQ^k$. We call $\nbf$ the \emph{profile} of $\calU$ and $N$ its \emph{total length}.
\end{definition}

Every non-degenerate $[\nbf,k]_{Q/q}$ code determines an $[\nbf,k]_{Q/q}$ system as above, and conversely, choosing $\Fq$-bases of the $U_i$ as columns gives a generator matrix of a non-degenerate code (see~\cite{neri2023thegeometry} for the precise correspondence up to equivalence). For $t=1$ one recovers the notion $q$-system used in~\cite{ALFARANO2022105658,Randrianarisoa2020} to describe rank-metric codes.

For an $\Fq$-subspace $U\le\FQ^k$ and an $\FQ$-subspace $W\le\FQ^k$ we put
\[
\wt_U(W):=\dim_{\Fq}(U\cap W),\qquad \wt_\calU(W):=\sum_{i=1}^t\wt_{U_i}(W),
\]
and for a projective subspace $\calS=\PG(W)$ of $\PG(k-1,Q)$ we also write $\wt_U(\calS)$ and $\wt_\calU(\calS)$.

\begin{definition}\label{def:linear-set}
Let $U$ be an $n$-dimensional $\Fq$-subspace of $\FQ^k$. The set
\[
\calL_U:=\{\langle\ubf\rangle_{\FQ}:\ubf\in U\setminus\{\zerobf\}\}\subseteq\PG(k-1,Q)
\]
is the \emph{$\Fq$-linear set} of rank $n$ defined by $U$.
\end{definition}
It is immediate to see that $|\calL_U|\le\theta_{n-1}$. In particular, a linear set $\calL_U$ is \emph{scattered} if $\wt_U(P)\le1$ for every point $P$, or equivalently if $|\calL_U|=\theta_{n-1}$ (see \textit{e.g.}, \cite{POLVERINO20103096}). More generally, following~\cite{csajbok2021generalising}, for $h\in\{1,\ldots,k-1\}$, we say that $\calL_U$ is \emph{$h$-scattered} if $\wt_U(\calS)\le h$ for every $(h-1)$-dimensional projective subspace $\calS$ of $\PG(k-1,Q)$. In particular, $(k-1)$-scattered linear sets are also called \emph{scattered with respect to hyperplanes}.

\begin{theorem}[{\cite[Theorem~3.1 and Corollary~3.1]{neri2023thegeometry}}]\label{prop:msrd-geometric}
Let $\calC$ be a non-degenerate $[\nbf,k]_{Q/q}$ code with generator matrix $G$ and associated system $\calU$. For every $\xbf\in\FQ^k\setminus\{\zerobf\}$,
\begin{equation}\label{eq:sr-weight}
\wtSR(\xbf G)=N-\wt_\calU(\xbf^\perp).
\end{equation}
Consequently, $\distSR(\calC)=N-\max_H\wt_\calU(H)$, where $H$ runs over the hyperplanes of $\FQ^k$, and $\calC$ is MSRD if and only if
\begin{equation}\label{eq:MSRD}
\wt_\calU(H)\le k-1\qquad\text{for every hyperplane }H\text{ of }\FQ^k.
\end{equation}
\end{theorem}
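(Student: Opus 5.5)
The plan is to prove \eqref{eq:sr-weight} block by block and then take a maximum over $\xbf$. We follow the Hamming argument behind \eqref{eq:hamming-weight}, replacing ``is a column zero'' by ``what is the $\Fq$-rank of a block''.

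Fix $\xbf\in\FQ^k\setminus\{\zerobf\}$ and a block index $i$. The $i$-th block of the codeword is $\xbf G_i=(\xbf\cdot\gbf_{i,1},\ldots,\xbf\cdot\gbf_{i,n_i})$, where $\gbf_{i,1},\ldots,\gbf_{i,n_i}$ are the columns of $G_i$ and form an $\Fq$-basis of $U_i$. Consider the map
\[
\varphi_{\xbf}\colon U_i\to\FQ,\qquad \ubf\mapsto\xbf\cdot\ubf,
\]
which is $\Fq$-linear. Its image is the $\Fq$-span of the entries of $\xbf G_i$, so $\rk_q(\xbf G_i)=\dim_{\Fq}\varphi_{\xbf}(U_i)$. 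Its kernel is exactly $U_i\cap\xbf^\perp$. Rank–nullity over $\Fq$ then gives
\[
\rk_q(\xbf G_i)=n_i-\dim_{\Fq}(U_i\cap\xbf^\perp)=n_i-\wt_{U_i}(\xbf^\perp).
\]
Summing over $i=1,\ldots,t$ yields $\wtSR(\xbf G)=N-\wt_\calU(\xbf^\perp)$, which is \eqref{eq:sr-weight}.

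For the consequences, we use that $G$ has rank $k$, so $\xbf\mapsto\xbf G$ is a bijection from $\FQ^k$ onto $\calC$. Every nonzero codeword therefore has the form $\xbf G$ with $\xbf\neq\zerobf$. As $\xbf$ ranges over nonzero vectors, $\xbf^\perp$ ranges over all $\FQ$-hyperplanes $H$ of $\FQ^k$, each hyperplane arising from $Q-1$ proportional choices of $\xbf$. Minimising \eqref{eq:sr-weight} gives $\distSR(\calC)=N-\max_H\wt_\calU(H)$.

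By \cref{thm:sr-singleton}, the code is MSRD exactly when $\distSR(\calC)=N-k+1$, that is, when $\max_H\wt_\calU(H)=k-1$. The bound $\distSR\le N-k+1$ already forces this maximum to be at least $k-1$. Hence MSRD is equivalent to $\wt_\calU(H)\le k-1$ for every hyperplane $H$, which is \eqref{eq:MSRD}.

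There is no real obstacle here. The only point requiring care is the identification of $\rk_q(\xbf G_i)$ with the rank of $\varphi_{\xbf}$: the entries of $\xbf G_i$ are the images of a basis of $U_i$, so they span the image of $\varphi_{\xbf}$. This step uses non-degeneracy, namely that the columns of $G_i$ are $\Fq$-independent, so that $\dim_{\Fq}U_i=n_i$.
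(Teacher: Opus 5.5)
Your proof is correct and is essentially the standard argument. The paper does not reprove this result; it cites \cite{neri2023thegeometry}, and it uses the same reasoning as its Hamming-metric discussion around \eqref{eq:hamming-weight}: rank--nullity for the $\Fq$-linear map $\ubf\mapsto\xbf\cdot\ubf$ on each $U_i$, then letting $\xbf^\perp$ run over all hyperplanes and invoking \cref{thm:sr-singleton}.
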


\begin{definition}\label{def:msrd-system}
An $[\nbf,k]_{Q/q}$ system satisfying~\eqref{eq:MSRD} is an \emph{MSRD system}, and $(\calL_{U_1},\ldots,\calL_{U_t})$ is its associated \emph{projective MSRD system}. We put
\[
M(\nbf):=\sum_{i=1}^t\theta_{n_i-1},\qquad r(\nbf):=\min_i n_i.
\]
The system is \emph{classical} if $\nbf=(1,\ldots,1)$, and \emph{non-trivial} if $2\le k\le N-2$. Finally, we put $\kappa:=\min\{k,N-k\}$.
\end{definition}

The MSRD system associated with the class of linearized Reed--Solomon is described in the following example.

\begin{example}\label{ex:geometry_lrs}
    The structure of the projective MSRD system $(\calL_{U_1},\ldots,\calL_{U_t})$ associated with (doubly-extended) linearized Reed--Solomon codes was studied in~\cite{astore2026qanalogue,neri2023thegeometry}. In particular, \cite[Theorem 2.8]{astore2026qanalogue} shows that the linear sets $\calL_{U_1},\ldots,\calL_{U_t}$ all lie on some $q$-analogue of the normal rational curve $\calN_k$. In particular, if the code is doubly-extended, then its profile consists of the $q+1$ blocks $(m,\dots,m,1,1)$, and the linear sets corresponding with the last two ones are precisely the points zero and infinity in $\PG(k-1,Q)$.
\end{example}

\section{First properties of the MSRD condition}\label{sec:3}

The following lemma establishes the subspace design property of MSRD systems (see~\cite[Proposition~3.2]{santonastaso2023subspace}). For completeness, we provide a short proof using the notation adopted in this paper.
\begin{lemma}\label{lem:design}
Let $\calU$ be an $[\nbf,k]_{Q/q}$ MSRD system. Then
\[
\wt_\calU(W)\le\dim_{\FQ}W
\]
for every $\FQ$-subspace $W\le\FQ^k$ with $\dim_{\FQ}W\le k-1$.
\end{lemma}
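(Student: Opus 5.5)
We argue by downward induction on $s:=\dim_{\FQ}W$, starting from $s=k-1$, where the claim is exactly the MSRD condition \eqref{eq:MSRD} of \cref{prop:msrd-geometric}. So assume $1\le s\le k-2$ (the case $s=0$ is trivial), and that the statement holds for all subspaces of dimension $s+1$. Suppose, for a contradiction, that $W$ has $\dim_{\FQ}W=s$ and $\wt_\calU(W)\ge s+1$. The idea is to enlarge $W$ by one dimension while picking up at least one additional unit of weight, producing a subspace $W'$ of dimension $s+1$ with $\wt_\calU(W')\ge s+2$, which contradicts the inductive hypothesis.

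To find such a $W'$, we use the spanning condition $\langle U_1,\ldots,U_t\rangle_{\FQ}=\FQ^k$ from \cref{def:system}. Since $\dim W=s<k$, not all of $U_1,\ldots,U_t$ can be contained in $W$. Hence there exist an index $i$ and a vector $\ubf\in U_i\setminus W$. Set $W':=W+\langle\ubf\rangle_{\FQ}$, which has $\FQ$-dimension $s+1\le k-1$.

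For every $j$ we have $U_j\cap W\subseteq U_j\cap W'$, so $\wt_{U_j}(W')\ge\wt_{U_j}(W)$. For $j=i$ the inclusion is strict, because $\ubf\in (U_i\cap W')\setminus W$, and therefore $\wt_{U_i}(W')\ge\wt_{U_i}(W)+1$. Summing over $j$ gives
\[
\wt_\calU(W')\ge\wt_\calU(W)+1\ge s+2=\dim_{\FQ}W'+1,
\]
contradicting the inductive hypothesis for $W'$. This completes the induction.

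There is no serious obstacle here. The only point needing care is that the enlarged space $W'$ must still have dimension at most $k-1$, so that the inductive hypothesis (ultimately \eqref{eq:MSRD}) applies to it. This holds because we only enlarge when $s\le k-2$. It is also worth noting that the strict increase in $\wt_{U_i}$ relies on the $U_j$ being $\Fq$-subspaces spanning $\FQ^k$ over $\FQ$.
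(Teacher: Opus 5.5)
Your proof is correct and is essentially the paper's argument. The paper also enlarges $W$ one dimension at a time by a vector $\ubf\in U_i\setminus W$, which exists by the spanning condition, gaining at least one unit of weight per step, and then applies \eqref{eq:MSRD} to the hyperplane it reaches. Your downward induction is simply a repackaging of that iteration.
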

\begin{proof}
Let $j=\dim_{\FQ}W$ and put $W_j=W$. If $W_s\neq\FQ^k$ for some $s$, then, since $U_1,\ldots,U_t$ span $\FQ^k$, there exist an index $i$ and a vector $\ubf\in U_i\setminus W_s$. Put $W_{s+1}=W_s+\langle\ubf\rangle_{\FQ}$. Then $\dim_{\FQ}W_{s+1}=s+1$ and $U_i\cap W_{s+1}\supsetneq U_i\cap W_s$, so that $\wt_\calU(W_{s+1})\geq\wt_\calU(W_s)+1$. After $k-1-j$ steps we reach a hyperplane $W_{k-1}$ with $\wt_\calU(W_{k-1})\geq\wt_\calU(W)+k-1-j$. By~\eqref{eq:MSRD}, the left-hand side is at most $k-1$, and hence $\wt_\calU(W)\le j$.
\end{proof}

\begin{corollary}\label{cor:scattered-disjoint}
Let $\calU=(U_1,\ldots,U_t)$ be an $[\nbf,k]_{Q/q}$ MSRD system. Then every $\calL_{U_i}$ is $h$-scattered for every $h\in\{1,\ldots,k-1\}$, in particular, it is scattered and scattered with respect to hyperplanes. Moreover, the linear sets $\calL_{U_1},\ldots,\calL_{U_t}$ are pairwise disjoint, so that
\[
|\calL_{U_1}\cup\cdots\cup\calL_{U_t}|=M(\nbf)=\sum_{i=1}^t\theta_{n_i-1}\le\frac{Q^k-1}{Q-1}.
\]
\end{corollary}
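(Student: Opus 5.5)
The plan is to derive everything from \cref{lem:design}, applying it to subspaces of small $\FQ$-dimension. We may assume $k\ge2$, since otherwise there is nothing to prove.

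For $h$-scatteredness, let $h\in\{1,\ldots,k-1\}$ and let $\calS=\PG(W)$ be an $(h-1)$-dimensional projective subspace, so $\dim_{\FQ}W=h\le k-1$. \Cref{lem:design} gives $\wt_\calU(W)\le h$. Since $\wt_{U_i}(W)\le\wt_\calU(W)$, because all summands are non-negative, we get $\wt_{U_i}(W)\le h$ for every $i$. Taking $h=1$ gives scatteredness, and taking $h=k-1$ gives scatteredness with respect to hyperplanes.

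For disjointness, suppose a point $P=\langle\vbf\rangle_{\FQ}$ lies in $\calL_{U_i}\cap\calL_{U_j}$ with $i\neq j$. Then $\dim_{\Fq}(U_i\cap\langle\vbf\rangle_{\FQ})\ge1$ and $\dim_{\Fq}(U_j\cap\langle\vbf\rangle_{\FQ})\ge1$. Hence $\wt_\calU(P)\ge2$, contradicting the case $h=1$ of \cref{lem:design}, which gives $\wt_\calU(P)\le1$ because $1\le k-1$.

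For the count, scatteredness gives $\wt_{U_i}(P)\le1$ for every point $P$. Each point of $\calL_{U_i}$ therefore corresponds to exactly $q-1$ non-zero vectors of $U_i$, so $|\calL_{U_i}|=(q^{n_i}-1)/(q-1)=\theta_{n_i-1}$. By disjointness, the union has size $\sum_i\theta_{n_i-1}=M(\nbf)$. This is at most the number of points of $\PG(k-1,Q)$, which is $(Q^k-1)/(Q-1)$.

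There is no real obstacle here. The only point needing care is that \cref{lem:design} requires $\dim_{\FQ}W\le k-1$, which holds for points exactly when $k\ge2$.
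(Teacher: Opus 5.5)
Your proof is correct and follows the paper's argument: \cref{lem:design} gives $\wt_{U_i}(W)\le\wt_\calU(W)\le\dim_{\FQ}W$, its point case gives disjointness, and scatteredness gives $|\calL_{U_i}|=\theta_{n_i-1}$ and hence the count. One minor correction is that for $k=1$ the statement is not vacuous but false, since every $\calL_{U_i}$ is then the unique point of $\PG(0,Q)$; so $k\ge2$ is a standing hypothesis that the paper's proof also needs, rather than a harmless reduction.
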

\begin{proof}
By \cref{lem:design}, $\wt_{U_i}(W)\le\wt_\calU(W)\le\dim_{\FQ}W$ whenever $\dim_{\FQ}W\le k-1$, which gives the first statement. For a point $P$, \cref{lem:design} gives $\sum_i\wt_{U_i}(P)\le1$, so $P$ belongs to at most one of the linear sets. The last statement follows by scatteredness since $|\calL_{U_i}|=\theta_{n_i-1}$.
\end{proof}

\begin{remark}\label{rem:converse}
The converse of \cref{cor:scattered-disjoint} does not hold. Consider for instance four points of $\PG(2,Q)$, three of which lie on a line $\ell$ and the fourth outside $\ell$, and let $\calU$ be the $[(1,1,1,1),3]_{Q/q}$ system whose blocks are spanned by vectors representing them. The four linear sets are single points, hence they are scattered, scattered with respect to lines and pairwise disjoint. However $\wt_\calU(\ell)=3$, hence $\calU$ is not MSRD.
\end{remark}

\begin{lemma}\label{lem:msrd-arc}
Let $\calU=(U_1,\ldots,U_t)$ be an $[\nbf,k]_{Q/q}$ MSRD system. For each $i$, let $\ubf_{i,1},\ldots,\ubf_{i,n_i}$ be an $\Fq$-basis of $U_i$. Then the $N$ points $\langle\ubf_{i,j}\rangle_{\FQ}$ are pairwise distinct and form an $N$-arc of $\PG(k-1,Q)$.
\end{lemma}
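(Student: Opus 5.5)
The plan is to deduce everything from \cref{lem:design} and \cref{cor:scattered-disjoint}, by comparing $\FQ$-spans of the chosen basis vectors with the weight function $\wt_\calU$. We first dispose of distinctness. If two of the points coincide, say $\langle\ubf_{i,j}\rangle_{\FQ}=\langle\ubf_{i',j'}\rangle_{\FQ}=P$ with $(i,j)\neq(i',j')$, there are two cases.
\begin{itemize}
\item If $i\neq i'$, then $P\in\calL_{U_i}\cap\calL_{U_{i'}}$, contradicting the pairwise disjointness in \cref{cor:scattered-disjoint}.
\item If $i=i'$, then $\ubf_{i,j}$ and $\ubf_{i,j'}$ are $\Fq$-independent vectors of $U_i\cap P$. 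This gives $\wt_{U_i}(P)\geq2$, contradicting scatteredness (equivalently, \cref{lem:design} with $\dim_{\FQ}W=1$, which is allowed since $k\geq2$).
\end{itemize}

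For the arc property, we argue by contradiction. Suppose $k$ of these $N$ points, with representing vectors $\vbf_1,\ldots,\vbf_k$ taken from the list $\{\ubf_{i,j}\}$, are $\FQ$-linearly dependent. Then $W:=\langle\vbf_1,\ldots,\vbf_k\rangle_{\FQ}$ has dimension at most $k-1$, so it is contained in some hyperplane $H$ of $\FQ^k$. The key estimate is $\wt_\calU(H)\geq k$. Group the $\vbf_s$ according to the block $i$ they come from, and let $c_i$ be the number of chosen vectors lying in $U_i$, so that $\sum_i c_i=k$. The chosen vectors from block $i$ are part of an $\Fq$-basis of $U_i$, hence $\Fq$-linearly independent, and they lie in $U_i\cap H$. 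Therefore $\dim_{\Fq}(U_i\cap H)\geq c_i$, and summing over $i$ gives $\wt_\calU(H)\geq\sum_i c_i=k$. This contradicts the MSRD condition \eqref{eq:MSRD}.

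We expect no real obstacle. The only point needing care is that the contradiction must use the $\Fq$-independence of the basis vectors inside each block, and not $\FQ$-independence, which is exactly what the weight $\wt_{U_i}$ measures. One may alternatively phrase the last step via \cref{lem:design} applied to $W$, whose weight would be at least $k>\dim_{\FQ}W$. The case where $N<k$ is vacuous. Since the $N$ points are distinct and any $k$ of them are independent, they form an $N$-arc in the sense of \cref{def:arc}.
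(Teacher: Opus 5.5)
Your proposal is correct and follows essentially the same argument as the paper. Distinctness comes from \cref{lem:design}, which you unpack through \cref{cor:scattered-disjoint} into the two cases $i\neq i'$ and $i=i'$. The arc property comes from the observation that the chosen basis vectors from each $U_i$ are $\Fq$-independent elements of $U_i\cap H$, forcing $\wt_\calU(H)\geq k$ and contradicting~\eqref{eq:MSRD}.
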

\begin{proof}
The points are pairwise distinct, because two equal points would have weight at least $2$, contradicting \cref{lem:design}. If $k$ of the vectors $\ubf_{i,j}$ lay on a hyperplane $H$, then those taken from $U_i$ would be $\Fq$-linearly independent elements of $U_i\cap H$, so that $\wt_\calU(H)\geq k$, contradicting~\eqref{eq:MSRD}.
\end{proof}
\begin{remark}
In coding terms, the statement follows from the inequality $\rk_q(\cbf_i)\le\wtH(\cbf_i)$, which holds for every block, together with the two Singleton bounds (\cref{thm:h-singleton} and \cref{thm:sr-singleton}).
\end{remark}

\begin{corollary}\label{cor:classical-transfer}
Every upper bound on the size of arcs of $\PG(k-1,Q)$ holds for the total length of a $k$-dimensional MSRD system. In particular, the following hold for every $[\nbf,k]_{Q/q}$ MSRD system.
\begin{enumerate}[label=\textup{(\roman*)}]
\item $N\le Q+k-1$.
\item If $k=3$, then $N\le Q+2$, and $N\le Q+1$ if $Q$ is odd.
\item If the system is non-trivial and \cref{conj:mds} holds over $\FQ$, then $N\le Q+1$, unless $Q$ is even and $\kappa=3$.
\end{enumerate}
\end{corollary}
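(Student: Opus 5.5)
The plan is to reduce everything to \cref{lem:msrd-arc}. Fix an $[\nbf,k]_{Q/q}$ MSRD system $\calU$ and choose an $\Fq$-basis $\ubf_{i,1},\ldots,\ubf_{i,n_i}$ of each $U_i$. By \cref{lem:msrd-arc}, the $N$ points $\langle\ubf_{i,j}\rangle_{\FQ}$ are pairwise distinct and form an $N$-arc of $\PG(k-1,Q)$. Consequently, any statement of the form ``every arc of $\PG(k-1,Q)$ has at most $B$ points'' gives $N\le B$ at once. This proves the general claim, and the three items will follow by applying it to the specific arc bounds recalled in \cref{sec:preliminaries}.

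For (i), I apply \cref{cor:arc-trivial-bound} with $q$ replaced by $Q$, which gives $N\le Q+k-1$. For (ii), with $k=3$, \cref{thm:plane-arcs}(i) gives $N\le Q+2$. If $Q$ is odd, \cref{thm:plane-arcs}(ii) rules out $(Q+2)$-arcs, since such arcs exist only in even order, and so $N\le Q+1$.

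Item (iii) is the only step needing care, because \cref{conj:mds} is stated for codes over $\FQ$ and assumes $2\le k\le Q$. The plan is to pass back from the arc to a code. By \cref{thm:mds-arcs}, the matrix whose columns are the vectors $\ubf_{i,j}$ generates an $[N,k,N-k+1]_Q$ MDS code $\calC'$. This code is non-degenerate, and it has rank $k$ because the $U_i$ span $\FQ^k$. Non-triviality means $2\le k\le N-2$, so the dual of $\calC'$ is an MDS code of dimension $N-k\ge2$.

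Next I need $k\le Q$, or else the same inequality for the dual dimension. Suppose instead that $\kappa>Q$. Then (i) gives $N\le Q+k-1$, so $N-k\le Q-1$, and this contradicts $\kappa\le N-k$ whenever $\kappa>Q-1$. So we may assume $\kappa\le Q$. By duality we may then replace $\calC'$ by whichever of $\calC'$ and its dual has dimension $\kappa$. Applying \cref{conj:mds} with $q=Q$ to that code gives $N\le Q+1$. The only exception is when $Q$ is even and the dimension lies in $\{3,Q-1\}$.

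The main obstacle is showing that this exception reduces to $\kappa=3$. The case $\kappa=Q-1$ leaves a dual dimension $N-\kappa\ge\kappa=Q-1$, and I would push it into the $\kappa=3$ case as follows. I apply the conjecture to the code of dimension $\kappa$ and use the fact that the exceptional lengths for $k$ and $Q-1$ coincide under duality, both being $N=Q+2$. Then $\kappa=Q-1$ together with $N=Q+2$ gives $N-\kappa=3$, hence $\kappa\le3$, which is the stated exception. Apart from this bookkeeping of dimension ranges, every step is a direct transfer through \cref{lem:msrd-arc}.
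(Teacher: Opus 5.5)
Your proposal is correct. For the general claim and for items (i) and (ii) it is exactly the paper's argument.

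For (iii) you take a slightly different route. The paper rules out $k\ge Q$ with Bush's theorem: \cref{thm:bush} would give $N\le k+1$, contradicting $k\le N-2$. It can then apply \cref{conj:mds} to the code of dimension $k$ itself and read off $\{k,N-k\}=\{3,Q-1\}$ in the exceptional case. You avoid Bush's theorem altogether. The elementary bound (i) gives $N-k\le Q-1$, hence $\kappa\le Q-1$. You then apply the conjecture to whichever of $\calC'$ and $(\calC')^\perp$ has dimension $\kappa$, using that duals of MDS codes are MDS. This trades a non-trivial classical theorem for duality and the trivial counting bound, so your argument is more elementary.

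The exception bookkeeping is simpler than your last paragraph suggests. The conjecture already gives $N\le Q+2$ in the exceptional case, so only $N=Q+2$ needs attention. If $\kappa=3$ there is nothing to do. If $\kappa=Q-1$, then $N-\kappa=3$, and $\kappa\le N-\kappa$ forces $\kappa\le 3$, hence $\kappa=3$ (as $Q\ge4$). Your remark about exceptional lengths ``coinciding under duality'' is not needed.

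Incidentally, the same counting bound applied to the arc given by $(\calC')^\perp$ (of dimension $N-k\ge2$) yields $k\le Q-1$ directly. So even the passage to the $\kappa$-dimensional code could be dropped, and the conjecture applied to $\calC'$ as in the paper. The paper itself mentions this alternative to Bush's theorem later, in the discussion preceding \cref{conj:msrd}.
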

\begin{proof}
The first statement follows from \cref{lem:msrd-arc}, and (i) and (ii) then follow from \cref{cor:arc-trivial-bound,thm:plane-arcs}. For (iii), by \cref{lem:msrd-arc} an MSRD code with this system is an $[N,k,N-k+1]_Q$ MDS code with $2\le k\le N-2$. If $k\geq Q$, then \cref{thm:bush} gives $N\le k+1$, a contradiction, so $2\le k\le Q-1$. By \cref{conj:mds}, $N\le Q+1$ unless $Q$ is even, $k\in\{3,Q-1\}$ and $N=Q+2$, in the latter case $\{k,N-k\}=\{3,Q-1\}$ and $Q\geq4$, so $\kappa=3$.
\end{proof}

\begin{remark}\label{rem:conseq_on_t}
The bounds in the preceding corollary also impose constraints on $t$. Recall that $t\le N$, with equality if and only if $n_i=1$ for every $i$, that is, when the system is classical. Consequently, all the bounds above apply to $t$ as well.
\end{remark}

\Cref{cor:classical-transfer} shows that the total length of MSRD codes is controlled by the classical problem on arcs. The sum-rank problem only becomes truly novel when the profile is considered, which is why we often measure MSRD systems by $M(\nbf)$, the number of points of $\calL_{U_1}\cup\cdots\cup\calL_{U_t}$.

In dimension two, hyperplanes are points and the MSRD condition becomes a pure packing condition.
\begin{proposition}\label{prop:dim2}
An $[\nbf,2]_{Q/q}$ system $\calU=(U_1,\ldots,U_t)$ is MSRD if and only if the linear sets $\calL_{U_1},\ldots,\calL_{U_t}$ are scattered and pairwise disjoint in $\PG(1,Q)$. Consequently,
\[
t\le M(\nbf)\le Q+1,
\]
and $Q+1-M(\nbf)$ is the number of points of $\PG(1,Q)$ of weight $0$ in $\calU$. The equality $M(\nbf)=Q+1$ is attained by extended Reed--Solomon codes of dimension $2$ over $\FQ$, with profile $(1,\ldots,1)$ and $t=Q+1$, and by the profile $(m,\ldots,m,1,1)$ with $q-1$ blocks of rank $m$.
\end{proposition}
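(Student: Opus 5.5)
The plan is to read condition~\eqref{eq:MSRD} directly in dimension two. When $k=2$, the hyperplanes of $\FQ^2$ are the one-dimensional $\FQ$-subspaces, that is, the points $P$ of $\PG(1,Q)$. So $\calU$ is MSRD if and only if $\sum_{i}\wt_{U_i}(P)\le 1$ for every point $P$.

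For the first equivalence, the forward direction is \cref{cor:scattered-disjoint} with $k=2$. For the converse, assume the linear sets are scattered and pairwise disjoint, and fix a point $P$. Disjointness means at most one index $i$ has $\wt_{U_i}(P)>0$. Scatteredness of that $\calL_{U_i}$ gives $\wt_{U_i}(P)\le 1$. Hence $\wt_\calU(P)\le 1=k-1$, which is \eqref{eq:MSRD}.

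The counting statements then follow. Since the $\calL_{U_i}$ are scattered and disjoint, $|\calL_{U_1}\cup\cdots\cup\calL_{U_t}|=M(\nbf)$, and this union lies inside $\PG(1,Q)$, which has $Q+1$ points; so $M(\nbf)\le Q+1$. The bound $t\le M(\nbf)$ holds because each $\theta_{n_i-1}\ge 1$. A point has weight $0$ in $\calU$ exactly when it lies outside the union, so there are $Q+1-M(\nbf)$ such points.

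For the equality cases, first take the extended Reed--Solomon code of dimension $2$ over $\FQ$ from \cref{ex:rs}. Its columns give $Q+1$ pairwise distinct points of $\PG(1,Q)$. Taking each as a rank-one block gives singleton linear sets, which are trivially scattered and are disjoint, so $M=Q+1$.

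For the profile $(m,\ldots,m,1,1)$, I would use the doubly-extended linearized Reed--Solomon codes of \cref{ex:lrs,ex:geometry_lrs}, which are MSRD by \cite[Theorem 4.6]{neri2023thegeometry}. Since the statement names this profile with $q-1$ blocks of rank $m$, the count is $(q-1)\theta_{m-1}+2=(q^m-1)+2=Q+1$.

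As a self-contained alternative, let $\lambda$ run over $\Fq^*$ and put $U_\lambda=\{(x,y):x\in\FQ,\ y^{q-1}\ldots\}$. More cleanly, $U_\lambda=\{(x,\,\omega_\lambda x^{q}) : x\in\FQ\}$. The associated linear set is the set of points $\langle(1,z)\rangle$ with $\Norm_m(z)$ in a fixed coset; the point is that the $q-1$ norm fibres partition $\FQ^*$. Each is scattered of size $\theta_{m-1}$, and together with the points $\langle(1,0)\rangle$ and $\langle(0,1)\rangle$ they partition $\PG(1,Q)$.

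The main obstacle is this last construction, specifically checking scatteredness and the norm-fibre description. I would avoid it by citing the doubly-extended linearized Reed--Solomon codes, which have exactly the profile in the statement and are MSRD; the equivalence just proved then makes the count $M(\nbf)=Q+1$ immediate.
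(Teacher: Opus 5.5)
Your proposal is correct and follows the paper's proof. You read \eqref{eq:MSRD} pointwise on $\PG(1,Q)$, count points using scatteredness and disjointness, and cite the extended Reed--Solomon and doubly-extended linearized Reed--Solomon codes for the two equality cases, exactly as the paper does; your norm-fibre alternative $U_\lambda=\{(x,\omega_\lambda x^q):x\in\FQ\}$ also works, since $x\mapsto x^{q-1}$ has fibres $x\Fq^*$ and the norm is onto $\Fq^*$, but delete the garbled first attempt at defining $U_\lambda$.
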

\begin{proof}
Hyperplanes of $\FQ^2$ are one-dimensional, hence the MSRD condition~\eqref{eq:MSRD} states that $\sum_i\wt_{U_i}(\langle\vbf\rangle_{\FQ})\le1$ for every $\vbf\neq\zerobf$, which means precisely that the linear sets are scattered and pairwise disjoint. Then $|\calL_{U_i}|=\theta_{n_i-1}$, and the bound follows by the fact that the number of points in $\PG(1,Q)$ is $Q+1$.

For the profile $(1,\ldots,1)$, take the $Q+1$ points of $\PG(1,Q)$, that is, the projective system associated with an extended Reed--Solomon code of dimension $2$ over $\FQ$. For the second example, the doubly-extended linearized Reed--Solomon codes in \cref{ex:geometry_lrs} with profile $(m,\ldots,m,1,1)$ attain the upper bound. 
\end{proof}

\section{Three-dimensional MSRD codes}\label{sec:dim3}

Throughout this section, $\calU=(U_1,\ldots,U_t)$ is an $[\nbf,3]_{Q/q}$ MSRD system. To lighten notation we write $\calL_i:=\calL_{U_i}$, and we put
\[
\calL:=\calL_1\cup\cdots\cup\calL_t,\qquad M:=|\calL|.
\]
By \cref{cor:scattered-disjoint}, the linear sets $\calL_i$ are scattered and pairwise disjoint, so that $|\calL_i|=\theta_{n_i-1}$ and $M=M(\nbf)$. Hyperplanes of $\PG(2,Q)$ are lines, and the MSRD condition~\eqref{eq:MSRD} reads
\begin{equation}\label{eq:MSRD3}
\wt_\calU(\ell)=\sum_{i=1}^t\wt_{U_i}(\ell)\le2\qquad\text{for every line }\ell\text{ of }\PG(2,Q).
\end{equation}

The MSRD property in dimension $3$ imposes strong restrictions on the possible intersection sizes of a line with the union $\calL$ of the linear sets forming a projective MSRD system. In particular, only four intersection sizes can occur, in contrast to the unrestricted possibilities for arbitrary point sets in $\PG(2,Q)$. This is formalised in the following lemma.
\begin{lemma}\label{lem:line-intersections}
Let $\ell$ be a line of $\PG(2,Q)$. Then $|\ell\cap\calL|\in\{0,1,2,q+1\}$. More precisely,
\begin{enumerate}[label=\textup{(\roman*)}]
\item $|\ell\cap\calL|=0$ if and only if $\wt_{U_i}(\ell)=0$ for every $i$;
\item $|\ell\cap\calL|=1$ if and only if there is an index $i$ with $\wt_{U_i}(\ell)=1$ and $\wt_{U_j}(\ell)=0$ for every $j\neq i$;
\item $|\ell\cap\calL|=2$ if and only if there are indices $i\neq j$ with $\wt_{U_i}(\ell)=\wt_{U_j}(\ell)=1$ and $\wt_{U_h}(\ell)=0$ for every $h\notin\{i,j\}$;
\item $|\ell\cap\calL|=q+1$ if and only if there is an index $i$ with $\wt_{U_i}(\ell)=2$ and $\wt_{U_j}(\ell)=0$ for every $j\neq i$. In this case, if $\ell=\PG(W)$, then $\ell\cap\calL$ consists of the $q+1$ points defined by the two-dimensional $\Fq$-space $U_i\cap W$.
\end{enumerate}
In particular, $\wt_\calU(\ell)=0$, $1$ or $2$ according to whether $\ell$ is external, tangent, or a $2$- or $(q+1)$-secant to $\calL$.
\end{lemma}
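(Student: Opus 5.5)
The plan is to compute $|\ell\cap\calL|$ directly from the weights $w_i:=\wt_{U_i}(\ell)$. Let $\ell=\PG(W)$ with $\dim_{\FQ}W=2$. By \eqref{eq:MSRD3} we have $\sum_i w_i\le 2$. Hence the weight vector $(w_1,\ldots,w_t)$ has one of four shapes: it is identically zero; it has a single entry equal to $1$; it has exactly two entries equal to $1$; or it has a single entry equal to $2$. These four shapes are exhaustive and mutually exclusive. So it suffices to show that they give intersection sizes $0$, $1$, $2$ and $q+1$ respectively. Since these four numbers are distinct (as $q\ge2$, so $q+1\ge3$), every ``if'' direction then upgrades to ``if and only if''.

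The key observation is that $\ell\cap\calL_i=\calL_{U_i\cap W}$. Indeed, a point $\langle\ubf\rangle_{\FQ}$ with $\ubf\in U_i\setminus\{\zerobf\}$ lies on $\ell$ exactly when $\ubf\in W$. Now $\calL_i$ is scattered by \cref{cor:scattered-disjoint}, and a subspace of a scattered $\Fq$-subspace is again scattered. Therefore $|\ell\cap\calL_i|=\theta_{w_i-1}$, which equals $0$, $1$ or $q+1$ for $w_i=0,1,2$. The sets $\calL_i$ are pairwise disjoint, again by \cref{cor:scattered-disjoint}, so
\[
|\ell\cap\calL|=\sum_{i=1}^t\theta_{w_i-1}.
\]
Evaluating this on the four shapes gives $0$, $1$, $2$ and $q+1$. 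In case (iv), the description of $\ell\cap\calL$ as the $q+1$ points determined by the two-dimensional $\Fq$-space $U_i\cap W$ is exactly the identity $\ell\cap\calL_i=\calL_{U_i\cap W}$, together with the vanishing of all the other terms.

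The final sentence of the statement follows by reading off $\wt_\calU(\ell)=\sum_i w_i$ in each case: it is $0$ for external lines, $1$ for tangent lines, and $2$ for both $2$-secants and $(q+1)$-secants. I do not expect a real obstacle. The only point that needs care is justifying that $|\calL_{U_i\cap W}|=\theta_{w_i-1}$. For this I will use that every point of $\calL_{U_i\cap W}$ has weight exactly $1$ in $U_i$, so the $\theta_{w_i-1}$ one-dimensional $\Fq$-subspaces of $U_i\cap W$ give distinct points.
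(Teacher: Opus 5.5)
Your proposal is correct and follows the paper's proof essentially step for step. Both arguments use $\ell\cap\calL_i=\calL_{U_i\cap W}$ together with scatteredness and pairwise disjointness from \cref{cor:scattered-disjoint} to get $|\ell\cap\calL|=\sum_i\theta_{\wt_{U_i}(\ell)-1}$, then invoke \eqref{eq:MSRD3} to reduce to the four weight patterns, whose pairwise distinct sizes $0,1,2,q+1$ give all the equivalences.
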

\begin{proof}
Let $\ell=\PG(W)$. For every $i$, the points of $\calL_i$ on $\ell$ are the points $\langle\ubf\rangle_{\FQ}$ with $\ubf\in(U_i\cap W)\setminus\{\zerobf\}$, that is, $\ell\cap\calL_i=\calL_{U_i\cap W}$. Since $\calL_i$ is scattered, so is $\calL_{U_i\cap W}$, and therefore $|\ell\cap\calL_i|=\theta_{\wt_{U_i}(\ell)-1}$. As the linear sets are pairwise disjoint,
\[
|\ell\cap\calL|=\sum_{i=1}^t\theta_{\wt_{U_i}(\ell)-1}.
\]
By the MSRD condition~\eqref{eq:MSRD3}, the weights $\wt_{U_i}(\ell)$ are non-negative integers with sum at most $2$, so either they are all zero, or exactly one of them equals $1$, or exactly two of them equal $1$, or exactly one of them equals $2$, and the others vanish. Since $\theta_{-1}=0$, $\theta_0=1$ and $\theta_1=q+1$, these four cases give $|\ell\cap\calL|=0,1,2,q+1$, respectively. These values are pairwise distinct, and the statement follows.
\end{proof}

For each $j\geq 0$, let $a_j$ denote the number of lines of $\PG(2,Q)$ meeting $\calL$ in exactly $j$ points.
These numbers are called the \emph{characters} of $\calL$.
By \cref{lem:line-intersections}, the only possible non-zero characters are $a_0,a_1,a_2$ and $a_{q+1}$.
The following theorem determines them explicitly in terms of the block profile of the projective MSRD system.
\begin{theorem}\label{thm:characters}
The characters $a_0,a_1,a_2$ and $a_{q+1}$ of $\calL$ satisfy
\begin{gather*}
a_{q+1}=\sum_{i=1}^t\binom{n_i}{2}_q,\qquad
a_2=\sum_{1\le i<j\le t}\theta_{n_i-1}\theta_{n_j-1},\\
a_1=M(Q+1)-2a_2-(q+1)a_{q+1},\qquad
a_0=Q^2+Q+1-M(Q+1)+a_2+q\,a_{q+1}.
\end{gather*}
\end{theorem}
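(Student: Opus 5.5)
The plan is a standard double counting built on \cref{lem:line-intersections}. The $(q+1)$-secants and $2$-secants are counted directly. The remaining two characters then follow from the two classical equations: one counting all lines, and one counting incident point--line pairs $(P,\ell)$ with $P\in\calL$.

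\emph{Step 1: $a_{q+1}$.} By \cref{lem:line-intersections}(iv), a line $\ell=\PG(W)$ is a $(q+1)$-secant exactly when $\dim_{\Fq}(U_i\cap W)=2$ for a unique $i$. So I will set up a map from $(q+1)$-secants to pairs $(i,V)$, where $V\le U_i$ is a $2$-dimensional $\Fq$-subspace, by sending $\ell\mapsto(i,U_i\cap W)$.
\begin{itemize}
\item Conversely, each such $V$ spans over $\FQ$ a line $\langle V\rangle_{\FQ}$. It is a line and not a point because $\calL_i$ is scattered, so two $\Fq$-independent vectors of $U_i$ are $\FQ$-independent.
\item The weight of $\langle V\rangle_{\FQ}$ is at least $2$, hence exactly $2$ by \eqref{eq:MSRD3}, and $U_i\cap\langle V\rangle_{\FQ}=V$.
\end{itemize}
The two maps are mutually inverse, which gives $a_{q+1}=\sum_i\binom{n_i}{2}_q$.

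\emph{Step 2: $a_2$.} By \cref{lem:line-intersections}(iii), a $2$-secant meets exactly two distinct linear sets $\calL_i,\calL_j$, in one point each. I will map each unordered pair of points $P\in\calL_i$, $Q'\in\calL_j$ with $i<j$ to the line $PQ'$; this pair is unique because the line meets $\calL$ in exactly two points. Conversely, the line $PQ'$ has $\wt_{U_i}\ge1$ and $\wt_{U_j}\ge1$, so by \eqref{eq:MSRD3} both equal $1$ and all other weights vanish; hence it is a $2$-secant. The map is therefore a bijection. Since $|\calL_i|=\theta_{n_i-1}$ by \cref{cor:scattered-disjoint}, we get $a_2=\sum_{i<j}\theta_{n_i-1}\theta_{n_j-1}$.

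\emph{Step 3: $a_1$ and $a_0$.} Count pairs $(P,\ell)$ with $P\in\calL$ and $P\in\ell$. Each of the $M$ points lies on $Q+1$ lines, so
\[
M(Q+1)=a_1+2a_2+(q+1)a_{q+1},
\]
which gives $a_1$. Counting all lines gives $a_0+a_1+a_2+a_{q+1}=Q^2+Q+1$. Substituting $a_1$ yields
\[
a_0=Q^2+Q+1-M(Q+1)+a_2+q\,a_{q+1}.
\]

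The only delicate point is the bijectivity in Step 1, which I expect to be the main obstacle. It relies on two facts: scatteredness, which guarantees that a $2$-dimensional $\Fq$-subspace really spans an $\FQ$-line, and the MSRD bound, which forces that line's intersection with $U_i$ to be exactly $V$ and prevents the same line from arising from two different blocks or two different subspaces. The rest is routine arithmetic.
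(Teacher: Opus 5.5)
Your proposal is correct and follows essentially the same route as the paper. It counts the $(q+1)$-secants via the pairs $(i,V)$ with $V\le U_i$ two-dimensional, counts the $2$-secants via pairs of points in distinct linear sets, and then obtains $a_1$ and $a_0$ from the incidence count and the line count. The paper additionally records the triple count $2a_2+q(q+1)a_{q+1}=M(M-1)$, but this is only a consistency check (see the remark after the theorem) and is not needed, so omitting it is fine.
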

\begin{proof}
Our aim is to show that the following relations hold
\begin{gather*}
a_0+a_1+a_2+a_{q+1}=Q^2+Q+1,\\
a_1+2a_2+(q+1)a_{q+1}=M(Q+1),\\
2a_2+q(q+1)a_{q+1}=M(M-1).
\end{gather*}
By \cref{lem:line-intersections}, every line meets $\calL$ in $0$, $1$, $2$ or $q+1$ points, and $\PG(2,Q)$ has $Q^2+Q+1$ lines, which gives the first identity. The following identities are obtained by standard double-counting arguments. Recall that every point in $\PG(2,Q)$ lies on $Q+1$ lines. Counting the incident pairs $(P,\ell)$ with $P\in\calL$ and $\ell$ a line through $P$, first by lines and then by points, gives the second identity. For the third one, we count the triples $(P,R,\ell)$, where $P,R\in\calL$ are distinct and $\ell$ is the line $PR$. Since a line meeting $\calL$ in $j$ points contributes with $j(j-1)$ triples, while every ordered pair of distinct points determines exactly one line, we conclude.

Consider now the $(q+1)$-secants to $\calL$. If $A\le U_i$ is a two-dimensional $\Fq$-subspace, then $\langle A\rangle_{\FQ}$ is two-dimensional over $\FQ$, otherwise, it would contain a point of weight at least two, contradicting the fact that \(\calL_i\) is scattered. Hence, $\ell_A:=\PG(\langle A\rangle_{\FQ})$ is a line with $\wt_{U_i}(\ell_A)\ge2$. By~\eqref{eq:MSRD3}, equality holds, therefore $U_i\cap\langle A\rangle_{\FQ}=A$, and $\ell_A$ is a $(q+1)$-secant by \cref{lem:line-intersections}(iv). 
Therefore, the only linear set in the considered projective MSRD system having weight two with respect to $\ell_A$ is $\mathcal{L}_i$. 
Conversely, every $(q+1)$-secant to $\calL$ is a $(q+1)$-secant to a unique $\calL_i$. Therefore $a_{q+1}=\sum_i\binom{n_i}{2}_q$.

Next, let $P\in\calL_i$ and $R\in\calL_j$ with $i\neq j$. The line $PR$ has $\wt_{U_i}(PR)\geq1$ and $\wt_{U_j}(PR)\geq1$, so it is a $2$-secant by \cref{lem:line-intersections}, meeting $\calL$ exactly in $\{P,R\}$. Conversely, by \cref{lem:line-intersections}(iii), every $2$-secant meets two distinct linear sets in one point each. Hence the $2$-secants are in bijection with the unordered pairs of points lying in distinct linear sets, and $a_2=\sum_{i<j}\theta_{n_i-1}\theta_{n_j-1}$. The values of $a_1$ and $a_0$ follow from the first two identities.
\end{proof}

\begin{remark}\label{rem:characters}
The third identity in the proof of \cref{thm:characters} is consistent with the explicit values of $a_2$ and $a_{q+1}$, because $q(q+1)\binom{n}{2}_q=\theta_{n-1}(\theta_{n-1}-1)$. If $t=1$, then $a_2=0$ and $a_{q+1}=\binom{n_1}{2}_q$, so every secant line to a single linear set that is scattered with respect to hyperplanes meets it in an $\Fq$-subline. If $n_1=\cdots=n_t=1$, then each $\calL_i$ is a point, $a_{q+1}=0$ and $a_2=\binom{t}{2}$, and we recover the usual character equations of a $t$-arc of $\PG(2,Q)$.
\end{remark}

As a consequence, the weight distribution of a three-dimensional MSRD code only depends on $q$, $m$ and its profile.

\begin{corollary}\label{cor:weight-distribution}
Let $\calC$ be a non-degenerate $[\nbf,3]_{Q/q}$ MSRD code, and for $i \in \{0,\ldots,N\}$ let $w_i$ be the number of codewords of $\calC$ of sum-rank weight $i$. Then $w_0=1$, $w_i=0$ for $i \in \{1,\ldots,N-3\}$, and
\[
w_{N-2}=(Q-1)(a_2+a_{q+1}),\qquad w_{N-1}=(Q-1)\,a_1,\qquad w_N=(Q-1)\,a_0,
\]
where $a_0,a_1,a_2,a_{q+1}$ are the numbers of \cref{thm:characters} for the associated system.
\end{corollary}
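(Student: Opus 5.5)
The plan is to combine the weight formula \eqref{eq:sr-weight} of \cref{prop:msrd-geometric} with the classification of lines in \cref{lem:line-intersections}, and then count codewords line by line.

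\textbf{Step 1 (codewords versus lines).} Every codeword is $\xbf G$ for a unique $\xbf\in\FQ^3$, because $G$ has rank $3$. The zero vector gives the zero codeword, so $w_0=1$. For $\xbf\neq\zerobf$, the space $\xbf^\perp$ is a two-dimensional $\FQ$-subspace, that is, a line $\ell_\xbf$ of $\PG(2,Q)$. By \eqref{eq:sr-weight},
\[
\wtSR(\xbf G)=N-\wt_\calU(\ell_\xbf).
\]
The map $\xbf\mapsto\ell_\xbf$ from $\FQ^3\setminus\{\zerobf\}$ onto the set of lines is surjective, and each fibre consists of exactly $Q-1$ vectors, namely the non-zero $\FQ$-multiples of $\xbf$. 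Hence
\[
w_{N-j}=(Q-1)\cdot\bigl|\{\ell:\wt_\calU(\ell)=j\}\bigr|.
\]

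\textbf{Step 2 (possible weights of lines).} By the MSRD condition \eqref{eq:MSRD3}, $\wt_\calU(\ell)\in\{0,1,2\}$ for every line $\ell$. Therefore every non-zero codeword has weight in $\{N-2,N-1,N\}$, which gives $w_i=0$ for $1\le i\le N-3$.

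\textbf{Step 3 (translating to characters).} The final clause of \cref{lem:line-intersections} identifies the three classes of lines:
\begin{itemize}[label=$\cdot$]
\item lines of weight $0$ are the external lines, of which there are $a_0$;
\item lines of weight $1$ are the tangent lines, of which there are $a_1$;
\item lines of weight $2$ are exactly the $2$-secants and the $(q+1)$-secants, of which there are $a_2+a_{q+1}$.
\end{itemize}
Substituting these counts into the formula of Step 1 yields the three stated expressions for $w_{N-2}$, $w_{N-1}$ and $w_N$.

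The only delicate point is in Step 3: one must check that weight $2$ corresponds to both the $2$-secant and the $(q+1)$-secant classes, and to no other lines. This holds because the cases of \cref{lem:line-intersections} are mutually exclusive and exhaustive, so each line falls into exactly one class. Since the characters are given in \cref{thm:characters} purely in terms of $q$, $Q$ and $\nbf$, the whole weight distribution depends only on $q$, $m$ and the profile.
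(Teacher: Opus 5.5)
Your proposal is correct and follows essentially the same argument as the paper. Both proofs combine the weight formula \eqref{eq:sr-weight}, the $(Q-1)$-to-one correspondence $\xbf\mapsto\PG(\xbf^\perp)$ onto lines, the MSRD bound $\wt_\calU(\ell)\le 2$, and the classification of lines in \cref{lem:line-intersections} by weight $0$, $1$ or $2$ as external, tangent, or $2$- and $(q+1)$-secants.
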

\begin{proof}
Let $G$ be a generator matrix of $\calC$ and $\calU$ the associated system. Every non-zero codeword of $\calC$ can be expressed as $\xbf G$ for a unique $\xbf\in\FQ^3\setminus\{\zerobf\}$. By~\eqref{eq:sr-weight} its weight is $N-\wt_\calU(\ell_\xbf)$, where $\ell_\xbf:=\PG(\xbf^\perp)$. Since by the MSRD condition~\eqref{eq:MSRD3} $\wt_\calU(\ell_\xbf)\le2$, all non-zero codewords have weight at least $N-2$. The map $\xbf\mapsto\ell_\xbf$ is $(Q-1)$-to-one onto the set of lines of $\PG(2,Q)$, and by \cref{lem:line-intersections} the codeword $\xbf G$ has weight $N-2$ if $\ell_\xbf$ is a $2$- or a $(q+1)$-secant to $\calL$, weight $N-1$ if $\ell_\xbf$ is tangent to $\calL$, and weight $N$ if $\ell_\xbf$ is external to $\calL$.
\end{proof}

\begin{remark}\label{rem:weight-distribution}
The weight distribution of MSRD codes was previously studied in~\cite[Theorem~VI.5 and Remark~VI.11]{byrne2021fundamental} for matrix codes. \Cref{cor:weight-distribution} explicitly derives the weight distributions in the case of $\FQ$-linear codes of dimension three for arbitrary profiles. For instance, for a hyperoval ($\nbf=(1,\ldots,1)$ and $N=Q+2$) one gets $a_1=0$ and $a_0=Q(Q-1)/2$, hence the only non-zero weights of the code are $Q$ and $Q+2$.
\end{remark}

We now refine the preceding counting arguments by considering the pencil of lines through a fixed point of $\calL$.
The MSRD condition allows us to determine the number of lines of each intersection type in this pencil, leading to a bound on $M$ that depends on the rank of the block containing the chosen point.

\begin{lemma}\label{lem:pencil}
Let $P\in\calL_i$. Among the $Q+1$ lines through $P$:
\begin{enumerate}[label=\textup{(\roman*)}]
\item exactly $\theta_{n_i-2}$ are $(q+1)$-secants to $\calL$, and each of them meets $\calL$ in $q+1$ points of $\calL_i$;
\item exactly $\sum_{j\neq i}\theta_{n_j-1}$ are $2$-secants to $\calL$, and each of them meets $\calL$ in $P$ and in one point of $\calL\setminus\calL_i$;
\item none is external to $\calL$, and the remaining
\[
T_i:=Q+1-\sum_{j\neq i}\theta_{n_j-1}-\theta_{n_i-2}=Q+1-M+q^{n_i-1}
\]
lines are tangent to $\calL$.
\end{enumerate}
In particular, $T_i\geq0$, that is,
\begin{equation}\label{eq:pencil}
\sum_{j\neq i}\theta_{n_j-1}+\theta_{n_i-2}\le Q+1,\qquad\text{or equivalently}\qquad M\le Q+1+q^{n_i-1},
\end{equation}
and the number of tangent lines through $P$ only depends on the rank of the block containing $P$.
\end{lemma}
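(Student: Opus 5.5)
We classify the $Q+1$ lines through $P=\langle\ubf\rangle_{\FQ}$, $\ubf\in U_i$, using \cref{lem:line-intersections}. Every line through $P$ has $\wt_{U_i}(\ell)\ge1$, hence it is not external to $\calL$. This already gives the first assertion of (iii). By \cref{lem:line-intersections}, each remaining line through $P$ is of exactly one of three types: a $(q+1)$-secant with $\wt_{U_i}(\ell)=2$, a $2$-secant meeting $\calL_i$ only in $P$ and some other $\calL_j$ in one point, or a tangent line. So it suffices to count the first two types and subtract.

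For (i), we use the bijection established in the proof of \cref{thm:characters}. The $(q+1)$-secants whose weight-two block is $U_i$ correspond bijectively to the two-dimensional $\Fq$-subspaces $A\le U_i$, via $A\mapsto\ell_A$, with $U_i\cap\langle A\rangle_{\FQ}=A$. Such a line passes through $P$ if and only if $\ubf\in A$, because $P\in\ell_A$ forces $P\in\calL_{U_i\cap\langle A\rangle_{\FQ}}=\calL_A$. Since $\calL_i$ is scattered, $\ubf$ is determined up to $\Fq^*$, so this condition is $\langle\ubf\rangle_{\Fq}\le A$. The two-dimensional subspaces of the $n_i$-dimensional space $U_i$ containing a fixed one-dimensional subspace are counted by the points of $U_i/\langle\ubf\rangle_{\Fq}$, which number $\theta_{n_i-2}$. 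Any $(q+1)$-secant through $P$ must have its weight-two block equal to $U_i$, because $\wt_{U_i}\ge1$ and the weights of the other blocks vanish. Hence the count is exactly $\theta_{n_i-2}$, and each such line meets $\calL$ in $\calL_A\subseteq\calL_i$.

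For (ii), each point $R\in\calL_j$ with $j\neq i$ gives the line $PR$, which is a $2$-secant meeting $\calL$ exactly in $\{P,R\}$, as in the proof of \cref{thm:characters}. Distinct points $R$ give distinct lines for this reason. Conversely, a $2$-secant through $P$ has weight $1$ on $U_i$ and weight $1$ on exactly one other block, so it arises in this way. The count is therefore $\sum_{j\neq i}\theta_{n_j-1}$, using disjointness and scatteredness from \cref{cor:scattered-disjoint}.

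Subtracting these counts from $Q+1$ gives $T_i$. The closed form follows from $M=\sum_{j\neq i}\theta_{n_j-1}+\theta_{n_i-1}$ together with $\theta_{n_i-1}-\theta_{n_i-2}=q^{n_i-1}$, which is \eqref{eq:theta-difference} with $a=1$; this also covers the convention $\theta_{-1}=0$ when $n_i=1$. Then $T_i\ge0$ is \eqref{eq:pencil}, and $T_i$ visibly depends only on $n_i$. The only delicate step is (i): showing that the $(q+1)$-secants through $P$ correspond exactly to the subspaces $A$ containing $\ubf$. This relies on $U_i\cap\langle A\rangle_{\FQ}=A$ and on scatteredness, so that $P$ determines $\langle\ubf\rangle_{\Fq}$.
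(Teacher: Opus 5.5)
Your proof is correct and follows essentially the same route as the paper. You count the $(q+1)$-secants through $P$ via the two-dimensional $\Fq$-subspaces of $U_i$ containing $\ubf$ (using $U_i\cap\langle A\rangle_{\FQ}=A$), count the $2$-secants via the bijection $R\mapsto PR$ on $\calL\setminus\calL_i$, and get $T_i$ by subtraction using $\theta_{n_i-1}-\theta_{n_i-2}=q^{n_i-1}$; the only difference is cosmetic, e.g.\ you obtain surjectivity in (ii) directly from the weight pattern of a $2$-secant, rather than by ruling out a second point of $\calL_i$ on it.
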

\begin{proof}
Since $\calL_i$ is scattered, $P=\langle\ubf\rangle_{\FQ}$ for a vector $\ubf\in U_i$ which is unique up to $\Fq^*$-multiples.

(i) If $A\le U_i$ is a two-dimensional $\Fq$-subspace containing $\ubf$, then, as in the proof of \cref{thm:characters}, $\ell_A=\PG(\langle A\rangle_{\FQ})$ is a $(q+1)$-secant through $P$ with $U_i\cap\langle A\rangle_{\FQ}=A$. Conversely, if $\ell=\PG(W)$ is a $(q+1)$-secant through $P$, then by \cref{lem:line-intersections}(iv) there is a unique $j$ such that $\wt_{U_j}(\ell)=2$. Since $\ubf\in U_i\cap W$, we have that $j=i$. So, $A:=U_i\cap W$ is a two-dimensional $\Fq$-subspace of $U_i$ containing $\ubf$ and $\ell=\ell_A$. Therefore the $(q+1)$-secants through $P$ are in bijection with the two-dimensional $\Fq$-subspaces of $U_i$ containing $\langle\ubf\rangle_{\Fq}$, that is, with the one-dimensional subspaces of $U_i/\langle\ubf\rangle_{\Fq}$. Their number is $\theta_{n_i-2}$.

(ii) Let $R\in\calL_j$ with $j\neq i$. By \cref{lem:line-intersections}, the line $PR$ is a $2$-secant to $\calL$, meeting it precisely in $\{P,R\}$. Thus, the map $R\mapsto PR$ defines an injection from $\calL\setminus\calL_i$ into the set of $2$-secants through $P$. To show that the map is also surjective, observe that every $2$-secant through $P$ contains a unique second point $R'\in\calL$. In particular, $R'\notin\calL_i$, since otherwise $\wt_{U_i}(PR')\geq2$, forcing $PR'$ to be a $(q+1)$-secant to $\calL$. Hence, the map is bijective, and the number of $2$-secants through $P$ is $|\calL\setminus\calL_i|=\sum_{j\neq i}\theta_{n_j-1}$.

(iii) Every line through $P$ is non-external and, by \cref{lem:line-intersections}, is either a tangent or one of the lines counted in (i) and (ii). Recalling that the number of lines through a point in $\PG(2,Q)$ is $Q+1$, this yields the first expression for $T_i$. The second follows from $M=\sum_j\theta_{n_j-1}$ and \eqref{eq:theta-difference}.
\end{proof}

Rewriting~\eqref{eq:pencil} in terms of the total length of $\calU$ gives the following inequality, which will play a central role in deriving the next results of this section.
\begin{corollary}\label{cor:defect}
For every $i\in\{1,\ldots,t\}$, put
\[
\Delta_i:=\sum_{j\neq i}\bigl(\theta_{n_j-1}-n_j\bigr)+\bigl(\theta_{n_i-2}-(n_i-1)\bigr).
\]
Then $\Delta_i\geq0$ and
\begin{equation}\label{eq:defect}
N\le Q+2-\Delta_i.
\end{equation}
\end{corollary}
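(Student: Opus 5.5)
This is a direct rewriting of inequality~\eqref{eq:pencil} from \cref{lem:pencil}. The plan is to subtract the total length $N=\sum_j n_j$ from both sides and then group the terms block by block.

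First, $\Delta_i\ge0$ follows term by term. For $j\ne i$ we have $\theta_{n_j-1}-n_j\ge (q-1)\max\{n_j-1,0\}\ge0$ by \eqref{eq:theta-lower}. For the block $i$ we need $\theta_{n_i-2}\ge n_i-1$. If $n_i=1$, this reads $\theta_{-1}=0=0$. If $n_i\ge2$, apply \eqref{eq:theta-lower} with $n=n_i-1\ge1$, which gives $\theta_{n_i-2}\ge n_i-1$.

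Second, the first form of \eqref{eq:pencil} gives
\[
\sum_{j\neq i}\theta_{n_j-1}+\theta_{n_i-2}\le Q+1 .
\]
Subtract $\sum_{j\ne i}n_j+(n_i-1)=N-1$ from both sides. The left-hand side becomes exactly $\Delta_i$ and the right-hand side becomes $Q+2-N$. Hence $\Delta_i\le Q+2-N$, which is \eqref{eq:defect}.

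There is no real obstacle. The only point needing care is the boundary case $n_i=1$, where the convention $\theta_{-1}=0$ is used both in \cref{lem:pencil} and in the nonnegativity check. The term $\theta_{n_i-2}-(n_i-1)$ is then $0-0=0$, so the argument goes through uniformly.
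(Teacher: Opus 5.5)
Your proposal is correct and follows essentially the same route as the paper: you obtain nonnegativity of each summand of $\Delta_i$ from \eqref{eq:theta-lower}, then subtract $N-1=\sum_{j\neq i}n_j+(n_i-1)$ from both sides of \eqref{eq:pencil} to get $\Delta_i\le Q+2-N$. Your separate treatment of $n_i=1$ is fine but not needed, since \eqref{eq:theta-lower} already covers $n=0$ and gives $\theta_{-1}\ge 0$.
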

\begin{proof}
Each summand of $\Delta_i$ is non-negative by~\eqref{eq:theta-lower}. Moreover,
\[
\sum_{j\neq i}\theta_{n_j-1}+\theta_{n_i-2}=\sum_{j\neq i}n_j+(n_i-1)+\Delta_i=N-1+\Delta_i,
\]
hence~\eqref{eq:pencil} gives $N-1+\Delta_i\le Q+1$.
\end{proof}

\cref{cor:defect} yield explicit bounds on the total length $N$ of three-dimensional MSRD codes.
In particular, the following theorem shows that the presence of non-trivial blocks imposes stronger restrictions than those arising from the classical theory of arcs.
\begin{theorem}\label{thm:length-dim3}
Let $\calU$ be an $[\nbf,3]_{Q/q}$ MSRD system and put $e:=N-t=\sum_i(n_i-1)$. Then the following hold.
\begin{enumerate}[label=\textup{(\roman*)}]
\item $N\le Q+2$.
\item If $\calU$ is not classical and $t\geq2$, then $N\le Q+3-q$.
\item If $n_i=1$ for some $i$, then $N\le Q+2-(q-1)e$, or equivalently, $t+qe\le Q+2$.
\end{enumerate}
\end{theorem}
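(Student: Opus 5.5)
The plan is to derive all three items from \cref{cor:defect} by choosing the index $i$ well and bounding $\Delta_i$ from below with \eqref{eq:theta-lower}. For item (i), nothing new is needed: it is \cref{cor:classical-transfer}(ii). Alternatively, it follows from \eqref{eq:defect} together with $\Delta_i\ge0$ for any $i$.

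For item (iii), pick $i$ with $n_i=1$. Then the last summand of $\Delta_i$ is $\theta_{-1}-0=0$. For $j\neq i$, inequality \eqref{eq:theta-lower} gives $\theta_{n_j-1}-n_j\ge (q-1)(n_j-1)$. Summing over $j\ne i$, and using that the term for $j=i$ contributes $n_i-1=0$ to $e$, we obtain $\Delta_i\ge(q-1)e$. Substituting into \eqref{eq:defect} yields $N\le Q+2-(q-1)e$. Since $N=t+e$, this is the same as $t+qe\le Q+2$.

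For item (ii), assume $\calU$ is not classical and $t\ge2$, so some block has $n_j\ge2$. I would split into two cases.
\begin{itemize}
\item If some block has rank $1$, apply (iii). Since $e\ge1$, we get $N\le Q+2-(q-1)=Q+3-q$.
\item If all $n_j\ge2$, choose any $i$ and any $j\ne i$ (possible since $t\ge2$). Then $\theta_{n_j-1}-n_j\ge(q-1)(n_j-1)\ge q-1$. All other summands of $\Delta_i$ are nonnegative, including the $i$-th one, which is $\theta_{n_i-2}-(n_i-1)\ge0$ by \eqref{eq:theta-lower}. Hence $\Delta_i\ge q-1$, and again $N\le Q+3-q$.
\end{itemize}
The only point needing care is checking that \eqref{eq:theta-lower} applies at $n=n_i-1\ge0$ and $n=0$. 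Beyond that there is no real obstacle.
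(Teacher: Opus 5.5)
Your proof is correct and follows the paper's argument: every item comes from \cref{cor:defect}, with $\Delta_i$ bounded below via \eqref{eq:theta-lower}, and your (i) and (iii) match the paper's exactly. The case split in (ii) is unnecessary. The paper simply takes an index $h$ with $n_h\ge2$ and any $i\ne h$, which is your second case, and this works whether or not some block has rank one.
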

\begin{proof}
Part (i) is a direct consequence of \cref{cor:defect}, since $\Delta_i\geq0$. For (ii), let $h$ be an index with $n_h\geq2$ and choose $i\neq h$. All the summands of $\Delta_i$ are non-negative, and the summand corresponding to $h$ satisfies $\theta_{n_h-1}-n_h\geq(q-1)(n_h-1)\geq q-1$ by~\eqref{eq:theta-lower}. Hence $\Delta_i\geq q-1$, and~\eqref{eq:defect} gives $N\le Q+3-q$. For (iii), let $n_i=1$. Then $\theta_{n_i-2}-(n_i-1)=0$ and, by~\eqref{eq:theta-lower}, $\theta_{n_j-1}-n_j\geq(q-1)(n_j-1)$ for every $j\neq i$, so that
\[
\Delta_i\geq(q-1)\sum_{j\neq i}(n_j-1)=(q-1)e.
\]
Therefore,~\eqref{eq:defect} gives $N\le Q+2-(q-1)e$, and substituting $N=t+e$ we conclude that $t+qe\le Q+2$.
\end{proof}

\begin{remark}\label{rem:length-dim3}
\begin{enumerate}[label=\textup{(\alph*)}]
\item Part (i) of \cref{thm:length-dim3} follows from \cref{cor:classical-transfer} with $k=3$. Moreover, the same corollary also gives $N\le Q+1$ when $Q$ is odd.
\item Part (ii) imposes a restriction on the block structure of the system. If $t\geq2$ and $Q+2-N<q-1$, then the system is classical. Thus a single block of rank at least two reduces the maximum
possible total length by at least $q-1$.
\end{enumerate}
\end{remark}
In \cref{sec:general}, we will establish analogous results in arbitrary dimension (see \cref{cor:length-general,rem:length-general} below).

\begin{corollary}\label{cor:extremal-dim3}
Let $\calU$ be an $[\nbf,3]_{Q/q}$ MSRD system and let $\calC$ be an associated code with $\calU$.
\begin{enumerate}[label=\textup{(\roman*)}]
\item If $N=Q+2$, then $\calU$ is classical, $\calL$ is a hyperoval of $\PG(2,Q)$, and $Q$ is even.
\item If $q$ is odd, $t\geq 2$ and $\calU$ is not classical, then $N\le Q$. If moreover $q\geq5$, then $N\le Q-2$. 
\item If $q$ is odd and $N=Q+1$, then $\calU$ is classical and $\calL$ is a conic of $\PG(2,Q)$. In this case, $\calC$ is a $[Q+1,3,Q-1]_Q$ MDS code which is monomially equivalent to an extended Reed--Solomon code of dimension $3$ over $\FQ$.
\end{enumerate}
\end{corollary}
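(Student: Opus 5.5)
We derive all three parts from \cref{thm:length-dim3}, \cref{cor:defect} and the classical results collected in \cref{thm:plane-arcs,cor:segre-codes}.

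\emph{Part (i).} Suppose $N=Q+2$. We first rule out $t=1$: by \cref{rem:ni-le-m}, a single-block MSRD system with $k=3<N$ has $N\le m$, while $m<Q+2$. Hence $t\geq2$. If $\calU$ were not classical, \cref{thm:length-dim3}(ii) would give $N\le Q+3-q<Q+2$, since $q\ge2$. So $\calU$ is classical, and $\calL$ is a set of $t=N=Q+2$ points. By \cref{lem:msrd-arc} these points form an arc, hence a hyperoval, and \cref{thm:plane-arcs}(ii) forces $Q$ to be even.

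\emph{Part (ii).} Let $q$ be odd, $t\ge2$ and $\calU$ non-classical. \cref{thm:length-dim3}(ii) gives $N\le Q+3-q\le Q$, since $q\ge3$. For $q\ge5$ it gives $N\le Q+3-q\le Q-2$ directly. The only delicate point is the $q=3$ case, where we just get $N\le Q$, which is the claimed bound.

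\emph{Part (iii).} Let $q$ be odd and $N=Q+1$. The case $t=1$ is excluded as in (i), since $m<Q+1$. If $t\ge2$ and $\calU$ is not classical, part (ii) gives $N\le Q$, a contradiction. So $\calU$ is classical, and by \cref{lem:msrd-arc} $\calL$ is a $(Q+1)$-arc of $\PG(2,Q)$ with $Q$ odd. By Segre's theorem (\cref{thm:plane-arcs}(iii)) it is a conic. An associated code $\calC$ has profile $(1,\ldots,1)$, so it is an MDS code with parameters $[Q+1,3,Q-1]_Q$. \cref{cor:segre-codes} then gives monomial equivalence to an extended Reed--Solomon code.

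The main obstacle is only bookkeeping. We must handle the $t=1$ case via \cref{rem:ni-le-m}, and check that $q\ge2$ (respectively $q\ge3$) makes the inequalities strict where they are needed.
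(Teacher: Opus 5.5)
Your proposal is correct and follows essentially the same route as the paper. You exclude $t=1$ via $N\le m$, use \cref{thm:length-dim3}(ii) to force the system to be classical, and then apply \cref{thm:plane-arcs} and \cref{cor:segre-codes}; your explicit treatment of $t=1$ in (iii), which the paper leaves implicit when invoking (ii), is a welcome bit of extra care.
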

\begin{proof}
If $\calU$ is non-classical, then either $t=1$ and $N\le m<Q$, or $t\geq2$ and $N\le Q+3-q$ by \cref{thm:length-dim3}(ii). In either case, $N\le Q+1$. 
Therefore, if $N=Q+2$ then $\mathcal{U}$ is classical.
In this case, each $\calL_i$ consists of a single point and $t=N$. Moreover, the MSRD condition~\eqref{eq:MSRD3} implies that no line in $\PG(2,Q)$ contains three points of $\calL$. Thus, $\calL$ is an $(Q+2)$-arc, which forces $Q$ to be even by \cref{thm:plane-arcs}(ii). This proves (i).

By \cref{thm:length-dim3}(ii) we have $N\leq Q+3-q$. If $q$ is odd, then $Q+3-q\le Q $, and the tighter bound $Q+3-q\le Q-2$ applies for $q\geq5$. Hence, (ii) follows. 

To prove (iii), suppose that $q$ is odd. Then (ii) implies that $\calU$ is classical. Consequently, $\calL$ is a $(Q+1)$-arc and hence a conic by \cref{thm:plane-arcs}(iii). Since every block has length one, $\calC$ is an MDS code over $\FQ$, and the final statement follows from \cref{cor:segre-codes}.
\end{proof}

The bounds above concern the total length $N$ of an MSRD system. In the following result, we show that the number of blocks also imposes strong restrictions on the structure of the system, forcing it to be classical when the number of blocks is sufficiently large.
\begin{theorem}\label{thm:many-blocks}
Let $\calU$ be an $[\nbf,3]_{Q/q}$ MSRD system with $t\geq Q+2-q$. Then $\calU$ is classical, and $\calL$ is a $t$-arc of $\PG(2,Q)$.
\end{theorem}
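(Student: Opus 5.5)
The plan is to argue by contradiction using the pencil bound of \cref{lem:pencil}, in the form \eqref{eq:defect} of \cref{cor:defect}, applied at a suitably chosen block. Suppose $\calU$ is not classical, so there is an index $h$ with $n_h\geq2$. First dispose of $t=1$: then $Q+2-q\le1$ is impossible, since $Q\geq q^2\geq 2q$ gives $Q+2-q\geq q+2>1$. So $t\geq2$ and we may pick an index $i\neq h$.

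Next apply \cref{cor:defect} with this $i$, but keep track of $t$ rather than $N$. Since $\sum_{j\neq i}\theta_{n_j-1}+\theta_{n_i-2}\le Q+1$ by \eqref{eq:pencil}, bound each summand from below. Every $j\neq i$ with $j\neq h$ contributes $\theta_{n_j-1}\geq1$. The index $h$ contributes $\theta_{n_h-1}\geq q+1$. The term $\theta_{n_i-2}$ is at least $0$. This gives
\[
(t-2)+(q+1)\le Q+1,
\]
that is, $t\le Q+2-q$.

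This alone does not contradict the hypothesis $t\geq Q+2-q$; it leaves the borderline case $t=Q+2-q$. That case is the main obstacle, and I would attack it by tightening the choice of $i$. If some block other than $h$ has $n_i\geq2$, choose that block as $i$. Then $\theta_{n_i-2}\geq1$, and the left-hand side becomes at least $(t-2)+(q+1)+1$, giving $t\le Q+1-q$, a contradiction. Otherwise $h$ is the unique block of rank at least two, and for every $i\neq h$ we have $n_i=1$ and $\theta_{n_i-2}=0$. Equality throughout then forces $n_h=2$ and $T_i=0$ for every point of every $\calL_i$ with $i\neq h$.

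To rule out this configuration, I would apply \cref{lem:pencil} at a point $P\in\calL_h$. The number of tangents there is $T_h=Q+1-M+q$. Here $M=(t-1)+(q+1)=Q+2$, so $T_h=q-1\geq1$. Then count tangent lines globally using \cref{thm:characters}. The count over points gives $a_1=\sum_P T_{i(P)}=(q+1)(q-1)$, and this should be compared with the formula $a_1=M(Q+1)-2a_2-(q+1)a_{q+1}$. Since the $T_i$ come from the same double count, this comparison might be tautological. If it is, I would look for a parity or divisibility obstruction instead. For instance, consider the $Q+1-q$ points of $\calL\setminus\calL_h$ together with the $q+1$ points of the subline $\calL_h$. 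These are $Q+2$ points, and every line through a point of $\calL\setminus\calL_h$ meets this set in $0$ or $2$ points, just as for a hyperoval. A hyperoval-style argument then gives $Q$ even, and examining lines through an external point should show the subline behaves like $q+1$ arc points on a line, which is impossible. The fallback is to conclude directly from $t=Q+2-q$ and the tangency structure.

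Once non-classical systems are excluded, every $\calL_i$ is a single point and $t=N$. The condition \eqref{eq:MSRD3} says no line contains three of these points, so $\calL$ is a $t$-arc of $\PG(2,Q)$.
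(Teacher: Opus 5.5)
Your reduction is sound and reaches the same borderline configuration as the paper. You use \eqref{eq:pencil} directly, whereas the paper invokes \cref{thm:length-dim3}(ii),(iii), which come from the same inequality. You correctly obtain $t\le Q+2-q$ and exclude a second block of rank at least two. You are then left with $t=Q+2-q$, a unique index $h$ with $n_h=2$, all other blocks of rank one, $M=Q+2$, and $T_i=0$ for every $i\neq h$. The closing arc argument is also fine.

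\textbf{The gap.} You never actually rule out this configuration.
\begin{enumerate}
\item As you suspected, the global tangent count is tautological: $\sum_P T_{i(P)}=(q+1)(q-1)$ agrees identically with $a_1$ from \cref{thm:characters}.
\item The hyperoval-style parity argument fails at a generic point $P\notin\calL\cup\ell$. Lines through such a $P$ can be tangents at points of $\calL_h$, and each point of $\calL_h$ carries $T_h=q-1\geq1$ tangents. Writing $x$ and $y$ for the numbers of tangents and $2$-secants through $P$, the partition of $\calL$ gives only $Q+2=x+2y$, that is, $x\equiv Q\pmod 2$. This does not give ``$Q$ even''. Even if it did, that would be no contradiction when $q$ is even.
\item The sentence about the subline ``behaving like arc points on a line'' is not an argument.
\end{enumerate}

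\textbf{The missing idea} is the choice of the external point. Let $\ell$ be the line carrying the $\Fq$-subline $\calL_h$, and take $P\in\ell\setminus\calL_h$. Such a $P$ exists because $Q>q$, and $P\notin\calL$ because $\ell\cap\calL=\calL_h$ by \cref{lem:line-intersections}(iv). Put $\calS:=\calL\setminus\calL_h$.
\begin{itemize}
\item Every line through $P$ other than $\ell$ meets $\ell$ only in $P$, so it misses $\calL_h$. Hence it meets $\calL$ only in points of $\calS$, in $0$, $1$ or $2$ points.
\item It cannot meet $\calS$ in exactly one point. That would be a tangent through a point of $\calS$, contradicting $T_i=0$ for $i\neq h$.
\item Therefore the $Q$ lines through $P$ other than $\ell$ partition $\calS$ into pairs, so $|\calS|=t-1=Q+1-q$ is even.
\end{itemize}
This is impossible, since $Q=q^m\equiv q\pmod 2$ makes $Q+1-q$ odd. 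This parity step is what closes the borderline case in the paper's proof, and your proposal does not supply it.
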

\begin{proof}
Suppose that $\calU$ is not classical, so that $e=N-t\geq1$. Note that $t\geq Q+2-q\geq4$. If all blocks had rank at least two, then $N\geq2t\geq2(Q+2-q)>Q+3-q$, contradicting \cref{thm:length-dim3}(ii). Hence some block has rank one, and \cref{thm:length-dim3}(iii) gives $t\le Q+2-qe\le Q+2-q$. Therefore $t=Q+2-q$ and $e=1$.

So there is a unique index $h$ with $n_h=2$, while $n_i=1$ for $i\neq h$. The linear set $\calL_h$ contains the $q+1$ points of an $\Fq$-subline line of $\ell$, and $\ell$ is the only $(q+1)$-secant to $\calL$, since $a_{q+1}=\binom{2}{2}_q=1$ by \cref{thm:characters}. In particular, since by \cref{lem:line-intersections} $\ell$ cannot meet any other point in $\calL$, we have that $\ell\cap\calL=\calL_h$.

Put $\calS:=\calL\setminus\calL_h$, so that $|\calS|=t-1=Q+1-q$ and $M=Q+2$. For every $i\neq h$, \cref{lem:pencil} gives $T_i=Q+1-M+q^{0}=0$, so no point of $\calS$ lies on a tangent line to $\calL$.

Choose a point $P\in\ell\setminus\calL_h$, which exists since $Q>q$. Then $P\notin\calL$. Every line through $P$ other than $\ell$ meets $\ell$ only at $P$ and is therefore disjoint from $\calL_h$. In particular, such a line cannot be a $(q+1)$-secant to $\calL$. By \cref{lem:line-intersections}, it must meet $\calL$ in $0$, $1$ or $2$ points, all belonging to $\calS$. However, it cannot meet $\calL$ in exactly one point, since this would give a tangent to $\calL$ through a point of $\calS$, contradicting the fact that no point of $\calS$ lies on a tangent line. Hence every line through $P$, other than $\ell$, meets $\calS$ in either zero or two points. As $\calS\cap\ell=\emptyset$, the $Q$ lines through $P$ different from $\ell$ partition $\calS$ into subsets of size two, and therefore $|\calS|$ is even. On the other hand, $Q=q^m\equiv q\pmod2$, so $|\calS|=Q+1-q$ is odd, a contradiction.
\end{proof}

\begin{remark}
\Cref{thm:length-dim3}(ii) alone gives $t\le N-1\le Q+2-q$ for non-classical systems with $t\geq2$, that is, $t\geq Q+3-q$ forces a classical system. The parity argument in the proof of \cref{thm:many-blocks} rules out the boundary case $t=Q+2-q$.
\end{remark}

\section{One-weight MSRD codes in dimension three}\label{sec:one-weight}
A sum-rank metric code is called \emph{one-weight} if all its non-zero codewords have the same sum-rank weight. One-weight codes in the sum-rank metric have been studied from a geometric perspective in~\cite{MusharrafZullo2026,neri2023thegeometry}.
In dimension two, one-weight MSRD codes do exist. Indeed, by~\eqref{eq:sr-weight} and \cref{prop:dim2}, a two-dimensional MSRD code is one-weight if and only if every point of $\PG(1,Q)$ has weight one with respect to the projective MSRD system. Equivalently, this holds if the linear sets of its associated projective system are scattered and form a partition of $\PG(1,Q)$, as in the examples given in \cref{prop:dim2}. The existence of one-weight MSRD codes in dimension three was investigated in~\cite{MusharrafZullo2026}, where necessary conditions and partial non-existence results were obtained. In particular, non-existence was proved for certain families of parameters, including the case of full-length blocks over $\F_{2^m}$.
Here, we settle the problem completely by showing that no non-degenerate one-weight MSRD code of dimension three exists, regardless of the field or the block profile.
We use the notation introduced in \cref{sec:dim3}.

\begin{proposition}\label{prop:one-weight-lines}
Let $\calU$ be the system associated with a three-dimensional one-weight MSRD code. Then every line $\ell$ of $\PG(2,Q)$ satisfies $\wt_\calU(\ell)=2$. Consequently, $\ell$ is either a $(q+1)$-secant meeting a single linear set $\calL_i$, or a $2$-secant meeting two distinct linear sets in one point each. In particular, there are no external and no tangent lines to $\calL$.
\end{proposition}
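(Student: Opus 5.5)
The plan is to combine the weight formula \eqref{eq:sr-weight} with the character count of \cref{thm:characters}. The goal is to show that, for a one-weight MSRD code, the common weight must be $N-2$. Once this is known, every line has weight $2$, and \cref{lem:line-intersections} gives the claimed description of the lines.

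\emph{Step 1: reduce to a single value of the line weight.} Let $\calC$ be the one-weight code with common weight $w$, and let $G$ be a generator matrix. By \eqref{eq:sr-weight}, $\wtSR(\xbf G)=N-\wt_\calU(\xbf^\perp)$, and $\xbf^\perp$ runs over all lines of $\PG(2,Q)$. Hence $\wt_\calU(\ell)=N-w$ is the same for every line $\ell$. By \eqref{eq:MSRD3} this common value $c$ lies in $\{0,1,2\}$.

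\emph{Step 2: exclude $c=0$ and $c=1$.} Pick any point $P\in\calL$; this is possible since $N\ge1$, so $\calL\neq\emptyset$. Any line through $P$ has weight at least $1$, which excludes $c=0$. For $c=1$, we need a line of weight at least $2$:
\begin{itemize}
\item If some $n_i\ge2$, take a two-dimensional $\Fq$-subspace $A\le U_i$. As in the proof of \cref{thm:characters}, the line $\ell_A$ has weight $2$.
\item Otherwise all $n_i=1$, and then $t=N\ge3$ because the $U_i$ span $\FQ^3$. The line through two points of $\calL$ then has weight $2$.
\end{itemize}
Either way $c=1$ is impossible. Alternatively, one can invoke \cref{lem:pencil}(i)--(ii): through $P$ there are $\theta_{n_i-2}+\sum_{j\ne i}\theta_{n_j-1}\ge1$ secant lines, and this count is positive because the spanning condition gives $N\ge3$.

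\emph{Step 3: conclude.} Therefore $c=2$, so every line satisfies $\wt_\calU(\ell)=2$. By \cref{lem:line-intersections}, such a line is either a $(q+1)$-secant meeting a single $\calL_i$ in weight $2$, or a $2$-secant meeting two distinct linear sets in one point each. Consequently $a_0=a_1=0$.

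There is no real obstacle here. The only point needing care is the existence of a line of weight $2$ in Step 2, and that reduces to the spanning condition forcing $N\ge3$.
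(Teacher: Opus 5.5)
Your proof is correct and follows the same route as the paper: the weight formula \eqref{eq:sr-weight} forces all lines to have the same weight, and \cref{lem:line-intersections} then gives the description of the lines. The only difference is that your Step~2 is unnecessary: an MSRD code of dimension $3$ has minimum distance exactly $N-2$ by definition, so in a one-weight MSRD code every non-zero codeword has weight $N-2$ and the common line weight is $2$ at once; your explicit line of weight $2$ just re-proves that the Singleton-like bound is attained.
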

\begin{proof}
The minimum distance of the code is $N-2$, so every non-zero codeword has weight $N-2$. By~\eqref{eq:sr-weight}, it follows that every line has weight exactly $2$ in $\calU$, and the remaining statements follow from \cref{lem:line-intersections}.
\end{proof}

\begin{theorem}\label{thm:one-weight-homogeneous}
Let $\calU$ be the system associated with a three-dimensional one-weight MSRD code. Then $n_1=\cdots=n_t=r$ and
\begin{equation}\label{eq:one-weight-M}
M=Q+1+q^{r-1}.
\end{equation}
\end{theorem}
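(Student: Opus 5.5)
By \cref{prop:one-weight-lines} there are no tangent lines to $\calL$. I will combine this with the pencil count of \cref{lem:pencil}.

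Fix any index $i$ and any point $P\in\calL_i$; such a point exists since $n_i\geq1$. Part (iii) of \cref{lem:pencil} says that exactly
\[
T_i=Q+1-M+q^{n_i-1}
\]
of the $Q+1$ lines through $P$ are tangent to $\calL$, and that none is external. Since no line is tangent, $T_i=0$ for every $i$, and hence
\[
M=Q+1+q^{n_i-1}\qquad\text{for every }i\in\{1,\ldots,t\}.
\]

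The left-hand side $M=M(\nbf)$ does not depend on $i$. Therefore $q^{n_i-1}$ is the same for all $i$, so the block lengths $n_i$ are all equal to a common value $r$. Substituting gives \eqref{eq:one-weight-M}.

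There is no real obstacle, since the statement follows directly from \cref{lem:pencil}. The only point to check is that every $\calL_i$ is non-empty, which holds because $n_i\geq1$, so that each $T_i$ can be evaluated at a point of $\calL_i$. Note that the argument also covers $t=1$, where the first conclusion is vacuous and the formula still follows.
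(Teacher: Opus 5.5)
Your proof is correct and is the same as the paper's. The absence of tangent lines (\cref{prop:one-weight-lines}) forces $T_i=0$ in \cref{lem:pencil}, so $M=Q+1+q^{n_i-1}$ for every $i$. Since $M$ does not depend on $i$, all the $n_i$ must be equal.
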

\begin{proof}
Let $P\in\calL_i$. By \cref{prop:one-weight-lines}, no tangent line to $\calL$ passes through $P$. Thus, $T_i=0$ in \cref{lem:pencil}, which gives $M=Q+1+q^{n_i-1}$. Since $M$ is independent of $i$, then all the $n_i$ must be equal.
\end{proof}

\begin{theorem}\label{thm:no-one-weight}
There exists no non-degenerate three-dimensional one-weight MSRD code over $\FQ$.
\end{theorem}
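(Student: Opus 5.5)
The plan is to combine the homogeneity statement of \cref{thm:one-weight-homogeneous} with the character formulas of \cref{thm:characters}, and to show that the condition $a_0=0$ cannot be satisfied. By \cref{thm:one-weight-homogeneous}, all blocks have the same rank $r$, with $1\le r\le m$, and
\[
M=t\,\theta_{r-1}=Q+1+q^{r-1}.
\]
By \cref{prop:one-weight-lines}, every line is a $2$-secant or a $(q+1)$-secant, so $a_0=a_1=0$ and hence $a_2+a_{q+1}=Q^2+Q+1$.

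I would first check what the condition $a_1=0$ gives. Set $\theta:=\theta_{r-1}$ and use the identity $(q+1)\binom{r}{2}_q=\theta(\theta-1)/q=\theta\,\theta_{r-2}$ from \cref{rem:characters}. Then $a_1=0$ reduces to $Q+1=(t-1)\theta+\theta_{r-2}$. This is exactly $T_i=0$, so it holds identically and gives nothing new. The real constraint is therefore the count of all lines:
\[
Q^2+Q+1=a_2+a_{q+1}=\binom{t}{2}\theta^2+t\binom{r}{2}_q .
\]
Substituting $t\theta=M$, this becomes
\[
2(Q^2+Q+1)=M(M-\theta)+\frac{2M\theta_{r-2}}{q+1}.
\]
Next I would use $M-\theta=Q+1-\theta_{r-2}$, which follows from \eqref{eq:theta-difference}. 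This turns the right-hand side into
\[
M(Q+1)-\frac{(q-1)M\theta_{r-2}}{q+1}.
\]
Subtracting the left-hand side, the identity becomes
\[
0=-Q^2-1+q^{r-1}(Q+1)-\frac{(q-1)M\,\theta_{r-2}}{q+1}.
\]

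The main step, and the one I would check most carefully, is showing that this right-hand side is strictly negative. The last term is non-positive. Since $r\le m$, we have $q^{r-1}\le Q/q\le Q/2$, so $q^{r-1}(Q+1)\le (Q^2+Q)/2<Q^2+1$ for $Q\ge2$. Hence the expression is negative, which is a contradiction.

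For $r=1$ the argument reduces to the hyperoval computation. There $M=Q+2$ and the identity would force $Q^2-Q=0$, which is consistent with $a_0=Q(Q-1)/2\neq0$ in \cref{rem:weight-distribution}. The case $t=1$ needs no separate treatment, since the formulas above remain valid with $a_2=0$. Non-degeneracy enters only through the use of the system $\calU$ and the standing assumption $n_i\le m$ (\cref{rem:ni-le-m}).
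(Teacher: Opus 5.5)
Your proof is correct, but it takes a different route from the paper's.

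The paper argues locally. For $t\ge2$ it takes a point $S\notin\calL$ on a $2$-secant $PR$. Each of the $Q+1$ lines through $S$ has weight two, so it meets $\calL$ in at least two points, and these intersections are pairwise disjoint. Hence $M\ge2(Q+1)$, which contradicts $M=Q+1+q^{r-1}\le Q+1+Q/q$. The case $t=1$ is handled separately via $\theta_{r-1}\le\theta_{m-1}<Q+1$.

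You argue globally instead. You feed the homogeneous profile into the character formulas of \cref{thm:characters} and note that $a_1=0$ is just $T_i=0$. You then show that the total line count $a_2+a_{q+1}=Q^2+Q+1$ cannot hold, that is, $a_0=Q^2+Q+1-a_2-a_{q+1}$ would have to be strictly positive. Your algebra is sound. In particular, $(q+1)\binom{r}{2}_q=\theta_{r-1}\theta_{r-2}$, the substitution $M-\theta_{r-1}=Q+1-\theta_{r-2}$, and the final estimate $q^{r-1}(Q+1)\le(Q^2+Q)/2<Q^2+1$ all hold.

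Your version is uniform in $t$, needing no case split. It also shows directly that a configuration with no tangent lines and equal block ranks must have external lines. The paper's version is shorter and uses only the incidence structure of a single pencil rather than the full character computation.
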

\begin{proof}
Suppose by contradiction that such a code exists. Let $r=n_1=\dots=n_t$, and recall that we assume $r\leq m$ (see \cref{rem:ni-le-m}). If $t=1$, then
\[
M=\theta_{r-1}\le\theta_{m-1}=\frac{Q-1}{q-1}<Q+1,
\]
contradicting~\eqref{eq:one-weight-M}. Assume therefore that $t\geq2$, and choose $P\in\calL_1$ and $R\in\calL_2$. The line $PR$ is a $2$-secant, so we may choose a point $S\in PR\setminus\calL$. By \cref{prop:one-weight-lines}, each of the $Q+1$ lines through $S$ meets $\calL$ in at least two points, and the intersections between these lines and $\calL$ are pairwise disjoint, since $S\notin\calL$.  Hence $M\geq2(Q+1)$. On the other hand, by~\eqref{eq:one-weight-M} and the fact that $r\le m$,
\[
M=Q+1+q^{r-1}\le Q+1+\frac{Q}{q}<2Q+2,
\]
a contradiction.
\end{proof}

\begin{corollary}\label{cor:no-one-weight}
Let $\calC\le\FQ^{n_1}\times\cdots\times\FQ^{n_t}$, with $n_i\le m$ for every $i$, be a non-degenerate $\FQ$-linear sum-rank metric code of dimension $3$. If $\calC$ is MSRD, then $\calC$ is not one-weight.
\end{corollary}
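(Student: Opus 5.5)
This statement is essentially a reformulation of \cref{thm:no-one-weight} in the language of codes, so the plan is to translate a hypothetical counterexample into the geometric setting and then invoke that theorem. Suppose $\calC$ is a non-degenerate $\FQ$-linear MSRD code of dimension $3$ with profile $\nbf=(n_1,\ldots,n_t)$, $n_i\le m$, and suppose it is one-weight.

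First, I would fix a generator matrix $G=(G_1\mid\cdots\mid G_t)$. Non-degeneracy means that the columns of each $G_i$ are $\Fq$-linearly independent. Their $\Fq$-spans $U_i$ therefore form an $[\nbf,3]_{Q/q}$ system $\calU$ in the sense of \cref{def:system}. The spanning condition holds because $G$ has rank $3$.

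Next, since $\calC$ is MSRD, \cref{prop:msrd-geometric} says that $\calU$ is an MSRD system. The assumption $n_i\le m$ is exactly the standing hypothesis used in the proof of \cref{thm:no-one-weight}, namely the bound $r\le m$ in the final inequality.

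Finally, the one-weight hypothesis places $\calU$ under \cref{prop:one-weight-lines} and \cref{thm:one-weight-homogeneous}. \Cref{thm:no-one-weight} then gives a contradiction, so $\calC$ cannot be one-weight.

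There is no real obstacle here. The only point to check is that the hypotheses match: the non-degeneracy and block-length assumptions of the corollary are exactly those under which the code-to-system correspondence and \cref{thm:no-one-weight} apply. For a single block, $t=1$, this is already covered by the argument in \cref{thm:no-one-weight}, which uses $r\le m$ to get $M<Q+1$.
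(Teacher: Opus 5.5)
Your proposal is correct and matches the paper, which gives no separate argument for this corollary: it is \cref{thm:no-one-weight} restated in coding language, obtained by passing from the non-degenerate code to its associated system via \cref{prop:msrd-geometric}, with $n_i\le m$ being the standing assumption used in that theorem's proof. The checks you list are exactly the bookkeeping needed: non-degeneracy and the rank of $G$ give an $[\nbf,3]_{Q/q}$ system, and $n_i\le m$ is what yields $M<Q+1$ for $t=1$ and $M<2Q+2$ for $t\ge2$.
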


\begin{remark}\label{rem:one-weight}
\Cref{thm:no-one-weight} highlights a clear contrast between dimensions two and three. In dimension three, requiring every line of $\PG(2,Q)$ to have total weight exactly two is overly restrictive. Note that the last result does not assume that all blocks have the same length, as the homogeneity of the profile follows directly from the code being one-weight (\cref{thm:one-weight-homogeneous}).
\end{remark}

\begin{remark}
    Note that for $t=1$ \cref{cor:no-one-weight} is consistent with~\cite[Corollary 3.17]{ALFARANO2022105658}. Indeed, any (non-degenerate) one-weight rank metric code of dimension $3$ is equivalent to a simplex code (see~\cite{ALFARANO2022105658} for the definition) with parameters $[3m,3,m]_Q$, which is not MSRD.
\end{remark}

\section{MSRD codes of arbitrary dimension}\label{sec:general}
In this section, we extend our investigation of the length problem to MSRD codes of arbitrary dimension. While the results of \cref{sec:dim3} rely on counting lines through a fixed point, in higher dimensions we adopt a more general approach based on shortening and its geometric interpretation as projection. The key idea is to reduce an MSRD system to a two-dimensional one, where the associated linear sets are scattered and pairwise disjoint. The resulting restrictions on their sizes can then be translated into bounds on the original block profile. We begin by studying shortening, puncturing and duality from a geometric perspective. We then establish a general projection bound, derive restrictions on the total length and the number of blocks, and investigate in particular the case of equal and full-length blocks. Finally, we discuss exceptional configurations and propose a sum-rank analogue of the classical MDS conjecture.

\subsection{Shortening, puncturing and duality}
Shortening and puncturing are standard operations on sum-rank metric codes, and it is known that both operations preserve the MSRD property under suitable assumptions (see \cite[Corollary~7]{MartinezPenas2019Supports}). The duality of MSRD codes is also well established (see \cite[Theorem~5]{MartinezPenas2019Supports}). In this subsection, we revisit these operations from the geometric perspective of MSRD systems. In particular, we show that shortening corresponds to projection from a suitable subspace, while puncturing corresponds to replacing the components of the system by appropriate subspaces. These interpretations will be useful for deriving the projection bound in the next subsection.

Shortening and puncturing are two standard operations on linear
codes. In the sum-rank metric, these operations can be performed
after an $\Fq$-linear change of coordinates within each block.
We recall their definitions and describe their geometric
interpretation in terms of systems.

\begin{definition}
\label{def:shortening-puncturing}
Let $\calC$ be an $[\nbf,k]_{Q/q}$ code, and let
$\zerobf\le\abf=(a_1,\ldots,a_t)\le\nbf$. For each $i$, choose $T_i\in\mathrm{GL}(n_i,q)$ and put $T=\operatorname{diag}(T_1,\ldots,T_t)$.
For $\cbf=(\cbf_1,\ldots,\cbf_t)\in\calC$, write
\[
\cbf_iT_i=(\cbf_i',\cbf_i''),
\]
where $\cbf_i'\in\FQ^{a_i}$ and
$\cbf_i''\in\FQ^{n_i-a_i}$.

The \emph{puncturing} of $\calC$ with respect to
$\abf$ and $T$ is the code
\[
\operatorname{Punc}_{\abf,T}(\calC)
:=
\left\{
(\cbf_1'',\ldots,\cbf_t''):
\cbf\in\calC
\right\}.
\]

The \emph{shortening} of $\calC$ with respect to
$\abf$ and $T$ is the code
\[
\operatorname{Short}_{\abf,T}(\calC)
:=
\left\{
(\cbf_1'',\ldots,\cbf_t''):
\cbf\in\calC,\ 
\cbf_i'=\zerobf\text{ for all }i
\right\}.
\]

Both codes have profile $\nbf-\abf$, with blocks
of length zero discarded.
\end{definition}

These operations have natural geometric interpretations
under the correspondence between non-degenerate codes
and systems. Let $G=(G_1\mid\cdots\mid G_t)$ be a
generator matrix of $\calC$, with associated system
$\calU=(U_1,\ldots,U_t)$.
For each $i$, write $G_iT_i=(G_i'\mid G_i'')$,
where $G_i'$ has $a_i$ columns, and let
$A_i$ and $B_i$ denote the $\Fq$-spans of the
columns of $G_i'$ and $G_i''$, respectively.
Then $U_i=A_i\oplus B_i$.

Geometrically, puncturing corresponds to replacing
each $U_i$ by the subspace $B_i\le U_i$.
Thus, whenever the dimension is preserved, the
associated system is
\[
\calU^{\mathrm{punc}}:=(B_1,\ldots,B_t).
\] 

On the other hand, shortening corresponds to a
projection. More precisely, putting $W:=\langle A_1,\ldots,A_t\rangle_{\FQ}$ and assuming that the hypothesis of the next lemma hold,
the associated system of the shortened code, is
\[
\calU^{\mathrm{short}}
:=\frac{\calU}{W}=
\left(
\frac{U_1+W}{W},\ldots,
\frac{U_t+W}{W}
\right).
\] 
In particular, shortening consists of projecting
the system from the projective subspace $\PG(W)$.

We now show that shortening an MSRD system preserves
the MSRD property and is consistent with the geometric
interpretation of shortening presented so far.

\begin{lemma}\label{lem:projection-shortening}
Let $\calU=(U_1,\ldots,U_t)$ be an $[\nbf,k]_{Q/q}$ MSRD system. Let $A_i\le U_i$ be $\Fq$-subspaces with $a_i:=\dim_{\Fq}A_i$ and $j:=|\abf|\le k-2$, and put $W:=\langle A_1,\ldots,A_t\rangle_{\FQ}$. Then the following hold.
\begin{enumerate}[label=\textup{(\roman*)}]
\item $\dim_{\FQ}W=j$ and $U_i\cap W=A_i$ for every $i$.
\item For every hyperplane $H\supseteq W$, we have $\wt_{\calU/W}(H/W)=\wt_\calU(H)-j$. 
\item $\calU/W$ is a $[\nbf-\abf,k-j]_{Q/q}$ MSRD system in $\FQ^k/W\cong\FQ^{k-j}$, where the blocks with $a_i=n_i$ are discarded.
\end{enumerate}
\end{lemma}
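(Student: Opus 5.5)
The plan is to prove the three parts in order, using \cref{lem:design} as the only real input. The workhorse is a dimension count that converts the MSRD inequality into exact equalities.

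For (i), I would first note that $W$ is spanned over $\FQ$ by the $j$ vectors obtained from $\Fq$-bases of the $A_i$, so $\dim_{\FQ}W\le j\le k-2$. Since $A_i\le U_i\cap W$, we get $\wt_\calU(W)\ge\sum_i a_i=j$. On the other hand, $\dim_{\FQ}W\le k-1$, so \cref{lem:design} gives $\wt_\calU(W)\le\dim_{\FQ}W\le j$. Hence all these inequalities are equalities: $\dim_{\FQ}W=j$ and $\sum_i\dim_{\Fq}(U_i\cap W)=\sum_i a_i$. Combined with $A_i\le U_i\cap W$ termwise, this forces $U_i\cap W=A_i$ for every $i$.

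For (ii), fix a hyperplane $H\supseteq W$ and write $\overline{U_i}:=(U_i+W)/W$. The key identity is
\[
\overline{U_i}\cap (H/W)=\bigl((U_i\cap H)+W\bigr)/W.
\]
This holds because $u+W\in H/W$ with $u\in U_i$ forces $u\in H$, as $W\subseteq H$. The natural $\Fq$-linear map $U_i\cap H\to (U_i+W)/W$ has kernel $U_i\cap W=A_i$ by (i). Therefore
\[
\dim_{\Fq}\bigl(\overline{U_i}\cap H/W\bigr)=\wt_{U_i}(H)-a_i.
\]
Summing over $i$ gives $\wt_{\calU/W}(H/W)=\wt_\calU(H)-j$. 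The same kernel computation applied to $U_i$ itself shows that $\dim_{\Fq}\overline{U_i}=n_i-a_i$.

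For (iii), the computation just made gives the profile $\nbf-\abf$, with the zero blocks (those with $a_i=n_i$) discarded. The $\overline{U_i}$ span $\FQ^k/W$ over $\FQ$ because the $U_i$ span $\FQ^k$, so $\calU/W$ is a system. Every hyperplane of $\FQ^k/W\cong\FQ^{k-j}$ has the form $H/W$ with $H\supseteq W$ a hyperplane of $\FQ^k$. By (ii) and \eqref{eq:MSRD}, its weight is at most $k-1-j=(k-j)-1$, which is exactly the MSRD condition for dimension $k-j$. The hypothesis $j\le k-2$ guarantees $k-j\ge2$, so the quotient is a genuine system of dimension at least two.

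I expect no serious obstacle. The only delicate point is (i): the equality $U_i\cap W=A_i$ must be extracted from the global count, and it needs $\dim W\le k-1$ so that \cref{lem:design} applies. That step is also what makes the kernel computations in (ii) and (iii) exact.
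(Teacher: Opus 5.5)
Your proposal is correct and follows essentially the same route as the paper. Part (i) comes from \cref{lem:design} together with the termwise squeeze $A_i\le U_i\cap W$; part (ii) from the identity $(U_i+W)\cap H=(U_i\cap H)+W$, which you verify directly where the paper cites the modular law, combined with $U_i\cap W=A_i$; and part (iii) from the correspondence between hyperplanes of $\FQ^k/W$ and hyperplanes of $\FQ^k$ containing $W$.
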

\begin{proof}
(i) We have $\dim_{\FQ}W\le j\le k-1$ and $\wt_\calU(W)\geq\sum_ia_i=j$. By \cref{lem:design}, $\wt_\calU(W)\le\dim_{\FQ}W$, hence $\dim_{\FQ}W=j$ and $\dim_{\Fq}(U_i\cap W)=a_i$ for every $i$; since $A_i\le U_i\cap W$, we get $U_i\cap W=A_i$.

(ii) Since $W\le H$, the modular law gives $(U_i+W)\cap H=(U_i\cap H)+W$. Therefore, using (i),
\[
\dim_{\Fq}\bigl(((U_i+W)\cap H)/W\bigr)=\dim_{\Fq}(U_i\cap H)-\dim_{\Fq}(U_i\cap W)=\dim_{\Fq}(U_i\cap H)-a_i,
\]
and summing over $i$ gives the claim.

(iii) The images of the $U_i$ span $\FQ^k/W$, and $\dim_{\Fq}\bigl((U_i+W)/W\bigr)=n_i-a_i$ by (i). Every hyperplane of the quotient is of the form $H/W$ with $H\supseteq W$, so by (ii) it has weight at most $(k-1)-j$, which is the MSRD condition~\eqref{eq:MSRD} in dimension $k-j$.
\end{proof}

Similarly, puncturing also preserves the MSRD property, provided that the resulting total length is sufficiently large. Combining this fact with the duality of MSRD codes, we obtain the following result.
We recall that the \emph{dual code} of a $[\mathbf{n},k]_{Q/q}$ code $\calC$ is 
\[\calC^\perp:=\{\mathbf{v} \in \FQ^{\mathbf{n}} \colon \mathbf{v}\cdot \mathbf{c}=0, \forall \mathbf{c}\in \calC\},\]
which is a $[\mathbf{n},N-k]_{Q/q}$ code.

\begin{lemma}\label{lem:puncture-dual}
Let $\calU$ be an $[\nbf,k]_{Q/q}$ MSRD system with associated code $\calC$.
\begin{enumerate}[label=\textup{(\roman*)}]
\item If $N\geq k+1$ and $U_i'\le U_i$ is an $\Fq$-hyperplane, then replacing $U_i$ by $U_i'$ gives an $[\nbf-\ebf_i,k]_{Q/q}$ MSRD system, where the block is discarded if $n_i=1$.
\item If $k\le N-1$, then $\calC^\perp$ is a non-degenerate MSRD code of dimension $N-k$ with the same profile $\nbf$.
\end{enumerate}
\end{lemma}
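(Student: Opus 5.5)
For (i), I will check the MSRD condition~\eqref{eq:MSRD} for the new tuple $\calU'=(U_1,\ldots,U_i',\ldots,U_t)$ directly. For every hyperplane $H$, $\wt_{U_i'}(H)\le\wt_{U_i}(H)$, hence $\wt_{\calU'}(H)\le\wt_\calU(H)\le k-1$. So $\calU'$ is MSRD as soon as it is a genuine system, that is, as soon as its components still span $\FQ^k$ over $\FQ$; this spanning condition is the only real issue in (i).

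To prove spanning, I will argue by contradiction. Suppose $\langle U_1,\ldots,U_i',\ldots,U_t\rangle_{\FQ}$ is a proper subspace. Then it lies in some hyperplane $H$, and $H$ contains all $U_j$ with $j\neq i$ as well as $U_i'$. Consequently
\[
\wt_\calU(H)\ge\sum_{j\neq i}n_j+(n_i-1)=N-1\ge k,
\]
using $N\ge k+1$. This contradicts~\eqref{eq:MSRD}. If $n_i=1$, then $U_i'=\{\zerobf\}$ and the block is simply dropped; the same computation applies.

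For (ii), I will use the duality of MSRD codes recalled from \cite[Theorem~5]{MartinezPenas2019Supports}: the dual of an MSRD code is MSRD with the same profile. Its dimension is $N-k$, since the standard dot product is non-degenerate. It remains to prove that $\calC^\perp$ is non-degenerate. Suppose it is degenerate in block $i$. Then there is $T_i\in\mathrm{GL}(n_i,q)$ such that every codeword of $\calC^\perp$, after this change of coordinates, has a zero coordinate in block $i$. Taking the vector $\vbf$ supported on that single coordinate (transported back via $T_i^{-\top}$), $\vbf$ is orthogonal to all of $\calC^\perp$, so $\vbf\in(\calC^\perp)^\perp=\calC$. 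But $\vbf$ has sum-rank weight $1$, while $\distSR(\calC)=N-k+1\ge2$ because $k\le N-1$. This is a contradiction.

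The step needing the most care is this non-degeneracy argument: one must make the change of coordinates compatible with the dot product, so that the orthogonal of a coordinate-degenerate code really contains a weight-one vector. I will handle this by noting that $\cbf_iT_i\cdot\vbf_i=\cbf_i\cdot\vbf_iT_i^{\top}$, and that $\Fq$-linear invertible maps preserve rank.
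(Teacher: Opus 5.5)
Your proposal is correct and follows the paper's proof essentially step by step. In (i) you use monotonicity of the weights together with the weight-$(N-1)$ hyperplane contradiction for spanning, and in (ii) you use the duality theorem of Martínez-Peñas together with the observation that degeneracy of $\calC^\perp$ would put a sum-rank weight-one vector into $\calC=(\calC^\perp)^\perp$. The only difference is that you verify this last fact directly, whereas the paper cites \cite[Proposition~2.13]{neri2023thegeometry}. Your direct verification needs no change of coordinates: the vector is simply $(\zerobf,\ldots,\lambda,\ldots,\zerobf)$, where $\lambda\in\Fq^{n_i}\setminus\{\zerobf\}$ is an $\Fq$-dependency among the columns of the $i$-th block.
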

\begin{proof}
(i) Weights do not increase, so the MSRD condition~\eqref{eq:MSRD} still holds. If the new tuple was contained in a hyperplane $H$, then its weight on $H$ would be $N-1\geq k$, contradicting~\eqref{eq:MSRD}, hence it spans $\FQ^k$.

(ii) Since $\calC$ is MSRD, it follows immediately from~\cite[Theorem 5]{MartinezPenas2019Supports} that $\calC^\perp$ is also MSRD. We note that $\calC$ and $\calC^\perp$ have the same profile. If a block of a generator matrix of $\calC^\perp$ had $\Fq$-linearly dependent columns, then by \cite[Proposition 2.13]{neri2023thegeometry} $\calC=(\calC^\perp)^\perp$ would contain a codeword of sum-rank weight $1$, whereas $\distSR(\calC)=N-k+1\geq2$.
\end{proof}

\begin{remark}
\label{rem:short-punct-codes}
The preceding lemmas have the following interpretations
in coding terms. Let $\calC$ be a
non-degenerate $[\nbf,k,N-k+1]_{Q/q}$ MSRD code.
By \cref{lem:projection-shortening}, shortening
$\calC$ on $j\le k-2$ coordinates, distributed
among the blocks according to a vector
$\abf\le\nbf$ with $|\abf|=j$, produces a
non-degenerate
$[\nbf-\abf,k-j,N-k+1]_{Q/q}$
MSRD code. In particular, shortening preserves
the minimum sum-rank distance.
On the other hand, by \cref{lem:puncture-dual}(i),
if $N\ge k+1$, puncturing $\calC$ on one coordinate
of the $i$-th block produces a non-degenerate
$[\nbf-\ebf_i,k,N-k]_{Q/q}$
MSRD code. Thus puncturing preserves the dimension
and decreases the minimum sum-rank distance by one.
In both cases, the operations may be performed
after an $\Fq$-linear change of coordinates
within each block.
\end{remark}

\subsection{General bounds}
The geometric interpretation of shortening allows us to derive
bounds on MSRD systems of arbitrary dimension by reducing to
the two-dimensional case. More precisely, projecting an
$[\nbf,k]_{Q/q}$ MSRD system from a subspace generated by
$k-2$ suitably chosen vectors produces a two-dimensional
MSRD system. Applying the characterization of
\cref{prop:dim2} to the projected system yields the following
bound, together with an exact count of the hyperplanes
attaining the smallest possible weight.

\begin{theorem}\label{thm:projection}
Let $\calU$ be an $[\nbf,k]_{Q/q}$ MSRD system. Let $\zerobf\le\abf\le\nbf$ with $|\abf|=k-2$, let $A_i\le U_i$ be any $\Fq$-subspaces with $\dim_{\Fq}A_i=a_i$, and put $W:=\langle A_1,\ldots,A_t\rangle_{\FQ}$. Then every hyperplane through $W$ has weight $k-2$ or $k-1$, and exactly
\[
\tau(A_1,\ldots,A_t):=Q+1-\sum_{i=1}^t\theta_{n_i-a_i-1}
\]
of them have weight $k-2$. In particular,
\begin{equation}\label{eq:projection}
\sum_{i=1}^t\theta_{n_i-a_i-1}\le Q+1.
\end{equation}
If moreover $k\le N-2$, then~\eqref{eq:projection} also holds for every $\zerobf\le\abf\le\nbf$ with $|\abf|=N-k-2$.
\end{theorem}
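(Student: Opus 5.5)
The plan is to project from $W$ and reduce to the two-dimensional case. Since $|\abf|=k-2$, \cref{lem:projection-shortening}(i) gives $\dim_{\FQ}W=k-2$ and $U_i\cap W=A_i$ for every $i$. By \cref{lem:projection-shortening}(iii), the quotient $\calU/W$ is then a $[\nbf-\abf,2]_{Q/q}$ MSRD system in $\FQ^k/W\cong\FQ^2$, with the blocks having $a_i=n_i$ discarded. The hyperplanes $H\supseteq W$ correspond bijectively to the $Q+1$ points $H/W$ of $\PG(1,Q)$. By \cref{lem:projection-shortening}(ii),
\[
\wt_\calU(H)=\wt_{\calU/W}(H/W)+k-2.
\]

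Next I apply \cref{prop:dim2} to the quotient system. It says the projected linear sets are scattered and pairwise disjoint, so every point of $\PG(1,Q)$ has weight $0$ or $1$. Hence every hyperplane through $W$ has weight $k-2$ or $k-1$. The points of weight $1$ are exactly the points of the union of the projected linear sets. This union has $\sum_i\theta_{n_i-a_i-1}$ points, using the convention $\theta_{-1}=0$ for the discarded blocks. Therefore the number of weight-$0$ points, that is, of hyperplanes of weight $k-2$, is
\[
\tau(A_1,\ldots,A_t)=Q+1-\sum_i\theta_{n_i-a_i-1}.
\]
Since $\tau\ge 0$, this gives \eqref{eq:projection}.

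For the final claim, assume $k\le N-2$ and use \cref{lem:puncture-dual}(ii). It says the dual code $\calC^\perp$ is a non-degenerate MSRD code of dimension $N-k$ with the same profile $\nbf$; let $\calU^\perp$ be its associated system. Now apply the first part to $\calU^\perp$, in dimension $k'=N-k$, with any $\abf$ satisfying $|\abf|=N-k-2$. This requires $k'\ge2$, which is exactly the hypothesis $k\le N-2$. It also requires $0\le|\abf|\le k'-2$ so that \cref{lem:projection-shortening} applies, which holds. Since \eqref{eq:projection} depends only on $\nbf$, $\abf$ and $Q$, it transfers directly.

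The only delicate points are the bookkeeping steps. The quotient must still span, which is ensured by \cref{lem:projection-shortening}(iii). Full blocks with $a_i=n_i$ contribute zero, which is consistent with $\theta_{-1}=0$. Finally, the hyperplanes through $W$ must correspond exactly to the points of the quotient line. None of these should pose a real obstacle.
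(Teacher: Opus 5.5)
Your proposal is correct and follows essentially the same route as the paper. You project from $W$ via \cref{lem:projection-shortening} to obtain a two-dimensional MSRD system with weights shifted by $k-2$, count its weight-one points with \cref{prop:dim2}, and then get the final claim by applying the first part to the system of $\calC^\perp$, which \cref{lem:puncture-dual}(ii) makes available, choosing $A_i'\le U_i'$ of dimension $a_i$.
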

\begin{proof}
By \cref{lem:projection-shortening} with $j=k-2$, the quotient $\calU/W$ is a two-dimensional MSRD system with profile $\nbf-\abf$. The hyperplanes through $W$ correspond to the points of $\PG(\FQ^k/W)\cong\PG(1,Q)$, with weights shifted by $k-2$. By \cref{prop:dim2}, the points of weight one are the $\sum_i\theta_{n_i-a_i-1}$ points of a disjoint union of scattered linear sets, and the other points have weight zero. The last statement follows by applying the same argument to $\calC^\perp$, which is allowed by \cref{lem:puncture-dual}(ii).

Assume now that $k\le N-2$, and let $\zerobf\le\abf\le\nbf$ with $|\abf|=N-k-2$. Let $\calC$ be a non-degenerate code associated with $\calU$. Since $\calU$ is an MSRD system, $\calC$ is an MSRD code. As $k\le N-1$, by \cref{lem:puncture-dual}(ii) the dual code $\calC^\perp$ is a non-degenerate MSRD code of dimension $k':=N-k$ with the same profile $\nbf$. Let $\calU'=(U_1',\ldots,U_t')$ be a system associated with $\calC^\perp$. By \cref{prop:msrd-geometric}, $\calU'$ is an $[\nbf,k']_{Q/q}$ MSRD system. Since $|\abf|=k'-2$ and $a_i\le n_i=\dim_{\Fq}U_i'$ for every $i$, we can choose $\Fq$-subspaces $A_i'\le U_i'$ with $\dim_{\Fq}A_i'=a_i$, and apply the first part of the statement to $\calU'$ and $A_1',\ldots,A_t'$. This gives~\eqref{eq:projection}, since this inequality depends only on $\nbf$ and $\abf$, and not on the system.
\end{proof}

\begin{remark}\label{rem:projection-special}
\Cref{thm:projection} is the common generalization of three counts that appeared before.
\begin{enumerate}[label=\textup{(\alph*)}]
\item In the classical case, taking $\abf$ to be the vector with $k-2$ entries equal to $1$ and the remaining are zero gives $\tau=Q+k-1-N$, which is \cref{lem:tangent-hyperplanes}.
\item For $k=3$ and $\abf=\ebf_i$, the subspace $W$ is simply a point $P\in\calL_{U_i}$, and by~\eqref{eq:theta-difference}
\[
\tau=Q+1-M(\nbf)+q^{n_i-1},
\]
which is the number $T_i$ of tangent lines through $P$ of \cref{lem:pencil}. The $(q+1)$-secants through $P$ correspond to the $\theta_{n_i-2}$ points of the projection of $\calL_{U_i}$, and the $2$-secants through $P$ to the points of the projections of the other linear sets.
\item For $k=2$, \cref{thm:projection} is \cref{prop:dim2}.
\end{enumerate}
\end{remark}

\begin{remark}
In coding terms, with the notation of \cref{thm:projection}, let $\calC$ be a code associated with $\calU$, with generator matrix $G=(G_1\mid\cdots\mid G_t)$ such that the columns of $G_i$ form an $\Fq$-basis of $U_i$ for every $i$. Since $A_i\le U_i$, we can choose $T_i\in\mathrm{GL}(n_i,q)$ such that the first $a_i$ columns of $G_iT_i$ form an $\Fq$-basis of $A_i$. We put $T:=\operatorname{diag}(T_1,\ldots,T_t)$ and $\calC':=\operatorname{Short}_{\abf,T}(\calC)$. We note that $\calC'$ is a non-degenerate two-dimensional MSRD code of length $N-k+2$, whose associated system is $\calU/W$. By~\eqref{eq:sr-weight} and \cref{lem:projection-shortening}, a non-zero codeword of $\calC'$ has sum-rank weight $N-k+2$ if and only if it corresponds to a point $P$ of $\PG(\FQ^k/W)$ with $\wt_{\calU/W}(P)=0$. Therefore, exactly
\[
(Q-1)\,\tau(A_1,\ldots,A_t)
\]
non-zero codewords of $\calC'$ have full sum-rank weight $N-k+2$.
\end{remark}

The bound in \cref{thm:projection} depends on how the $k-2$ shortening positions are distributed among the blocks.
More precisely, shortening according to a vector $\abf\le\nbf$ reduces the total number of points of the associated projective system from $M(\nbf)$ to $M(\nbf-\abf)$. To obtain the strongest possible bound, we minimize this loss over all distributions of a fixed number of shortened dimensions. Accordingly, we define for $0\le j\le N$ the \emph{loss}
\[
\lambda_j(\nbf):=
\min\left\{
\sum_{i=1}^t
\bigl(\theta_{n_i-1}-\theta_{n_i-a_i-1}\bigr):
\ \zerobf\le\abf\le\nbf,\ |\abf|=j
\right\}.
\]

\begin{lemma}\label{lem:loss}
The following hold.
\begin{enumerate}[label=\textup{(\roman*)}]
\item The minimum defining $\lambda_j(\nbf)$ is attained at some $\abf$ with at most one index $i$ such that $0<a_i<n_i$.
\item $\lambda_j(\nbf)\geq j$, with equality if and only if at least $j$ blocks have rank one.
\item $\lambda_1(\nbf)=q^{r(\nbf)-1}$.
\item If $n_i\geq j$ for some $i$, then $\lambda_j(\nbf)\le q^{n_i-j}\,\theta_{j-1}$.
\end{enumerate}
\end{lemma}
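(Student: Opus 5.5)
The whole lemma rests on one observation. By \eqref{eq:theta-difference}, the loss contributed by block $i$ is
\[
f_{n_i}(a_i):=\theta_{n_i-1}-\theta_{n_i-a_i-1}=q^{n_i-a_i}+q^{n_i-a_i+1}+\cdots+q^{n_i-1}.
\]
So taking the $a$-th dimension from a block of rank $n$ costs $q^{n-a}$, and these increments decrease as $a$ grows. In other words, each $f_n$ is a concave function of $a$ on $\{0,\ldots,n\}$, with $f_n(0)=0$ and $f_n(n)=\theta_{n-1}$. I would prove (i)--(iv) in order from this.

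For (i), I use an exchange argument. Suppose two indices $i\neq h$ satisfy $0<a_i<n_i$ and $0<a_h<n_h$. Compare the two moves $(a_i+1,a_h-1)$ and $(a_i-1,a_h+1)$. By concavity the changes in total loss are
\[
q^{n_i-a_i-1}-q^{n_h-a_h}\qquad\text{and}\qquad q^{n_h-a_h-1}-q^{n_i-a_i}.
\]
Their sum is negative, so at least one of them is $\le0$. Performing that move does not increase the loss, preserves $|\abf|=j$ and $\zerobf\le\abf\le\nbf$, and strictly increases $\sum_i a_i^2$. Since this quantity is bounded, iterating terminates at a minimiser with at most one partially used block.

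For (ii), I use $f_n(a)\ge a$, since $f_n(a)$ is a sum of $a$ powers of $q$. Equality holds iff every term equals $1$, that is, iff $a=0$ or ($a=1$ and $n=1$). Hence $\lambda_j(\nbf)\ge j$, with equality iff some $\abf$ has support only on blocks of rank one with $a_i\le1$. Such an $\abf$ with $|\abf|=j$ exists exactly when at least $j$ blocks have rank one.

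For (iii), any $\abf$ with $|\abf|=1$ is $\ebf_i$ and costs $f_{n_i}(1)=q^{n_i-1}$. Minimising over $i$ gives $q^{r(\nbf)-1}$.

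For (iv), choose $\abf=j\ebf_i$, which is admissible since $n_i\ge j$. Its cost is $f_{n_i}(j)=q^{n_i-j}\theta_{j-1}$, again by \eqref{eq:theta-difference}. This bounds the minimum from above.

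The only step requiring care is the termination and validity of the exchange in (i). I handle it by the strictly increasing potential $\sum_i a_i^2$, together with the observation that each move keeps both coordinates in range: $a_i\pm1$ and $a_h\mp1$ stay within $[0,n_i]$ and $[0,n_h]$ because both started strictly inside their ranges.
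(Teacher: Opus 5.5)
Your approach is the paper's: the increment formula $f_n(a)-f_n(a-1)=q^{n-a}$ drives all four parts. Parts (ii)--(iv) match the paper's arguments, and your two loss changes in (i) are computed correctly. The flaw is the termination argument in (i). The move you perform does \emph{not} in general increase $\sum_i a_i^2$.

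The move $(a_i,a_h)\mapsto(a_i+1,a_h-1)$ changes $\sum a^2$ by $2(a_i-a_h+1)$, which depends on the $a$'s. The sign of the loss change depends instead on the residual ranks $n_i-a_i$ and $n_h-a_h$. Take $\nbf=(2,6)$, $j=4$, $\abf=(1,3)$. The only non-increasing move is $(1,3)\mapsto(2,2)$, with loss change $1-q^3$; the other move changes the loss by $q^2-q>0$. Yet this move lowers $\sum a_i^2$ from $10$ to $8$.

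Your rule also allows moves with loss change exactly $0$, and these can cycle. For $\nbf=(3,3)$ and $j=3$, the vectors $(1,2)$ and $(2,1)$ both have loss $2q^2+q$, and each goes to the other by a zero-change move. So the claimed potential does not ensure termination.

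The repair uses only what you already computed. The two changes sum to $-(q-1)\bigl(q^{n_i-a_i-1}+q^{n_h-a_h-1}\bigr)<0$, so at least one of them is \emph{strictly} negative. Let $\abf$ be a minimizer; one exists because the admissible set is finite and nonempty for $0\le j\le N$. If $\abf$ had two partially used blocks, the strictly improving move would contradict minimality.

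This is exactly the paper's proof. There, the normalization $n_i-a_i\le n_h-a_h$ selects the improving move $\abf+\ebf_i-\ebf_h$. It gives the slightly stronger conclusion that \emph{every} minimizer has at most one index with $0<a_i<n_i$. If you want to keep an iterative argument, use the loss itself as the potential and always choose a strictly negative move.
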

\begin{proof}
For an integer $0\le a\le n$, let
\[
\Lambda_n(a):=\theta_{n-1}-\theta_{n-a-1}
       =\sum_{s=n-a}^{n-1}q^s,
\]
where an empty sum is zero. In particular, $\Lambda_n(0)=0$ and $\Lambda_n(n)=\theta_{n-1}$, since $\theta_{-1}=0$.

Thus $\Lambda_n(a)$ measures the loss associated with
shortening the code at $a$ positions from a block of rank $n$.
In particular, for $0\le a<n$, by \eqref{eq:theta-difference}, we have
\[
\Lambda_n(a+1)-\Lambda_n(a)=q^{n-a-1}.
\]

\smallskip
\noindent
(i) Let $\abf$ be a vector attaining the minimum
in the definition of $\lambda_j(\nbf)$.
Suppose, by contradiction, that there exist two
distinct indices $i,h$ such that $0<a_i<n_i$ and $0<a_h<n_h$.
Without loss of generality, assume that $n_i-a_i\le n_h-a_h$.

Consider the vector $\abf'=\abf+\ebf_i-\ebf_h$.
Since $a_i<n_i$ and $a_h>0$, the vector
$\abf'$ still satisfies
$\zerobf\le\abf'\le\nbf$ and
$|\abf'|=j$. Moreover, the difference between the corresponding losses is
\begin{align*}
&\sum_{s=1}^t \Lambda_{n_s}(a_s')
 -\sum_{s=1}^t \Lambda_{n_s}(a_s)\\
&\qquad =
\bigl(\Lambda_{n_i}(a_i+1)-\Lambda_{n_i}(a_i)\bigr)
+
\bigl(\Lambda_{n_h}(a_h-1)-\Lambda_{n_h}(a_h)\bigr)\\
&\qquad =
q^{n_i-a_i-1}-q^{n_h-a_h}<0,
\end{align*}
where the last inequality follows from
$n_i-a_i\le n_h-a_h$.

This contradicts the minimality of $\abf$.
Therefore, at most one component of a minimizer
can satisfy $0<a_i<n_i$.

\smallskip
\noindent
(ii) For $j=0$ the assertion is trivially satisfied, so assume $j>0$.
For every integer $0< a\le n$, we have
\[
\Lambda_n(a)=\sum_{s=n-a}^{n-1}q^s\ge a,
\]
since every summand is at least one. The above inequality is clearly satisfied when $a=0$.
Moreover, equality holds if and only if
$a=0$, or $n=a=1$.

Consequently, for every admissible $\abf$ with
$|\abf|=j$,
\[
\sum_{i=1}^t \Lambda_{n_i}(a_i)
\ge\sum_{i=1}^t a_i=j.
\]
Equality is possible precisely when every
non-zero component $a_i$ corresponds to a
block of rank one. Since $|\abf|=j$,
this is equivalent to the existence of at
least $j$ blocks of rank one.

\smallskip
\noindent
(iii) If $j=1$, the only admissible vectors
are the standard unit vectors $\ebf_i$.
Since $\Lambda_{n_i}(1)=q^{n_i-1}$,
we obtain $\lambda_1(\nbf)
=\min_{i}q^{n_i-1}
=q^{r(\nbf)-1}$.

\smallskip
\noindent
(iv) Suppose that $n_i\ge j$ for some $i$.
Choosing $\abf=j\ebf_i$ gives
\[
\lambda_j(\nbf)
\le \Lambda_{n_i}(j)
=\sum_{s=n_i-j}^{n_i-1}q^s
=q^{n_i-j}\theta_{j-1},
\]
as claimed.
\end{proof}

\begin{remark}
By \cref{lem:loss}(i), the computation of $\lambda_j(\nbf)$ can be restricted to vectors $\abf$ for which all but at most one component belong to $\{0,n_i\}$.
In other words, the minimum loss is attained by shortening entire blocks, except possibly for one block, which may be shortened only partially.
\end{remark}

\begin{corollary}\label{cor:profile}
If $\calU$ is a non-trivial $[\nbf,k]_{Q/q}$ MSRD system, then
\[
M(\nbf)\le Q+1+\lambda_{\kappa-2}(\nbf).
\]
In particular:
\begin{enumerate}[label=\textup{(\roman*)}]
\item if $\kappa=3$, then $M(\nbf)\le Q+1+q^{r(\nbf)-1}$;
\item if $\kappa\geq3$ and $n_i\geq\kappa-2$ for some $i$, then $M(\nbf)\le Q+1+q^{n_i-\kappa+2}\theta_{\kappa-3}$.
\end{enumerate}
\end{corollary}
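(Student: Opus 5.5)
The plan is to apply \cref{thm:projection} with a loss-minimizing vector $\abf$. We choose whichever of the two versions of \eqref{eq:projection} matches $\kappa$, and then read off (i) and (ii) from \cref{lem:loss}.

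\textbf{Main inequality.} Since the system is non-trivial, $2\le k\le N-2$, so $\kappa=\min\{k,N-k\}\ge 2$. We then set $j:=\kappa-2$.
\begin{itemize}
\item If $\kappa=k$, we take $\abf$ with $|\abf|=k-2$ and use the first part of \cref{thm:projection}.
\item If $\kappa=N-k$, we take $\abf$ with $|\abf|=N-k-2$ and use the last statement of \cref{thm:projection}. This is allowed because $k\le N-2$.
\end{itemize}
In both cases, \eqref{eq:projection} holds for every $\zerobf\le\abf\le\nbf$ with $|\abf|=j$. Such vectors exist because $j\le N$. Now rewrite
\[
\sum_i\theta_{n_i-a_i-1}=M(\nbf)-\sum_i\bigl(\theta_{n_i-1}-\theta_{n_i-a_i-1}\bigr).
\]
Choosing $\abf$ to attain the minimum in the definition of $\lambda_j(\nbf)$ gives $M(\nbf)-\lambda_{\kappa-2}(\nbf)\le Q+1$, which is the claimed bound.

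\textbf{Special cases.}
\begin{itemize}
\item For (i), set $\kappa=3$ and apply \cref{lem:loss}(iii), which gives $\lambda_1(\nbf)=q^{r(\nbf)-1}$.
\item For (ii), set $j=\kappa-2\ge1$. If $n_i\ge\kappa-2$, then \cref{lem:loss}(iv) gives $\lambda_{\kappa-2}(\nbf)\le q^{n_i-\kappa+2}\theta_{\kappa-3}$. Substituting into the main inequality finishes the argument.
\end{itemize}

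The only delicate point is the duality step when $\kappa=N-k<k$. We need to check that the dual system has the same profile and is non-degenerate. \cref{thm:projection} already handles this via \cref{lem:puncture-dual}(ii), so no real obstacle remains. The inequality depends only on $\nbf$ and $\abf$, so minimizing over $\abf$ is legitimate.
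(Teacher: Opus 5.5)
Your proposal is correct and follows the paper's proof essentially verbatim. You apply \cref{thm:projection} for every $\zerobf\le\abf\le\nbf$ with $|\abf|=\kappa-2$, rewrite the result as $M(\nbf)\le Q+1+\sum_i(\theta_{n_i-1}-\theta_{n_i-a_i-1})$, minimize over $\abf$, and read off (i) and (ii) from \cref{lem:loss}(iii) and (iv). The only cosmetic difference is that the paper applies the projection theorem directly to whichever of $\calC,\calC^\perp$ has dimension $\kappa$ (via \cref{lem:puncture-dual}(ii)), whereas you invoke the theorem's built-in dual statement, which amounts to the same argument.
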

\begin{proof}
Let $\calC$ be an MSRD code associated with $\calU$.
By \cref{lem:puncture-dual}(ii), both $\calC$ and $\calC^\perp$ are MSRD codes with the same profile $\nbf$. We may therefore apply \cref{thm:projection} to the code among $\calC$ and $\calC^\perp$ whose dimension is $\kappa$. Hence, for every vector $\zerobf\le\abf\le\nbf$ with $|\abf|=\kappa-2$, we have
\[
\sum_{i=1}^t\theta_{n_i-a_i-1}\le Q+1.
\]
By the definition of $M(\nbf)$, this can be rewritten as
\[
M(\nbf)\le Q+1+
\sum_{i=1}^t
\left(\theta_{n_i-1}-\theta_{n_i-a_i-1}\right).
\]
Since this inequality holds for every admissible $\abf$, taking the minimum over all such vectors gives $M(\nbf)\le Q+1+\lambda_{\kappa-2}(\nbf)$.

Finally, if $\kappa=3$, then
$\lambda_{\kappa-2}(\nbf)=\lambda_1(\nbf)
=q^{r(\nbf)-1}$ by \cref{lem:loss}(iii), which proves (i).

If $\kappa\ge3$ and $n_i\ge\kappa-2$ for some $i$, then \cref{lem:loss}(iv) gives
\[
\lambda_{\kappa-2}(\nbf)
\le q^{n_i-\kappa+2}\theta_{\kappa-3},
\]
and (ii) follows.
\end{proof}

\begin{corollary}\label{cor:length-general}
Let $\calU$ be a non-trivial $[\nbf,k]_{Q/q}$ MSRD system. Then the following hold.
\begin{enumerate}[label=\textup{(\roman*)}]
\item $N\le Q+\kappa-1$.
\item If there is an index $h$ such that $n_h\geq2$ and $N-n_h\geq\kappa-2$, then $N\le Q+\kappa-q$.
\item If at least $\kappa-2$ blocks have rank one, then $t+q(N-t)\le Q+\kappa-1$.
\item If $N=Q+\kappa-1$, then $\calU$ is classical.
\item If $t\geq Q+\kappa-q$, then $\calU$ is classical.
\end{enumerate}
\end{corollary}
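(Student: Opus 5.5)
The plan is to work throughout with the member of $\{\calC,\calC^\perp\}$ of dimension $\kappa$. By \cref{lem:puncture-dual}(ii) it is a non-degenerate MSRD code with the same profile $\nbf$, so \eqref{eq:projection} holds for every $\zerobf\le\abf\le\nbf$ with $|\abf|=\kappa-2$. Non-triviality gives $\kappa\ge2$. Each part then reduces to choosing a good $\abf$ and bounding $\sum_i\theta_{n_i-a_i-1}$ from below by \eqref{eq:theta-lower}. The basic estimate is
\[
\sum_i\theta_{n_i-a_i-1}\ \ge\ \sum_i(n_i-a_i)+(q-1)\sum_i\max\{n_i-a_i-1,0\}=N-\kappa+2+(q-1)E(\abf),
\]
where $E(\abf):=\sum_i\max\{n_i-a_i-1,0\}$. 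Combined with \eqref{eq:projection}, this gives $N\le Q+\kappa-1-(q-1)E(\abf)$.

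For (i), take any admissible $\abf$ (one exists since $\kappa-2\le N$) and use $E\ge0$. For (ii), choose $\abf$ with $a_h=0$ and $|\abf|=\kappa-2$, distributing the shortened positions among the other blocks; this is possible exactly because $N-n_h\ge\kappa-2$. Then $E(\abf)\ge n_h-1\ge1$, so $N\le Q+\kappa-q$. For (iii), place $\abf$ on $\kappa-2$ blocks of rank one, so $a_i\in\{0,1\}$ and these blocks vanish. The remaining blocks are untouched, so $E(\abf)=\sum_i(n_i-1)=N-t$, which gives $t+q(N-t)\le Q+\kappa-1$.

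For (iv), suppose $N=Q+\kappa-1$ and $\calU$ is not classical, so some $n_h\ge2$. If $N-n_h\ge\kappa-2$, part (ii) gives a contradiction. Otherwise $N-n_h<\kappa-2$, hence $n_h>N-\kappa+2=Q+1$. But $n_h\le m<Q$, which is impossible.

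For (v), suppose $t\ge Q+\kappa-q$ and $\calU$ is not classical; this mirrors \cref{thm:many-blocks}. Then $N\ge t+1\ge Q+\kappa-q+1$, and I will take some $n_h\ge2$. First, $N-n_h\ge t-1\ge Q+\kappa-q-1\ge\kappa-2$, since $Q\ge q^2>q-1$. So (ii) applies and gives $N\le Q+\kappa-q$, which contradicts $N\ge Q+\kappa-q+1$.

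The main obstacle I expect is bookkeeping rather than ideas: checking the feasibility conditions for $\abf$ in (ii) and (iii), and the inequality $n_h\le m<Q+1$ used in (iv). The one genuinely delicate point is making sure duality is applied with the right dimension $\kappa$, in particular that $\kappa\ge2$, which is guaranteed by non-triviality.
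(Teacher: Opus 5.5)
Your proposal is correct and follows essentially the same route as the paper. You pass to the dimension-$\kappa$ member of $\{\calC,\calC^\perp\}$ and bound $\sum_i\theta_{n_i-a_i-1}$ from below via \eqref{eq:theta-lower}, using the same quantity as the paper (your $E(\abf)$ is its $\sigma(\abf)$), with identical choices of $\abf$ in (i)--(iii); parts (iv) and (v) are reduced to (ii) using $n_h\le m<Q+1$ and $N-n_h\ge t-1$, exactly as in the paper.
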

\begin{proof}
Since $\calU$ is non-trivial, we have $2\le k\le N-2$, so that $\kappa\geq2$. By \cref{thm:projection}, applied directly if $\kappa=k$ and through its last statement if $\kappa=N-k$, inequality~\eqref{eq:projection} holds for every $\zerobf\le\abf\le\nbf$ with $|\abf|=\kappa-2$. For such an $\abf$, put
\[
\sigma(\abf):=\sum_{i=1}^t\max\{n_i-a_i-1,0\}.
\]
Applying~\eqref{eq:theta-lower} with $n=n_i-a_i$ for every $i$, and using $\sum_i(n_i-a_i)=N-\kappa+2$, we obtain
\begin{equation}\label{eq:length-general}
Q+1\geq\sum_{i=1}^t\theta_{n_i-a_i-1}\geq(N-\kappa+2)+(q-1)\,\sigma(\abf).
\end{equation}
\noindent(i) Since $\sigma(\abf)\geq0$, inequality~\eqref{eq:length-general} gives $N\le Q+\kappa-1$.
\noindent(ii) Since $\sum_{i\neq h}n_i=N-n_h\geq\kappa-2$, we can choose $\abf$ with $|\abf|=\kappa-2$ and $a_h=0$. Then $\sigma(\abf)\geq n_h-1\geq1$, and~\eqref{eq:length-general} gives $Q+1\geq N-\kappa+2+(q-1)$, that is, $N\le Q+\kappa-q$.
\noindent(iii) Let $I$ be a set of $\kappa-2$ indices such that $n_i=1$ for every $i\in I$, and put $\abf:=\sum_{i\in I}\ebf_i$. For $i\in I$ we have $\max\{n_i-a_i-1,0\}=0=n_i-1$, and for $i\notin I$ we have $\max\{n_i-a_i-1,0\}=n_i-1$. Hence $\sigma(\abf)=\sum_{i=1}^t(n_i-1)=N-t$, and~\eqref{eq:length-general} gives $N+(q-1)(N-t)\le Q+\kappa-1$, which is equivalent to $t+q(N-t)\le Q+\kappa-1$.
\noindent(iv) Suppose that $N=Q+\kappa-1$ and that $\calU$ is not classical, so that $n_h\geq2$ for some $h$. Since $n_h\le m<Q+1=N-\kappa+2$ (see \cref{rem:ni-le-m}), we have $N-n_h>\kappa-2$. Hence (ii) applies and gives $N\le Q+\kappa-q<Q+\kappa-1$, a contradiction.
\noindent(v) Suppose that $\calU$ is not classical, so that $n_h\geq2$ for some $h$. Since every block has rank at least one, we have $N-n_h\geq t-1\geq Q+\kappa-q-1\geq\kappa-2$. Hence (ii) applies and gives $N\le Q+\kappa-q$. On the other hand, $N\geq t+1\geq Q+\kappa-q+1$, a contradiction.
\end{proof}

\begin{remark}\label{rem:length-general}
\begin{enumerate}[label=\textup{(\alph*)}]
\item For $k=3$ and $N\geq6$, we have $\kappa=3$, and \cref{cor:length-general} recovers the results of \cref{sec:dim3}. Parts (i), (ii) and (iii) are exactly parts (i), (ii) and (iii) of \cref{thm:length-dim3}. Indeed, in (ii) the condition $N-n_h\geq1$ means that $t\geq2$, and in (iii) the hypothesis is that at least one block has rank one. Part (iv), combined with \cref{thm:plane-arcs}(ii), gives \cref{cor:extremal-dim3}(i). Part (v) states that $t\geq Q+3-q$ forces the system to be classical; \cref{thm:many-blocks} improves this to $t\geq Q+2-q$ by means of a parity argument.
\item Part (ii) improves the bound of (i) by $q-1$ as soon as some block has rank at least two and the remaining blocks have total length at least $\kappa-2$. Unlike \cref{cor:classical-transfer}(iii), this does not rely on the MDS conjecture. In particular, if moreover $\kappa\le q$, then (ii) gives $N\le Q$, which is stronger than the bound $N\le Q+1$ that the MDS conjecture would give through \cref{cor:classical-transfer}(iii).
\item Since $t\le N$, with equality if and only if $\calU$ is classical, part (i) also gives $t\le Q+\kappa-1$. Part (v) improves this for non-classical systems: if $\calU$ is not classical, then $t\le Q+\kappa-q-1$.
\end{enumerate}
\end{remark}

\subsection{Equal and full blocks}

We now specialize the preceding bounds to MSRD systems whose blocks have equal lengths, obtaining explicit restrictions on the number of blocks.
Of particular interest is the case of full-length blocks, where $n_i=m$ for every $i$.
In this setting, we recover known bounds and identify parameter ranges in which the maximum number of blocks is attained by linearized Reed--Solomon codes.

\begin{corollary}\label{cor:equal-blocks}
Let $\calU$ be a non-trivial MSRD system with profile $(s,\ldots,s)$ consisting of $t$ blocks, and write $\kappa-2=l s+\delta$ with $0\le\delta<s$. Then
\[
t\le l+1+\left\lfloor\frac{Q+1-\theta_{s-\delta-1}}{\theta_{s-1}}\right\rfloor.
\]
\end{corollary}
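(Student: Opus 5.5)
The plan is to apply \cref{thm:projection}, in the form used in the proof of \cref{cor:profile}, to a well-chosen shortening vector. Since $\calU$ is non-trivial, \cref{lem:puncture-dual}(ii) lets us work with whichever of $\calC$ and $\calC^\perp$ has dimension $\kappa$. Both codes have profile $(s,\ldots,s)$. So inequality~\eqref{eq:projection} holds for every $\zerobf\le\abf\le\nbf$ with $|\abf|=\kappa-2$.

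Guided by \cref{lem:loss}(i), we take $\abf$ to shorten $l$ entire blocks and the first $\delta$ positions of one more block:
\[
\abf:=s(\ebf_1+\cdots+\ebf_l)+\delta\,\ebf_{l+1}.
\]
This is admissible as soon as $t\geq l+1$, and then $|\abf|=ls+\delta=\kappa-2$. If $t\le l$, the claimed bound holds trivially, because the floor term is non-negative: $\theta_{s-\delta-1}\le\theta_{s-1}\le Q+1$, the last inequality since $s\le m$. If $\delta=0$ and $t=l$, we may still allow the last block to be fully shortened, but again $t\le l$ is already enough.

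For this $\abf$, the $l$ fully shortened blocks contribute $\theta_{-1}=0$, the partially shortened block contributes $\theta_{s-\delta-1}$, and each of the remaining $t-l-1$ blocks contributes $\theta_{s-1}$. Inequality~\eqref{eq:projection} therefore reads
\[
(t-l-1)\,\theta_{s-1}+\theta_{s-\delta-1}\le Q+1.
\]
Dividing by $\theta_{s-1}$ gives $t-l-1\le (Q+1-\theta_{s-\delta-1})/\theta_{s-1}$. Since $t-l-1$ is an integer, we may take the floor, and this yields the stated bound.

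The only delicate point is the bookkeeping in the degenerate cases. We must make sure $t\geq l+1$ whenever the chosen $\abf$ is needed, and that $|\abf|=\kappa-2\le N$ is consistent. This follows from non-triviality, which gives $\kappa-2\le N-4<ts$. In particular $t\geq l+1$ automatically, since $ls\le\kappa-2<ts$, so the trivial case $t\le l$ does not actually arise. No other obstacle is expected.
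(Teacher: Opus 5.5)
Your proposal is correct and follows the paper's proof: you pass to whichever of $\calC$, $\calC^\perp$ has dimension $\kappa$, shorten $l$ entire blocks plus $\delta$ positions of one further block (with $t\geq l+1$ guaranteed by $\kappa-2<N=ts$), and read off $(t-l-1)\theta_{s-1}+\theta_{s-\delta-1}\le Q+1$ from \eqref{eq:projection}. Your extra discussion of the case $t\le l$ is harmless but unnecessary, as you yourself observe.
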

\begin{proof}
Since $\kappa-2<N=ts$, we have $t\geq l+1$. Remove $ l $ blocks entirely and $\delta$ dimensions from one further block. Then~\eqref{eq:projection} (applied to either $\calC$ or $\calC^\perp$) reads $(t- l -1)\theta_{s-1}+\theta_{s-\delta-1}\le Q+1$.
\end{proof}

We can refine the above bound in the case in which the profile of the code is $(m,\ldots,m)$.

\begin{corollary}\label{cor:full-blocks}
Let $\calU$ be a non-trivial MSRD system with profile $(m,\ldots,m)$ consisting of $t$ blocks. Then
\[
t\le q-1+\Bigl\lfloor\frac{\kappa-2}{m}\Bigr\rfloor+\epsilon,\qquad\text{where }\epsilon=\begin{cases}1&\text{if }m\mid\kappa-1,\\0&\text{otherwise.}\end{cases}
\]
In particular, $t\le q-1$ if $\kappa\le m$, and $t\le q$ if $\kappa=m+1$. 
\end{corollary}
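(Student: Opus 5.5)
The plan is to apply \cref{cor:equal-blocks} with $s=m$ and then evaluate the floor exactly, using the identity $(q-1)\theta_{m-1}=q^m-1=Q-1$. Write $\kappa-2=lm+\delta$ with $0\le\delta<m$, so that $l=\lfloor(\kappa-2)/m\rfloor$. Since $\kappa-1=lm+\delta+1$ with $1\le\delta+1\le m$, we have $m\mid\kappa-1$ if and only if $\delta=m-1$. Hence $\epsilon=1$ exactly when $\delta=m-1$. \cref{cor:equal-blocks} gives
\[
t\le l+1+\left\lfloor\frac{Q+1-\theta_{m-\delta-1}}{\theta_{m-1}}\right\rfloor ,
\]
so it remains to compute this floor.

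The key rewriting is $Q+1=(q-1)\theta_{m-1}+2$, which turns the numerator into $(q-1)\theta_{m-1}+\bigl(2-\theta_{m-\delta-1}\bigr)$. The floor therefore equals $q-1+\lfloor(2-\theta_{m-\delta-1})/\theta_{m-1}\rfloor$. We split into two cases according to the value of $\delta$.

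\emph{Case $\delta\le m-2$.} Here $m-\delta-1\ge1$, so $\theta_{m-\delta-1}\ge q+1\ge3$. Also $\theta_{m-\delta-1}\le\theta_{m-1}$. Hence
\[
-\theta_{m-1}<2-\theta_{m-\delta-1}<0,
\]
and the inner floor equals $-1$. This gives $t\le l+q-1$, which is the claimed bound with $\epsilon=0$.

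\emph{Case $\delta=m-1$.} Here $\theta_0=1$, so $2-\theta_{m-\delta-1}=1$. Since $m\ge2$, we have $\theta_{m-1}\ge q+1>1$, so the inner floor is $0$. This gives $t\le l+q$, matching $\epsilon=1$.

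For the particular cases: if $\kappa\le m$, then $l=0$ and $\delta=\kappa-2\le m-2$, so $t\le q-1$. If $\kappa=m+1$, then $l=0$ and $\delta=m-1$, so $t\le q$.

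The only delicate point is the strict inequalities that pin down the floor. The lower bound on $\theta_{m-\delta-1}$ uses $m-\delta-1\ge1$, and the fact that $\theta_{m-1}>1$ uses $m\ge2$. Both are available from the standing assumptions, and non-triviality is exactly what is needed to invoke \cref{cor:equal-blocks}.
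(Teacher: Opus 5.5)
Your proposal is correct and follows essentially the same route as the paper. Both arguments apply \cref{cor:equal-blocks} with $s=m$, use $(q-1)\theta_{m-1}=Q-1$, and evaluate the floor separately for $\delta\le m-2$ (value $q-2$) and $\delta=m-1$ (value $q-1$); your rewriting of the numerator as $(q-1)\theta_{m-1}+(2-\theta_{m-\delta-1})$ is just a slightly tidier route to the bounds the paper obtains from its chain of inequalities.
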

\begin{proof}
Apply \cref{cor:equal-blocks} with $s=m$, and recall that $(q-1)\theta_{m-1}=Q-1$. If $\delta\le m-2$, then $q+1\le\theta_{m-\delta-1}\le\theta_{m-1}$, so that
\[
q-2<\frac{Q+1-\theta_{m-1}}{\theta_{m-1}}\le\frac{Q+1-\theta_{m-\delta-1}}{\theta_{m-1}}\le\frac{(q-1)(Q-q)}{Q-1}<q-1,
\]
and the floor equals $q-2$. If $\delta=m-1$, that is, if $m\mid\kappa-1$, then the numerator is $Q=(q-1)\theta_{m-1}+1$ and the floor equals $q-1$. If $\kappa\le m$, then $ l =0$ and $\delta=\kappa-2\le m-2$, if $\kappa=m+1$, then $ l =0$ and $\delta=m-1$.
\end{proof}

The bound $t\le q-1$ is attained in every dimension by linearized Reed--Solomon codes. 

\begin{remark}\label{rem:bgr}
\Cref{cor:equal-blocks} recovers \cite[Theorem~VI.12]{byrne2021fundamental} for the special case of MSRD codes with minimum distance $d$ such that $d-3=\kappa-2$. 
\end{remark}

\subsection{The exceptional full-block case}
By \cref{cor:full-blocks}, an MSRD system with full blocks satisfies $t\le q-1$ whenever $\kappa\le m$, whereas for $\kappa=m+1$ the bound becomes $t\le q$.
Thus, in the range $\kappa\le m+1$, the only case in which the number of blocks could exceed $q-1$ is $t=q$ and $\kappa=m+1$.

Observe that this case is impossible when $q=2$.
Indeed, $t=q=2$ would imply $N=2m$, and hence
\[
\kappa=\min\{k,N-k\}\le\frac{N}{2}=m,
\]
a contradiction. For $q\ge3$, however, this restriction no longer applies, since $N=qm\ge2(m+1)$ for $m\ge2$.
We are therefore led to the following existence problem.

\begin{problem}\label{prob:exceptional}
Let $q\geq3$ and $m\geq2$. Is there an MSRD system of dimension $m+1$ over $\F_{q^m}$ with profile $(m,\ldots,m)$ consisting of $q$ blocks? Equivalently, by \cref{lem:puncture-dual}(ii), is there one of dimension $(q-1)m-1$? For $m=2$, this asks for a three-dimensional MSRD code over $\F_{q^2}$ with $q$ blocks of length $2$.
\end{problem}

\begin{proposition}\label{prop:exceptional-structure}
Suppose that a system $\calU$ as in \cref{prob:exceptional} exists. Then, for every block $U_i$ and every $(m-1)$-dimensional $\Fq$-subspace $A\le U_i$, exactly one hyperplane through $\langle A\rangle_{\FQ}$ has weight $m-1$. Equivalently, projecting from $\langle A\rangle_{\FQ}$ partitions $\PG(1,Q)$ minus one point into $q-1$ scattered linear sets of rank $m$ and one further point. For $m=2$, this means that every point of $\calL_{U_1}\cup\cdots\cup\calL_{U_q}$ lies on exactly one tangent line.
\end{proposition}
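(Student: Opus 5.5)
The plan is to apply \cref{thm:projection} directly. The system $\calU$ of \cref{prob:exceptional} has dimension $k=m+1$, profile $(m,\ldots,m)$ and $t=q$ blocks. We may assume the system itself has dimension $m+1$ rather than its dual dimension $(q-1)m-1$, since by \cref{lem:puncture-dual}(ii) the two problems are equivalent. Then $k-2=m-1$.

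Fix a block $U_i$ and an $(m-1)$-dimensional $\Fq$-subspace $A\le U_i$. We take $\abf=(m-1)\ebf_i$, which satisfies $|\abf|=k-2$, and set $A_h=\{\zerobf\}$ for $h\neq i$, so that $W=\langle A\rangle_{\FQ}$. By \cref{lem:projection-shortening}(i), $W$ has $\FQ$-dimension $m-1$. \Cref{thm:projection} then says that every hyperplane through $W$ has weight $k-2=m-1$ or $k-1=m$, and that the number of weight-$(m-1)$ hyperplanes is
\[
\tau=Q+1-\theta_{0}-(q-1)\theta_{m-1}=Q+1-1-(Q-1)=1,
\]
using $(q-1)\theta_{m-1}=Q-1$. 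This gives the first claim.

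For the equivalent formulation, we use \cref{lem:projection-shortening}(iii). The quotient $\calU/W$ is a two-dimensional MSRD system in $\PG(1,Q)$ with profile consisting of one block of rank $1$ (from $U_i$) and $q-1$ blocks of rank $m$. By \cref{prop:dim2}, these linear sets are scattered and pairwise disjoint. Their total size is $1+(q-1)\theta_{m-1}=Q$, so they cover every point of $\PG(1,Q)$ except the single point of weight $0$. That point is exactly the unique hyperplane found above.

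For $m=2$, we have $k=3$ and $A=\langle\ubf\rangle_{\Fq}$ is one-dimensional, so $W$ is a point $P\in\calL_{U_i}$. Conversely, every point of $\calL_{U_i}$ arises this way, because $\calL_{U_i}$ is scattered. Hyperplanes through $W$ are lines through $P$, and by \cref{lem:line-intersections} the lines of weight $1=k-2$ are precisely the tangent lines. So there is exactly one tangent line through $P$. As a cross-check, \cref{lem:pencil} gives $T_i=Q+1-M+q=Q+1-q(q+1)+q=1$ with $Q=q^2$, which agrees.

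There is no real obstacle here. The only point requiring care is the arithmetic identity $(q-1)\theta_{m-1}=Q-1$, together with the observation that the partial block contributes $\theta_0=1$.
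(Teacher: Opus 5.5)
Your proposal is correct and is essentially the paper's proof. Both apply \cref{thm:projection} with $\abf=(m-1)\ebf_i$ and use \cref{prop:dim2} to see that the $\theta_0+(q-1)\theta_{m-1}=Q$ projected points leave exactly one point of weight zero, i.e.\ $\tau=1$. Your explicit derivation of the $m=2$ consequence via \cref{lem:line-intersections}, with the \cref{lem:pencil} cross-check, spells out a step the paper leaves implicit; the duality remark at the start is unnecessary, since the statement concerns the system of dimension $k=m+1$ itself.
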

\begin{proof}
Fix an index $i$ and an $(m-1)$-dimensional
$\Fq$-subspace $A\le U_i$, and put $W:=\langle A\rangle_{\FQ}$.
Since $k=m+1$, we have $\dim_{\FQ}W=m-1=k-2$ by \cref{lem:projection-shortening}.

Applying \cref{thm:projection} with $\abf=(m-1)\ebf_i$, we obtain a two-dimensional MSRD system with profile
\[
(1,m,\ldots,m),
\]
consisting of one block of rank one and $q-1$ blocks of rank $m$.

By \cref{prop:dim2}, the associated linear sets are scattered and pairwise disjoint.
Their union therefore contains exactly
\[
\theta_0+(q-1)\theta_{m-1}
=1+(Q-1)=Q
\]
points of $\PG(1,Q)$.

Since $\PG(1,Q)$ has $Q+1$ points, exactly one point lies outside this union.
Equivalently, the projected system has exactly one point of weight zero.

By \cref{lem:projection-shortening}(ii), the weight of a hyperplane $H$ containing $W$ is related to the weight of the corresponding point $H/W$ by
\[
\wt_\calU(H)
=
\wt_{\calU/W}(H/W)+m-1.
\]
Consequently, exactly one hyperplane through $W$ has weight $m-1$, while all the others have weight $m$.
\end{proof}

The following example shows that \cref{prob:exceptional} has a positive answer for $(q,m)=(3,2)$.

\begin{theorem}\label{thm:sporadic}
Let $q=3$, $Q=9$, and let $a\in\F_9$ with $a^4=-1$. Then
\[
U_0:=\{(0,y,ay^3):y\in\F_9\},\qquad U_1:=\{(ay^3,0,y):y\in\F_9\},\qquad U_2:=\{(y,ay^3,0):y\in\F_9\}
\]
form a $[(2,2,2),3]_{9/3}$ MSRD system. Equivalently, if $f_{b,c}(y):=by+ca\,y^3$ and $\beta_1,\beta_2$ is an $\F_3$-basis of $\F_9$, then
\[
\bigl\{\bigl(f_{x_1,x_2}(\beta_1),f_{x_1,x_2}(\beta_2)\ \big|\ f_{x_2,x_0}(\beta_1),f_{x_2,x_0}(\beta_2)\ \big|\ f_{x_0,x_1}(\beta_1),f_{x_0,x_1}(\beta_2)\bigr):(x_0,x_1,x_2)\in\F_9^3\bigr\}
\]
is a $[(2,2,2),3,4]_{9/3}$ MSRD code with $t=q=3$ blocks of length $m=2$.
\end{theorem}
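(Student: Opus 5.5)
The plan is to check \cref{def:system} and the MSRD condition~\eqref{eq:MSRD3} directly. The map $y\mapsto ay^3$ is $\F_3$-linear, because $y\mapsto y^3$ is the Frobenius of $\F_9/\F_3$. Hence each $U_i$ is a two-dimensional $\F_3$-subspace of $\F_9^3$. The $\F_9$-span of $U_0,U_1,U_2$ is all of $\F_9^3$: for instance, $(0,1,a)$, $(a,0,1)$ and $(1,a,0)$ have determinant $1+a^3\neq0$, since $a^3=-1$ would give $a^4=-a\neq-1$. By \cref{prop:msrd-geometric}, it then remains to show that $\wt_\calU(\xbf^\perp)\le2$ for every $\xbf=(x_0,x_1,x_2)\neq\zerobf$.

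For $\ubf=(0,y,ay^3)\in U_0$ we have $\xbf\cdot\ubf=x_1y+x_2ay^3=f_{x_1,x_2}(y)$, so $\wt_{U_0}(\xbf^\perp)=\dim_{\F_3}\ker f_{x_1,x_2}$. Similarly, the weights of $U_1$ and $U_2$ are the kernel dimensions of $f_{x_2,x_0}$ and $f_{x_0,x_1}$. The basic fact to record is the following description of $\ker f_{b,c}$:
\begin{itemize}
\item it has dimension $2$ if and only if $b=c=0$;
\item if exactly one of $b,c$ is zero, the kernel is trivial;
\item if $b,c\neq0$, a non-zero $y$ lies in the kernel if and only if $y^2=-b/(ca)$, so the kernel has dimension $1$ when $-b/(ca)$ is a non-zero square of $\F_9$, and dimension $0$ otherwise.
\end{itemize}

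Next I would split into cases according to how many $x_i$ vanish.
\begin{itemize}
\item If two coordinates vanish, say $x_1=x_2=0$, then $U_0$ contributes $2$. The other two maps are $f_{0,x_0}$ and $f_{x_0,0}$, which have trivial kernel, so the total is $2$.
\item If exactly one coordinate vanishes, say $x_0=0$, then $f_{x_2,0}$ and $f_{0,x_1}$ have trivial kernel, and $f_{x_1,x_2}$ contributes at most $1$.
\item If all $x_i\neq0$, each block contributes at most $1$. The only danger is that all three elements $-x_1/(x_2a)$, $-x_2/(x_0a)$ and $-x_0/(x_1a)$ are squares.
\end{itemize}

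This last case is the key step. The product of the three elements is $-a^{-3}$. Since $9\equiv1\pmod4$, the element $-1$ is a square in $\F_9$. The condition $a^4=-1$ forces $a$ to have order $8$, so $a$ generates $\F_9^*$ and is a non-square. Hence $a^{-3}$ is a non-square, and so is $-a^{-3}$. Therefore the three elements cannot all be squares, and the total weight is at most $2$.

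Finally, for the code description I would take the generator matrix whose columns in the three blocks are $(0,\beta_j,a\beta_j^3)$, $(a\beta_j^3,0,\beta_j)$ and $(\beta_j,a\beta_j^3,0)$ for $j=1,2$. These are $\F_3$-bases of $U_0,U_1,U_2$, and $\xbf G$ is exactly the displayed vector. By \cref{prop:msrd-geometric}, the code is MSRD with $\distSR=N-k+1=6-3+1=4$. The only non-routine point is the quadratic-character computation in the case where all $x_i$ are non-zero; the rest is bookkeeping.
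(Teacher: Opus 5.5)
Your proof is correct and follows essentially the same route as the paper: you reduce each block weight to $\dim_{\F_3}\ker f_{b,c}$, split according to how many $x_i$ vanish, and in the all-non-zero case multiply the three solvability conditions to reach a contradiction. Your square criterion is the paper's norm criterion in another form, since $\Norm(z)=z^{4}$ is exactly the quadratic character of $\F_9$ (the paper's norm phrasing is what lets the argument extend to $\F_{q^2}$ in \cref{rem:sporadic-general}); your determinant check $1+a^3\neq0$ for spanning replaces the paper's observation that a hyperplane containing all three blocks would have weight $6$.
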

\begin{proof}
Let $\nu:=a^{q+1}=\Norm_{\F_9/\F_3}(a)=-1$. For $\xbf=(x_0,x_1,x_2)\in\F_9^3$ and $\ubf=(0,y,ay^3)\in U_0$ we have $\xbf\cdot\ubf=f_{x_1,x_2}(y)$, so that $\wt_{U_0}(\xbf^\perp)=\dim_{\F_3}\ker f_{x_1,x_2}$, similarly, $U_1$ corresponds to $f_{x_2,x_0}$ and $U_2$ to $f_{x_0,x_1}$. In particular, the vectors of $U_0$, $U_1$ and $U_2$ obtained for $y\in\{\beta_1,\beta_2\}$ are the columns of a generator matrix of the code in the statement.

\emph{Claim:} $\dim_{\F_3}\ker f_{b,c}$ equals $2$ if $b=c=0$, it equals $1$ if $bc\neq0$ and $\Norm(b)=\nu\Norm(c)$, and it equals $0$ otherwise. Indeed, if exactly one of $b,c$ is zero, then $f_{b,c}$ is bijective. If $bc\neq0$, a non-zero root of $f_{b,c}$ satisfies $y^{q-1}=-b/(ca)$. Since $y\mapsto y^{q-1}$ maps $\F_{q^2}^*$ onto the elements of norm one, with fibers of the form $y\Fq^*$, such roots exist if and only if $\Norm(-b/(ca))=1$, that is, if and only if $\Norm(b)=\nu\Norm(c)$ (recall that $\Norm(-1)=1$), and in this case they form a one-dimensional $\Fq$-subspace minus $\zerobf$.

Now let $\xbf\neq\zerobf$. If two coordinates of $\xbf$ vanish, then $\wt_\calU(\xbf^\perp)=2+0+0=2$. If exactly one coordinate vanishes, then at most one of the three kernels is non-zero, and it has dimension one. If no coordinate vanishes, then $\wt_\calU(\xbf^\perp)$ is the number of true equalities among
\[
\Norm(x_1)=\nu\Norm(x_2),\qquad\Norm(x_2)=\nu\Norm(x_0),\qquad\Norm(x_0)=\nu\Norm(x_1).
\]
All three together imply $\nu^3=1$, which is false since $\nu=-1$. Hence every hyperplane has weight at most two. Finally, the three subspaces span $\F_9^3$, since a hyperplane containing all of them would have weight $6$.
\end{proof}

\begin{remark}\label{rem:sporadic-general}
The proof of \cref{thm:sporadic} works verbatim over $\F_{q^2}$ for any $q\notin\{2,4\}$, with $f_{b,c}(y)=by+ca\,y^q$, as long as $\nu=\Norm(a)$ satisfies $\nu^3\neq1$ (such $a$ exists whenever $q-1\nmid 3$, \emph{i.e.} $q \ne 2,4$), it yields MSRD systems with profile $(2,2,2)$. Only for $q=3$ does this exceed the $q-1$ blocks of linearized Reed--Solomon codes. Note also that the lines spanned by the three blocks are the three coordinate lines, which are not concurrent.
\end{remark}

\subsection{An MSRD conjecture}

The preceding results provide several restrictions on the possible profiles of MSRD systems, but a general characterization of the admissible parameters remains open.
Motivated by the classical MDS conjecture and the geometric interpretation of MSRD codes, we propose a conjectural upper bound on $M(\nbf)$ that extends the classical setting.
Before stating it, we introduce the notion of realisability and recall some consequences of shortening, puncturing,
and duality.

\begin{definition}\label{def:realisable}
We say that $(\nbf,k)$ is \emph{realizable over $\F_{q^m}$} if there exists an $[\nbf,k]_{Q/q}$ MSRD system, up to reordering the blocks.
\end{definition}

By \cref{lem:projection-shortening,lem:puncture-dual,prop:dim2}, realizability has the following closure properties:
\begin{itemize}
\item it is invariant under $k\mapsto N-k$, for $k\le N-1$;
\item if $(\nbf,k)$ is realizable, then so is $(\nbf-\abf,k-|\abf|)$ for every $\abf\le\nbf$ with $|\abf|\le k-2$;
\item if $(\nbf,k)$ is realizable and $N\geq k+1$, then so is $(\nbf-\ebf_i,k)$ for every $i$;
\item for $k=2$, it is a packing problem for scattered linear sets on $\PG(1,Q)$, and it forces $M(\nbf)\le Q+1$.
\end{itemize}

For classical profiles, the cases $\kappa\le3$ are settled: $\kappa=2$ is \cref{prop:dim2} together with duality, and $\kappa=3$ is \cref{thm:plane-arcs} together with duality. Moreover, non-trivial MDS codes over $\FQ$ have $k\le Q-1$ and $N-k\le Q-1$, by \cref{thm:bush} or by \cref{thm:projection} applied to $\calC$ and $\calC^\perp$. The exceptional parameters of \cref{conj:mds} ($Q$ even, $k\in\{3,Q-1\}$ and $N=Q+2$) are exactly those with $\kappa=3$. Hence the MDS conjecture over $\FQ$ is equivalent to the statement that
\[
\textit{every non-trivial classical MSRD system with }\kappa\geq4\textit{ has }N\le Q+1.
\]
In other words, the bound $M(\nbf)\le Q+1$, which holds in dimension two, is expected to persist in every dimension with $\kappa\geq4$ for classical profiles. This suggests the following.

\begin{conjecture}[MSRD conjecture]\label{conj:msrd}
Let $(\nbf,k)$ be realisable over $\F_{q^m}$, with $1\le n_i\le m$ for every $i$ and $\kappa=\min\{k,N-k\}\geq4$. Then
\[
M(\nbf)=\sum_{i=1}^t\theta_{n_i-1}\le Q+1.
\]
\end{conjecture}

\begin{remark}\label{rem:msrd-conj}
\begin{enumerate}[label=\textup{(\alph*)}]
\item For $\nbf=(1,\ldots,1)$, \cref{conj:msrd} is the MDS conjecture. Compared with \cref{cor:profile}, it asserts that the loss term $\lambda_{\kappa-2}(\nbf)$ can be dropped once $\kappa\geq4$.
\item For full blocks, it says that $t\le q-1$, since $Q+1<q\,\theta_{m-1}$. Thus linearized Reed--Solomon codes would have the largest possible number of full blocks, exactly as Reed--Solomon codes are conjecturally the longest MDS codes.
\item Equality is attained by extended Reed--Solomon codes of every dimension (the classical case), and in dimension two by the profile $(m,\ldots,m,1,1)$ of \cref{prop:dim2}.
\item The hypothesis $\kappa\geq4$ cannot be weakened: for $\kappa=3$ there are the hyperovals, with $M(\nbf)=Q+2$ for $Q$ even, and the system of \cref{thm:sporadic}, with $M(\nbf)=12=Q+3$.
\end{enumerate}
\end{remark}

\section{Conclusions}\label{sec:conclusions}

In this paper, we have investigated the length problem for MSRD codes through their geometric description in terms of systems of $\Fq$-subspaces. Our results show that, although every MSRD code is necessarily MDS with respect to the Hamming metric over $\FQ$, the MSRD condition imposes additional geometric restrictions depending on the block profile.

In dimension three, we have determined the possible intersection sizes of lines with the associated projective MSRD systems and obtained their character distributions.
These results yield explicit bounds on the total length, as well as statements showing that sufficiently long MSRD codes must be classical.
These results yield explicit bounds on the total length, and show that MSRD codes that are sufficiently long, or that have sufficiently many blocks, must be classical.
We have also proved the non-existence of three-dimensional one-weight MSRD codes. In arbitrary dimension, the geometric interpretation of shortening as projection provides a unified approach to length bounds, extending classical counting arguments for arcs and yielding restrictions that depend explicitly on the block profile.

The bounds obtained for full-length blocks are particularly interesting. They show that the number of such blocks cannot exceed $q-1$ whenever $\kappa\le m$, in agreement with the parameters attained by linearized Reed--Solomon codes. On the other hand, the explicit construction of \cref{thm:sporadic} shows that exceptional configurations may occur. Understanding whether further exceptions exist, particularly in the parameter range considered in \cref{prob:exceptional}, remains an interesting problem.
More generally, \cref{conj:msrd} proposes a sum-rank analogue of the MDS conjecture, suggesting that $M(\nbf)\le Q+1$ whenever $\kappa\ge4$.

\section*{Statements and Declarations}

\noindent\textbf{Funding.} The work of the first author was funded by the Agence de l'innovation de d\'efense (AID), France. The research of the second author was partially supported by the Gruppo Nazionale per le Strutture Algebriche, Geometriche e le loro Applicazioni (GNSAGA) of the Istituto Nazionale di Alta Matematica (INdAM).

\noindent\textbf{Competing interests.} The authors have no competing interests to declare that are relevant to the content of this article.

\noindent\textbf{Data availability.} Data sharing is not applicable to this article, as no datasets were generated or analyzed during the current study.

\section*{Declaration of Generative AI and AI-assisted Technologies in the Writing Process}
The authors used AI-assisted tools exclusively for language editing and improving the clarity and readability of the manuscript. All mathematical results, proofs, and scientific conclusions are the authors' own.
The authors have carefully reviewed the entire manuscript and take full responsibility for its content.

\bibliographystyle{abbrv}
\bibliography{biblio}

\end{document}